\def\full{1}
\def\draft{1}
\def\hylinks{1}
\def\usetitlepage{1}
  \def\nnspace{}
  \def\nnspace{\vspace{-1.5ex}}
\newtheorem{theorem}{Theorem}[section]
\newtheorem{lemma}[theorem]{Lemma}
\newtheorem{corollary}[theorem]{Corollary}
\newtheorem{conjecture}[theorem]{Conjecture}
\newtheorem{observation}[theorem]{Observation}
\newtheorem{definition}[theorem]{Definition}
\newtheorem{algorithm1}[theorem]{Algorithm}
\newtheorem{protocol}[theorem]{Protocol}
\newcommand{\BBBtmp}{}
\newcommand{\BBtmp}{}
\newcounter{BBtmpC}
\newcommand{\eqdef}{\stackrel{def}{=}}
\newcommand{\etal}{{\it et al. }}
\newcommand{\Ex}{\mathbb{E}}
\newcommand{\e}{\epsilon}
\newcommand{\beq}{\begin{eqnarray*}}
\newcommand{\eeq}{\end{eqnarray*}}
\newcommand{\nucol}{\{\nu_x\}_{x\in V}}
\newcommand{\nuright}{\nu}
\newcommand{\nut}{\tilde{\nu}}
\newcommand{\nup}{\hat{\nu}}
\newcommand{\myand}{\wedge}
\newcommand{\xvec}{{\mathbf x}}
\newcommand{\bvec}{{\mathbf b}}
\newcommand{\authnote}[2]{}
\newcommand{\half}{\frac{1}{2}}
\newcommand{\sectionref}[1]{\hyperref[#1]{Section~\ref*{#1}}}
\newcommand{\theoremref}[1]{\hyperref[#1]{Theorem~\ref*{#1}}}
\newcommand{\definitionref}[1]{\hyperref[#1]{Definition~\ref*{#1}}}
\newcommand{\figureref}[1]{\hyperref[#1]{Figure~\ref*{#1}}}
\newcommand{\lemmaref}[1]{\hyperref[#1]{Lemma~\ref*{#1}}}
\newcommand{\claimref}[1]{\hyperref[#1]{Claim~\ref*{#1}}}
\newcommand{\constructionref}[1]{\hyperref[#1]{Construction~\ref*{#1}}}
\newcommand{\itemref}[1]{\hyperref[#1]{Item~\ref*{#1}}}
\newcommand{\propertyref}[1]{\hyperref[#1]{Property~\ref*{#1}}}
\newcommand{\protocolref}[1]{\hyperref[#1]{Protocol~\ref*{#1}}}
\newcommand{\algorithmref}[1]{\hyperref[#1]{Algorithm~\ref*{#1}}}
\newcommand{\equationref}[1]{\hyperref[#1]{Equation~(\ref*{#1})
}}
\newcommand{\sectionref}[1]{Section~\ref{#1}}
\newcommand{\theoremref}[1]{Theorem~\ref{#1}}
\newcommand{\definitionref}[1]{Definition~\ref{#1}}
\newcommand{\figureref}[1]{Figure~\ref{#1}}
\newcommand{\lemmaref}[1]{Lemma~\ref{#1}}
\newcommand{\claimref}[1]{Claim~\ref{#1}}
\newcommand{\constructionref}[1]{Construction~\ref{#1}}
\newcommand{\itemref}[1]{Item~\ref{#1}}
\newcommand{\propertyref}[1]{Property~\ref{#1}}
\newcommand{\protocolref}[1]{Protocol~\ref{#1}}
\newcommand{\algorithmref}[1]{Algorithm~\ref{#1}}
\newcommand{\equationref}[1]{Equation~\eqref{#1}}
\newcommand{\eps}{\epsilon}
\newcommand{\one}{\mathbf{1}}
\newcommand{\Patrascu}{P\u{a}tra\c{s}cu{ }}
\begin{document}

\title{Lower Bounds on Near Neighbor Search via Metric Expansion }
\date{}

\author{
Rina Panigrahy\\
       Microsoft Research \\
       \texttt{rina@microsoft.com}
       \and
Kunal Talwar\\
       Microsoft Research \\
       \texttt{kunal@microsoft.com}
\and Udi Wieder\\
Microsoft Research \\
       \texttt{uwieder@microsoft.com}}

 \maketitle

\ifpdf
\pdfbookmark[1]{Abstract}{abs}
\fi

\begin{abstract}

In this paper we show how the complexity of performing nearest neighbor (NNS) search on a metric space is related to the expansion of the metric space.  Given a metric space we look at the graph obtained by connecting every pair of points within a certain distance $r$ . We then look at various notions of expansion in this graph relating them to the cell probe complexity of NNS for randomized and deterministic, exact and approximate algorithms. For example if the graph has node expansion  $\Phi$  then we show that any deterministic $t$-probe data structure for $n$ points must use space $S$ where  $(St/n)^t > \Phi$. We show similar results for randomized algorithms as well. These relationships can be used to derive most of the known lower bounds in the well known metric spaces such as $l_1$, $l_2$, $l_\infty$ by simply computing their expansion. In the process, we strengthen and generalize our previous results~\cite{PTW08}. Additionally, we unify the approach in~\cite{PTW08} and the communication complexity based approach. Our work reduces the problem of proving cell probe lower bounds of near neighbor search to computing the appropriate expansion parameter.

In our results, as in all previous results, the dependence on $t$ is weak; that is,  the bound drops exponentially in $t$.  We show a much stronger  (tight) time-space tradeoff for the class of \emph{dynamic} \emph{low contention} data structures. These are data structures that supports updates in the data set and that do not look up any single cell too often.
\end{abstract}

\ifnum\usetitlepage=1

\ifnum\hylinks=1

\fi

\newpage
\fi

\section{Introduction}
In the Nearest Neighbor Problem we are given a data set of  $n$ points $x_1,...,x_n$ lying in a metric space $V$. The goal is to preprocess the data set into a  data structure
such that when given a query point $y \in V$, it is possible to recover the data set point which is closest to $y$ by querying the data structure at most $t$ times. The goal is to keep both the querying time $t$  and the data structure space $m$ as small as possible.
Nearest Neighbor Search is a fundamental problem in data structures with numerous applications to web algorithms, computational biology, information retrieval, machine learning, etc. As such it has been researched extensively.

The time space tradeoff of known solutions crucially depend upon the underlying metric space. Natural metric spaces include the spaces $\mathbb{R}^d$ equipped with the $\ell_1$ or $\ell_2$ distance, but other metrics such as $\ell_\infty$, edit distance and earth mover distance may also be useful.
The known upper bounds exhibit  the `curse of dimensionality':for $d$ dimensional spaces either the space or time complexity is exponential in $d$. More efficient solutions are known  when  considering approximations e.g. \cite{IM98}, \cite{KOR98}, \cite{Indyk01}, \cite{AndoniI08}, however, in general these algorithms still demonstrate a relatively high complexity.

There is a substantial body of work on lower bounds covering various metric spaces and parameter settings; we discuss the known bounds in Section~\ref{sec:relatedwork}. Traditionally, cell probe lower bounds for data structures have been shown using communication complexity arguments~\cite{MNSW98}. \Patrascu and Thorup~\cite{PatrascuT06a} use a direct sum theorem along with the richness technique to obtain lower bounds for deterministic algorithms. Andoni, Indyk and \Patrascu~\cite{AIP06} showed randomized lower bounds using communication complexity lower bounds for Lopsided Set Disjointness. In a previous work~\cite{PTW08}, the authors used a more direct geometric argument to show lower bounds for randomized algorithms for the search version of the problem.

In this work we strengthen and significantly generalize our previous results. We give a common framework that unifies almost all known cell probe lower bounds for near neighbor search. At one extreme, it gives us the communication complexity lower bounds, and implies e.g. the result of~\cite{AIP06}. At the other extreme, we get direct data structure lower bounds leading to a strengthening to the decision problem of our results in~\cite{PTW08}. Our work in fact shows that all near neighbor lower bounds follow from basic expansion properties of the metric space. Vertex expansion translates to lower bounds for deterministic data structures. Edge expansion can be translated to lower bounds for randomized data structures, and this lets us strengthen~\cite{PTW08}.  We also identify a new (to our knowledge) graph parameter that interpolates between vertex and edge expansion, that we call {\em robust expansion}. We show that robust expansions suffices to prove NNS lower bounds. Additionally, for random inputs in highly symmetric metrics, robust expansion also translates to upper bounds in the cell probe model, that match our lower bounds for constant $t$. Finally, we present a natural conjecture regarding the complexity of approximate near neighbor search and show tight bounds for dynamic low contention data structures.

\medskip

\subsection{Basic Definitions}
The \emph{Near Neighbor Problem} is parameterized by a number $r$. As in the Nearest Neighbor Search Problem the input to the preprocessing phase is a data set of $n$ points in a metric space. Given a query point $y$ the goal is to determine whether the data set contains a point of distance at most $r$ from $y$. In the approximation version (ANNS) the preprocessing phase receives as input also an approximation ratio $c$. Given a query point $y$ the goal is to differentiate between the case where the closest data set point is of distance at most $r$ from $y$, to the case where the closest data set point is of distance at least $cr$ from $y$. Clearly a lower bound for these problems holds also for nearest neighbor search.

We prove lower bounds for a generalization we call Graphical Neighbor Search (GNS) which we define shortly. We then show that lower bounds on GNS imply  ANNS lower bounds.
In the GNS problem we are given an undirected bipartite graph $G=(U,V,E)$ where the data set comes from $U$ and queries come from $V$.  For a node $u$ the set $N(u)$ denotes its neighbors in $G$. In the preprocessing phase we are given a set of pairs $(x_1,b_1),\ldots,(x_n,b_n)$ where $x_i$ is a vertex in $U$  and $b_i\in \{0,1\}$. The goal is to build a data structure such that given a node $y\in V$, if there is a unique $i$ such that $y \in N(x_i)$ then it is possible to query the data structure $t$ times and output $b_i$. If there is no such $i$ or it is not unique any output is considered correct.

We observe that ANNS reduces to GNS when assuming a query point is at distance at most $r$ from some $x_i$ and a least $cr$ from all other $x_j$. In this case we have the nodes of $U$ and $V$ correspond to the points in the metric space, and the set of edges consists of all pairs of nodes at distance at most $r$.  A formal reduction is proven in Section~\ref{sec:applications} where we also show that average instances of ANNS translate to average instances of GNS for which our lower bounds hold. The bounds we show depend only on the expansion properties of $G$. We need the following definitions:

\begin{definition}[Vertex expansion]
Let $\mu$ be a probability measure over $U$ and $\nu$ be a probability measure over $V$. The $\delta-$vertex expansion of the graph with respect to $\mu,\nu$ is defined as
 $$\Phi_v(\delta) := \min_{A\subset V, \nu(A) \leq \delta} \frac{\mu(N(A))}{\nu(A)}.$$
The vertex-expansion $\Phi_v$ is defined as the largest $k$ such that for all $\delta \leq \frac{1}{2k}$, $\Phi_v(\delta) \geq k$.

\end{definition}

Let  $A\subset V$, $B \subset U$ and $\delta = \nu(A)$. Observe that if $E(A,B) = E(A,U)$ then $\mu(B) \geq \Phi_v(\delta)\nu(A)$. In other words $\Phi_v(\delta)$  bounds the measure of the sets that cover all the edges incident on a set of measure $\delta$. The notion of {\em robust  expansion} relaxes this by requiring $B$ to  cover at least a $\gamma$-fraction of the edges incident on $A$. This idea is captured in the definition below. For simplicity we assume that $V = U$ and that $\mu$ and $\nu$ are the uniform distribution and that $G$ is regular. A more subtle definition which takes into account other measures is presented in Section~\ref{sec:randomized}.

\begin{definition}[Robust  expansion]
$G$ has robust-expansion $\Phi_{r}(\delta,\gamma)$ if
$\forall A,B \subseteq V$ satisfying $|A| \le \delta|V|, |B| \le  \Phi(\delta,\gamma)|A|$, it is the case that $\frac{|E(A,B|}{|E(A,V)|} \le \gamma$.   Note that  $\Phi_{r}(\delta,1) = \Phi_v(\delta)$.
\end{definition}

\subsection{Our Contributions}
\subsubsection{Bounds for Deterministic Algorithms}
In this section we require that the algorithm always output the correct answer.
We show time space tradeoffs based on the vertex expansion properties of $G$. Our lower bounds are in the average case. Given a distribution $\mu$ over $U$, a data set is built by sampling $n$ data set points independently from $\mu$.

Note that in order for the problem to be interesting we must have that $N(x_i)$ and  $N(x_j)$ are likely to be disjoint. We thus have the following definition:
\begin{definition}
A distribution $\mu$ over $U$ is said to be \emph{strongly independent} for $G$ if
$$\mathop{\Pr_{x\sim \mu}}_{ z\sim \mu}\{N(x) \cap N(z) \neq \emptyset \}\leq 1/100n^2.$$

\end{definition}

Note that if $\mu$ is strongly independent and $x_1,\ldots,x_n$ are sampled independently by $\mu$ then with probability at least $0.99$  $N(x_i)\cap N(x_j) = \emptyset$ for all $i\neq j$.
In the following $m$ denotes the number of cells in the data structure and $w$ denotes the word size in bits, $t$ is the number of cell probes used by the algorithm.

\begin{theorem}\label{thm:deterministic}
For a given $G$, let $\mu, \nu$ be probability measures such that $\mu$ is  strongly independent, and the vertex expansion with respect to $\mu,\nu$ is $\Phi_v(\cdot)$. Then any deterministic algorithm solving GNS must satisfy the  following inequalities
\begin{align}
\left(\frac{mwt}{n}\right)^t &\ge \Phi_v \label{eq:det1}\\
\frac{m^ttw}{n} &\ge \Phi_v(1/m^t) \label{eq:det2}
\end{align}
\end{theorem}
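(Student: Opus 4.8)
The plan is to fix a deterministic $t$-probe data structure and derive a combinatorial contradiction with the vertex expansion of $G$ unless the stated inequalities hold. The basic object to track is, for a query $y \in V$, the set of cells probed by the algorithm on input $y$. Since probing is adaptive, the identity of the $j$-th cell probed depends on the contents of the previously probed cells; but if we fix the contents of the entire data structure (equivalently, fix the data set $x_1,\dots,x_n$ with labels), then each query $y$ deterministically reads an ordered tuple of $t$ cells, which I will call the \emph{probe set} $P(y) \subseteq [m]$. The key combinatorial fact is a covering/encoding statement: if a small set of cells is enough to answer many queries, then those queries' neighborhoods in $G$ must be covered by a correspondingly small set of data points, and this is exactly what $\Phi_v$ forbids.

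First I would set up the averaging argument. Sample the data set from $\mu^n$; by strong independence, with probability $\ge 0.99$ the neighborhoods $N(x_i)$ are pairwise disjoint, so the correct answer $b_i$ is well-defined for every $y \in \bigcup_i N(x_i)$. Consider a single data point, say $x_1$, together with its label $b_1$, planted via $\mu$, and think of the other $n-1$ points as noise. For the data structure to be correct, for every $y \in N(x_1)$ (that is not in some other $N(x_j)$) the algorithm must output $b_1$ after reading $P(y)$. Now I would argue that the union of probe sets $\bigcup_{y \in N(x_1)} P(y)$ cannot be too small on average: if for a noticeable $\nu$-fraction of queries $y$ the probe set $P(y)$ lies inside a fixed small collection $C$ of cells, then the bit $b_1$ is determined by the $w|C|$ bits in those cells, for \emph{every} such $y$; but as $x_1$ ranges over $\mu$, the set of queries $y$ forced to a common answer is $N(\{x : \text{$x$ uses only cells in } C\})$, whose $\nu$-measure is at most $\mu(\cdot)^{-1}$-controlled by $\Phi_v$. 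Making this precise gives a bound of the form: the number of distinct ``cell-contents patterns'' on $C$, namely $2^{w|C|}$, times the max $\nu$-measure of queries consistent with one pattern, must cover $\nu(N(x_1)$-typical queries$)$, and $\Phi_v$ says a query set of measure $\delta$ needs its neighbor set to have $\mu$-measure $\ge \Phi_v \cdot \delta$, i.e. needs $\ge \Phi_v \cdot \delta \cdot n$ ``effective'' data points, hence the pattern count must be $\ge \Phi_v$ in the relevant regime.

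For \eqref{eq:det1} I would take $C$ to be determined level by level: after the first probe there are $m$ choices, and the relevant ``width'' of queries sharing a first cell is $\nu(V)/(\text{something})$; iterating $t$ times, the total number of leaves of the decision tree restricted to what a fixed small cell-budget can realize is about $(mwt/n)^t$ — the $mwt/n$ per level coming from $m$ cells each holding $w$ bits, the $t$ for the probe index, and the $1/n$ because we are isolating one of $n$ planted points — and this must exceed $\Phi_v$ or else some query set of measure below the $\Phi_v$ threshold fails to be covered. For \eqref{eq:det2} I would instead not pay the $1/t^t$ branching-by-level loss: just observe that \emph{all} queries' probe sets live in $[m]$, so the total information read across the $t$ probes is at most $m^t$ distinct probe-tuples, each labeled by $tw$ bits, giving $m^t \cdot 2^{tw}$ total configurations; matching this against the $\Phi_v(1/m^t) \cdot \frac{1}{m^t}$-type covering requirement and the $n$ planted points yields $\frac{m^t t w}{n} \ge \Phi_v(1/m^t)$.

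The main obstacle I expect is handling adaptivity and the noise points cleanly: because the probe set $P(y)$ depends on the contents of cells, which depend on \emph{all} of $x_1,\dots,x_n$, I cannot literally fix ``the cells $x_1$ uses'' independently of the other points. The standard fix is to condition on the contents of the data structure (or on a random restriction of it) and use an averaging / encoding argument over the choice of $x_1 \sim \mu$ with the remaining points and the cell contents as shared randomness, so that the decision tree for answering queries becomes fixed and the covering argument applies verbatim; quantifying the loss incurred by this conditioning, and checking it only costs the claimed $wt$ and $t^t$ factors, is the delicate part. The rest — translating ``$\Phi_v$ forces large neighbor sets'' into ``large pattern count'' and choosing $\delta = 1/m^t$ versus the iterated choice — is routine once the encoding is in place.
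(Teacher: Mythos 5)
Your high-level plan (confine a noticeable fraction of queries to a small set of cells, use vertex expansion to force many data points' bits to be recoverable from those cells, then count) has the same flavor as the paper's argument, but as written it has two genuine gaps. First, you never specify how the small collection $C$ is actually obtained, and you explicitly defer the treatment of adaptivity as ``the delicate part.'' That is precisely where the paper's proof does its work: it greedily samples, level by level, the $m\Phi_v^{-1/t}$ cells (for \eqref{eq:det1}) or the single cell (for \eqref{eq:det2}) of largest $\nu$-mass among the queries that have so far stayed inside the sample. This guarantees a query set $Q_t$ with $\nu(Q_t)\geq 1/\Phi_v$ (resp.\ $\geq m^{-t}$) all of whose probes lie in the sample, and---crucially---the identity of the sampled cells depends only on the contents $\omega$ written in them, so adaptivity is handled simply by a union bound over the possible values of $\omega$, with no conditioning loss to ``quantify.'' Without this selection rule (or some substitute) your covering step has no concrete set of queries to which expansion can be applied, and your worry about whether the conditioning ``only costs the claimed $wt$ and $t^t$ factors'' is exactly the unresolved part of the proof.

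Second, the contradiction you aim for is mis-targeted. Arguing that a single planted bit $b_1$ is ``determined by the $w|C|$ bits'' yields nothing, and your final steps producing $(mwt/n)^t\ge\Phi_v$ and $m^ttw/n\ge\Phi_v(1/m^t)$ are dimensional matching rather than consequences of your setup (e.g.\ ``$mwt/n$ per level'' does not count any object in the argument; in the paper $(mwt/n)^t$ arises only because the sample has $m\Phi_v^{-1/t}$ cells per level and its total bit content must be $\Omega(n)$). The actual counting is an incompressibility argument over all $n$ points: expansion gives $\mu(N(Q_t))\ge \tfrac12$ (resp.\ $\ge \Phi_v(m^{-t})m^{-t}$), Chernoff gives that $\Omega(n)$ (resp.\ $\Omega(n\Phi_v(m^{-t})m^{-t})$) data points fall in $N(Q_t)$, and by strong independence each such $x_i$ has a query $y\in Q_t\cap N(x_i)$ with $x_i$ as its unique neighbor, so the algorithm reading only $\omega$ must output all of these independent uniform bits $b_i$; for fixed $\omega$ this happens with probability $2^{-\Omega(\cdot)}$, and the union bound over all $2^{(\#\text{sampled bits})}$ values of $\omega$ forces the number of sampled bits ($m\Phi_v^{-1/t}tw$, resp.\ $tw$) to exceed that count, which is exactly \eqref{eq:det1} and \eqref{eq:det2}.
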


These theorems, combined with known isoperimetric inequalities yield most known cell probe lower bounds for near neighbor problems, and generalize them to general expanding metrics. To see this consider for example the $d-$dimensional hypercube equipped with the Hamming distance. It is shown in \cite{PatrascuT06a}, \cite{Liu04} that any deterministic solution for ANNS with approximation $1/\e$ must satisfy $t \geq d\e^3/\log(mwd/n)$. This bound can be  slightly improved by creating the following GNS instance: Let $U$ and $V$ both equal the set of nodes of the hypercube, and let  $E = \{(u,v) ~:~|u-v|_1 \leq \e d\}$. Let $\mu$ and $\nu$ be the uniform distribution. Chernoff bounds implies that for $d=\Omega(\log n)$, $|u-v|_1 \geq 0.49d$ with overwhelming probability, so $(G,\mu)$ is a strongly independent instance. A lower bound on this instance of GNS implies a lower bound on ANNS with approximation $1/\e$.

Now we use known isoperimetric properties: Harper's theorem (see e.g. \cite{Bollobas86})  implies that there is a constant $a>1$ such that $\Phi_v \geq a^{\e^2 d}$. Plugging this in \eqref{eq:det1} we have that $t \geq d\e^2 \log a/\log(mwd/n)$.
In Section~\ref{sec:applications} we discuss how to apply these theorems in greater length.

\subsubsection{Bounds for Randomized Algorithms}
Assume that $G$ is regular. Let $x$ and $z$ be vertices drawn uniformly at random, and $y$ be a random neighbor of $x$. We say $G$ has the property of being \emph{weakly independent} if $\Pr[y\in N(z)]\leq \gamma/n$ for a small enough constant $\gamma$.
\begin{theorem}\label{thm:randomized}
There exists an absolute constant $\gamma$ such that the following holds.
Any randomized algorithm for a weakly independent instance of GNS which is correct with probability at least half (where the probability is taken over the sampling of the input and the algorithm), satisfies the following inequalities:
\begin{align}
(\frac{mwt^4}{n})^{2t} &\ge \Phi_{r}(\frac{1}{m}, \frac{\gamma}{t}) \label{eq:rand1}\\
\frac{m^{t}w}{n} &\ge \Phi_{r}(\frac{1}{m^t},\frac{\gamma}{t}) \label{eq:rand2}
\end{align}
\end{theorem}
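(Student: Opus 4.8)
The plan is to reduce the randomized case to an information-theoretic statement about how many edges of $G$ can be ``explained'' by a small number of cells, mirroring the deterministic argument of \theoremref{thm:deterministic} but working with edge mass (and robust expansion) in place of vertex neighborhoods. Fix a correct randomized algorithm; by averaging (Yao) we may fix the coins so that the algorithm is deterministic and correct on a constant fraction of inputs drawn from the hard distribution, where $x$ is uniform, $b$ is a uniform bit planted at $x$, and the query $y$ is a uniform neighbor of $x$. The key object is, for each query $y$, the set $T(y)$ of $t$ cells probed on $y$; since the algorithm outputs $b_i$ correctly, the contents of $T(y)$ must determine $b$, hence must ``know about'' $x$. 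Weak independence guarantees that with probability $1-o(1)$ the planted point is the unique neighbor of $y$ among the data set, so the reduction from GNS is sound on the hard distribution.

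The core combinatorial step is a covering/encoding argument. Consider a uniformly random cell $c$; let $B_c \subseteq V$ be the set of data-set points $x$ that ``rely on'' $c$, i.e.\ for which a noticeable fraction of $x$'s neighbors $y$ probe $c$ and moreover the answer at $y$ is sensitive to $c$'s contents. A cell holds $w$ bits, so across a random data set the typical cell can be responsible for distinguishing only about $n/m$ of the $n$ planted points while carrying $w$ bits each — this is where the $n/(mw)$-type factor enters. Dually, for a $1-\Omega(1/t)$ fraction of queries $y$, at least one of the $t$ probed cells must be ``informative'' for $y$, so the sets $\{N(B_c)\}_c$, weighted appropriately, must cover a $\gamma/t$-fraction of the edges incident to a typical $A$ of measure $\approx 1/m$ (resp.\ $1/m^t$ after iterating over the $t$ probes, which produces the $m^t$ and the exponent $2t$). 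Feeding ``$B$ covers a $\gamma/t$ fraction of $E(A,V)$ with $|A|\le \delta|V|$'' into the definition of robust expansion forces $|B| \ge \Phi_r(\delta,\gamma/t)\,|A|$, and comparing this with the upper bound $|B_c|\lesssim n/(mw)$ (summed/boosted over the $t$ probed cells and the $m$ choices of cell) yields \eqref{eq:rand1} and \eqref{eq:rand2}; the extra $t^4$ and the factor $2t$ in the exponent come from the standard losses in turning ``one of $t$ cells is informative'' into ``a fixed cell is informative'' (a union bound over the probe index, plus a second averaging to handle the dependence of $T(y)$ on the contents of already-probed cells).

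I would organize it as: (i) Yao + fix the hard distribution; (ii) show weak independence $\Rightarrow$ the planted neighbor is a.s.\ unique, so correctness transfers; (iii) define, for a random cell (or $t$-tuple of cells), the ``influential'' data points and bound their number by $O(n/(mw))$ using that the cell contents are a short string and the data set is random; (iv) show that for $\ge 1-\gamma/t$ fraction of queries some probed cell is influential, hence the corresponding $B$ covers $\ge \gamma/t$ of the edge mass from the influential $A$; (v) invoke the robust expansion inequality to lower bound $|B|$ and combine with (iii). The main obstacle — and the place where care is genuinely needed — is step (iv)–(v): the probed set $T(y)$ depends adaptively on cell contents, so ``influence'' is not a fixed bipartite relation, and one must argue that after conditioning on the first $j$ probed cells the $(j+1)$-st probe still behaves like it depends on $x$ only through a set of controlled edge mass. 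This adaptivity, together with keeping the robust-expansion parameter $\delta = 1/m$ (resp.\ $1/m^t$) consistent with the measure $\nu$ one actually gets on the ``informative'' query set, is the crux; the arithmetic of propagating constants to get exactly $mwt^4/n$ and the exponent $2t$ is routine once the structural claim is in place.
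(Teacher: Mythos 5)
Your proposal shares the paper's starting point (Yao plus the hard distribution, weak independence giving uniqueness of the planted neighbor, and the intuition that the set of queries probing a cell has measure about $1/m$ so robust expansion should force many data points to be ``known'' to few cells), but it stops short of the two ingredients that actually make the argument go through, so as written there is a genuine gap. First, your step (iii) --- ``a typical cell can be responsible for distinguishing only about $n/(mw)$ of the planted points'' --- is asserted, not proved, and it is exactly the problematic step: the contents of every cell depend on the \emph{entire} database, so there is no fixed bipartite ``influence'' relation to count over. The paper breaks this circularity with an explicit compression argument: it samples a small set of cells (or paths), lets the encoding be their contents, decodes each $b_i$ by a majority vote of $\nu_{x_i}$ restricted to the queries answerable from the sample, takes a union bound over all possible sampled contents (affordable only because the sample has $stw \ll n$ bits), and derives a contradiction with the incompressibility of the random string $b$ (Corollary~\ref{cor:nocompression}). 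Without some such mechanism your $n/(mw)$ bookkeeping has no justification.

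Second, your steps (iv)--(v) silently assume that for a typical $x$ the probed-cell partition cannot swallow most of $N(x)$ in one piece; this is false in general (for the hypercube at $r=(\tfrac12-\eps)d$ the largest piece of $N(x)$ has measure about $m^{-\eps}$, far larger than the $m^{-\eps^2}$ one needs), which is precisely why the paper introduces \emph{weak shattering}, proves that robust expansion implies it (\lemmaref{lem:exptoshatweak}), and decodes under the truncated measures $\nut^k_x$. Finally, you correctly identify adaptivity as ``the crux'' but leave it unresolved; the paper handles it by an induction over the $t$ probe levels (the events $Rep_k$, $Small_k$, $WShatter_k$ in \theoremref{thm:randcellsample} and \theoremref{thm:randpathsample}), where the partition at level $k+1$ is allowed to depend on the sampled contents at level $k$ and the $ManyShatter$ union bound removes the dependence on the table population. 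These are not routine constant-chasing details: the $2t$ exponent and $t^4$ factors in \eqref{eq:rand1} come out of exactly this sampling-rate/shattering trade-off, so the missing pieces are the heart of the proof rather than its periphery.
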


As an example, we show in Section~\ref{sec:applications} that for the Hypercube with $E = \{(u,v) ~:~|u-v|_1 \leq (\frac{1}{2}-\e) d\}$, the robust expansion $\Phi_{r}(\frac{1}{m^t},o(1)) \geq \frac{1}{m^{t(1-4\e^2)}}$. For $d = \Omega(\log n/\e^2)$, the weak independence property is easy to verify. Plugging this into Equations ~\ref{eq:rand2}, we conclude that $m^{4t\e^2}w \ge n$ so that $m \ge (\frac{n}{w})^{\frac{1}{4t\e^2}}$. This result was previously shown by~\cite{AIP06} for slightly larger $d$.

\medskip

Our framework suggests a natural conjecture on the complexity of approximate near neighbor problems.

\begin{conjecture}
Any randomized $t$-probe datastructure for a weakly independent GNS instance must satisfy
$\frac{mw}{n} t \ge \Phi_{r}(\tfrac{1}{m}, \tfrac{1}{2t})^{\Omega(1)}$.
\end{conjecture}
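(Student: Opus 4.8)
\medskip
\noindent\textbf{Proof approach.} The conjecture is a quantitative sharpening of \theoremref{thm:randomized}: taking $2t$-th roots in \equationref{eq:rand1} turns it into $\tfrac{mwt^4}{n}\ge\Phi_{r}(\tfrac1m,\tfrac{\gamma}{t})^{1/2t}$, so the only real weakness is the exponent $1/2t$, which makes the bound decay exponentially in $t$; the conjecture asserts that this exponent can be replaced by a constant at the price of only a $\mathrm{poly}(t)$ factor and a slightly smaller second argument of $\Phi_{r}$. I would keep the hard distribution of \theoremref{thm:randomized} --- plant a pair $(x_1,y)$ with $x_1\sim\mu$ and $y$ a uniform neighbor of $x_1$, draw $x_2,\dots,x_n$ i.i.d.\ from $\mu$, and use weak independence so that with constant probability the planted label is the unique correct answer --- and rework only the step that loses the exponent. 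The case of constant $t$ is already within reach of the techniques behind \theoremref{thm:randomized}; the whole difficulty lies in the dependence on $t$.

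The $1/t$ in the exponent enters through the \emph{cell-sampling} step: after fixing the algorithm's coins one passes to a random subset $S$ of the $m$ cells and keeps those planted queries whose entire $t$-probe path lies inside $S$, so the number of survivors scales like $(|S|/m)^t$ times the number of queries, and one can push $|S|/m$ down only to about a $1/t$-th power of the survival fraction one needs --- it is this $1/t$-th power that degrades $\Phi_{r}(\cdot)$ to $\Phi_{r}(\cdot)^{1/2t}$. For $t=1$ the step is lossless, which is why the conjecture holds there, and it is essentially lossless whenever no cell is probed by more than a tiny fraction of queries --- precisely the dynamic low-contention regime in which a matching bound can be proved. So the task reduces to replacing uniform cell sampling, for adaptive multi-probe structures, by a cleverer data-dependent selection of a single set $\mathcal W$ of cells that is simultaneously small enough to feed into the definition of robust expansion and informative enough that its contents together with the query $y$ pin down the planted label for an $\Omega(1)$ fraction of planted pairs; robust expansion with $\delta=1/m$ and $\gamma=\Omega(1/t)$ --- the $1/t$ being the unavoidable cost of spreading the probe budget over $t$ rounds --- would then yield $\tfrac{mw}{n}t\ge\Phi_{r}(\tfrac1m,\tfrac{1}{2t})^{\Omega(1)}$.

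The main obstacle is the adaptivity of the probes: because the $i$-th cell probed on query $y$ depends on the contents of the first $i-1$, the map $y\mapsto T(y)$ from a query to its probed cells is itself a function of the data set, and the naive ways of coping with this --- union-bounding over the $2^{(i-1)w}$ possible histories, or reverting to uniform cell sampling --- cost either an extra exponential or the $1/t$-th power. The most promising substitute I see is a potential-function argument over the $t$ rounds: track the conditional entropy of the planted label given $y$ and the contents of the cells probed so far, and show that the $i$-th probe can decrease it appreciably only if the corresponding cell covers a $\Phi_{r}$-controlled fraction of the edges incident to the surviving set of queries, so that the $t$ per-round decrements add rather than multiply. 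An alternative is to port the whole argument to the asymmetric communication model of~\cite{AIP06} and replace its direct-sum step by a round-elimination that does not pay for the transcript length. I expect either route to require a genuinely new idea, which is why the statement is only conjectured; and it is close to optimal, since for highly symmetric metrics the cell-probe upper bound sketched in the introduction is itself governed by the robust expansion, leaving slack only in the exponent $\Omega(1)$ and the polynomial dependence on $t$.
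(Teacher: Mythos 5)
The statement you were asked about is a \emph{conjecture} in the paper, and the paper offers no proof of it; it only proves the weaker Theorem~\ref{thm:randomized} (with the $\Phi_r(\cdot)^{1/2t}$-type dependence), a matching $t=1$ upper bound for symmetric graphs (Section~\ref{sec:ubound}), and the low-contention dynamic case (Theorem~\ref{thm:dynamic}) as supporting evidence. Your proposal likewise does not claim a proof, and your diagnosis is essentially the same as the paper's own: the exponential-in-$t$ loss comes from the cell/path-sampling step, which must spread the sampling budget over $t$ adaptive probes (the adaptivity forcing either a union bound over table contents or a $1/t$-th-power loss in the sampled fraction); the bound is lossless at $t=1$, which is exactly where the paper's upper bound matches; and the regime where no cell is read by too many queries is exactly the low-contention setting in which the paper does prove the conjectured trade-off (indeed its weaker dynamic variant). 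So there is no gap to flag relative to the paper — neither you nor the paper establishes the conjecture — and your proposed routes (a per-round potential/entropy argument that makes the $t$ losses additive rather than multiplicative, or a round-elimination in the asymmetric communication model) are reasonable but, as you say yourself, would require a genuinely new idea beyond the paper's sampling-plus-shattering machinery. One small caution: any communication-complexity route inherits the limitation the paper notes, namely that cell-sampling-style bounds such as \eqref{eq:rand1} are not known to transfer to protocols, so the potential-function route over the data structure itself seems the more faithful continuation of the paper's framework.
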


We point out that for some interesting metric spaces such as the Hamming cube and Euclidean space, the known upper bound matches the lower bound in the conjecture for a wide range of parameters. We next present some evidence in support of this conjecture.

\subsubsection{An Upper Bound}

There are cases where the bounds above are known to be tight when $t=O(1)$. We show that this is no coincidence: In Section~\ref{sec:ubound} we show that if $G$ is symmetric, there is an algorithm in the cell probe model that solves random instances of GNS using space that matches the lower bound in equation (\ref{eq:rand2}) for $t=1$.

\subsubsection{Dynamic Data Structure}

In the dynamic version of the problem we want the data structure to support the operation of inserting and deleting a point in the data-set. Let $t_U$ denote the update time.
A weaker version of the conjecture is the following:
\begin{conjecture}
For any dynamic randomized $t$-probe data-structure for weakly independent GNS on $n$ points, it holds that
$t_U t \ge \Phi_r(\tfrac{1}{nt_U},\tfrac{1}{2t})^{\Omega(1)}$
\end{conjecture}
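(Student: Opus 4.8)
\medskip\noindent\emph{Proof idea.} The plan is to follow the template of \theoremref{thm:deterministic} and \theoremref{thm:randomized}: exhibit a small query set $A$ and a small data set $B$ between which any correct structure must carry many edges of $G$, and then contradict the robust expansion $\Phi_r(\tfrac{1}{nt_U},\tfrac{1}{2t})$. Use the interleaved operation sequence $\mathrm{insert}(x_1,b_1),\mathrm{query}(\tilde y_1),\dots,\mathrm{insert}(x_n,b_n),\mathrm{query}(\tilde y_n)$, where the $x_i$ are i.i.d.\ from the weakly independent measure, the $b_i$ are independent uniform bits, and $\tilde y_i$ is a uniform random neighbor of $x_i$. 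By weak independence the $N(x_i)$ are pairwise disjoint with constant probability, so any structure that does better than random on this instance outputs $b_i$ on the $i$-th query with probability noticeably above $\tfrac12$.

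The new ingredient, which is what drives the linear dependence on $t_U$, is the \emph{fingerprint} of a point. Let $W_i$ (of size $\le 2t_U$) be the cells written by the $i$-th insertion and $M_i$ the memory state just after it. Since queries are read-only and $b_i$ is touched only by the $i$-th insertion, $b_i$ affects $M_i$ only through the contents of the cells of $W_i$; hence a query on $\tilde y_i$ that avoids $W_i$ sees a distribution independent of $b_i$ and is correct with probability exactly $\tfrac12$, so the $i$-th query probes a cell of $W_i$ with probability $\Omega(1)$. As $\sum_i|W_i|\le 2nt_U$, the fingerprints occupy only $O(nt_U)$ cell-slots in total; this budget is what plays the role that $m$ (resp.\ $m^t$) plays in the static bounds, and is the source of the measure scale $\tfrac{1}{nt_U}$ appearing inside $\Phi_r$.

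Given this localization, one runs a cell-sampling argument: reveal a random set of cells whose size is tuned so that, with non-negligible probability, it contains the fingerprint cell of a random $x_i$ together with the remaining $t-1$ cells that the $i$-th query probes. The queries resolved this way form a set $A$ with $\nu(A)$ of order $\tfrac{1}{nt_U}$, and the data points they resolve form a set $B$; correctness together with the fingerprint property forces $B$ to cover an $\Omega(1/t)$-fraction of the edges incident to $A$ --- the $1/t$ arising, exactly as in \theoremref{thm:randomized}, because each query makes $t$ probes, so a single probed cell explains only a $\Theta(1/t)$ fraction of its behaviour. Robust expansion then gives $\mu(B)\ge\Phi_r(\tfrac{1}{nt_U},\tfrac{1}{2t})\,\nu(A)$, while the number of bits revealed bounds $\mu(B)$ from above; comparing the two estimates and absorbing $w$ and $\mathrm{polylog}$ factors yields $t_U t\ge\Phi_r(\tfrac{1}{nt_U},\tfrac{1}{2t})^{\Omega(1)}$.

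The delicate point --- and the reason the statement is only a conjecture --- is that the fingerprint is guaranteed localized only in the \emph{snapshot} $M_i$: a later insertion that happens to read a cell of $W_i$ can copy $b_i$'s information into fresh cells, so by the time the $i$-th query runs that information may be dispersed, and the cell-sampling machinery, which works with one global memory image, cannot then faithfully reconstruct the state $M_i$ that the $i$-th query actually saw. Capping this dispersion is precisely what the low-contention hypothesis (no cell read by too many operations) buys, and it is what makes the argument go through for low-contention data structures. Removing the hypothesis appears to require either a global accounting showing that the fingerprints spread to only $O(nt_U\cdot\mathrm{poly}(t))$ cells in total regardless of contention, or a way to run the encoding argument snapshot by snapshot --- and neither is currently available. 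This is the main obstacle.
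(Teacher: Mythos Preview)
The statement is a conjecture; the paper does not prove it. The paper only notes that it would follow from the stronger static conjecture via the one-line observation that a dynamic structure with update time $t_U$ occupies at most $nt_U$ cells after $n$ insertions (so $mw\le nt_U$). You correctly recognize the statement is open and give a reasonable account of why a direct cell-sampling attack stalls. One quibble: in your interleaved sequence the query $\tilde y_i$ runs immediately after $\mathrm{insert}(x_i,b_i)$, so no later insert has yet touched $W_i$ when query $i$ executes; the ``dispersion'' you describe is really about sampling from a single final state $M_n$ while needing to reconstruct the earlier snapshots $M_i$ --- which is indeed the obstruction --- but the phrasing ``by the time the $i$-th query runs'' is misleading.

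The more substantive mismatch is in how you think low contention rescues the argument. The paper's proof of the low-contention case (\theoremref{thm:dynamic}) does not use cell sampling, encodings, or an operation sequence at all. It is a direct argument on a single insert: fix the memory state just before inserting one point $x$; the $t$ lookup functions induce $t$ partitions of $V$, and low contention says exactly that each part has $\nu$-measure at most $\tau$, i.e.\ the partitions are $\tau$-sparse; \lemmaref{lem:exptoshatweak} then shows a random $x$ is $(K,\gamma)$-weakly shattered by each, with $K=\Theta(\Phi_r(\tau,\Theta(1/t^2))/t^3)$. Since every $y\in N'(x)$ whose answer flips must probe some cell in the update set $L$ in some round, shattering gives $\nu_x(N'(x))\le t|L|/K$ plus the shattering slack, forcing $t_U\ge|L|\ge\Omega(K/t)$. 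So low contention enters solely to supply the $\tau$-sparseness hypothesis of the expansion-to-shattering lemma, not to cap fingerprint dispersion across an operation sequence.
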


To see why this conjecture follows from the stronger one, observe that a data structure with update time $t_U$ uses space $mw \leq nt_U$ after $n$ inserts. We show that this weaker conjecture holds for a restricted family of algorithms which we call \emph{low contention}; i.e., on those where no memory location of the data structure is accessed by too many query points (see Section~\ref{sec:dynamic} for a formal definition). While this may seem like a severe limitation, we remark that known LSH data structures, and our upper bound in Section~\ref{sec:ubound}, are in fact dynamic and low contention under our definition.

We show that

\begin{theorem}\label{thm:dynamic}
For any low contention, dynamic $t$-probe datastructure for GNS on $n$ points, the update time is at least $ \Omega\left(\Phi_r(\tau,\frac{1}{4t^2})/32t^4 \right)$.
\end{theorem}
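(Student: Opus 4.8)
The goal is to turn a low-contention dynamic data structure into an obstruction to the robust expansion $\Phi_r(\tau, \tfrac{1}{4t^2})$. The plan is to run the following thought experiment: sample a near-neighbor GNS instance at random (using a weakly-independent distribution, so the $n$ query-relevant neighborhoods are disjoint with high probability), insert the $n$ data points one at a time, and track which memory cells get written. Since the update time is $t_U$, after all insertions the data structure touches a set $W$ of at most $n t_U$ cells, and moreover each individual insertion writes at most $t_U$ cells. For a query point $y$ that is a true near neighbor of exactly one data point $x_i$, correctness forces the $t$ cells probed on query $y$ to be "informative" about $b_i$; in particular, at least one of those $t$ cells must have been written during the insertion of $x_i$ (or, more carefully, the set of cells probed on $y$ must intersect the set of cells written by the relevant insertion, since otherwise the answer cannot depend on $b_i$). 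This is the standard "the query must read a cell the relevant update wrote" argument, adapted to the randomized/average-case setting by a counting/union bound over the randomness.

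\textbf{Key steps.} First I would fix notation: let $C_i$ be the (random) set of $\le t_U$ cells written when $x_i$ is inserted, and for a query $y$ let $P(y)$ be the set of $\le t$ cells probed. The core combinatorial claim is that for most choices of $(x_i, y)$ with $y$ a neighbor of $x_i$, we have $P(y) \cap C_i \neq \emptyset$; this follows because $b_i$ enters the output only through cells whose contents depend on $b_i$, and those are contained in $\bigcup_j C_j$, but by strong/weak independence the neighborhoods are disjoint so only $C_i$ is relevant to $y$. Second, I would use the low-contention hypothesis: no cell is in $P(y)$ for more than (say) a $\rho$-fraction of query points $y$, for the contention parameter $\rho$ from the Section~\ref{sec:dynamic} definition. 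Third — the heart of the argument — I would set up the robust-expansion violation. Consider the bipartite "probe" graph on (queries) $\times$ (cells): a query $y$ is connected to the cells in $P(y)$. Group the data points $x_i$ by which cell of $C_i$ "serves" a given query; since each $C_i$ has size $\le t_U$ and each $P(y)$ has size $\le t$, a pigeonhole/averaging argument produces a single cell (or small set of cells) that is responsible for serving a $\tfrac{1}{tt_U}$-ish fraction of the (query, near-neighbor) incidences. Let $A$ be the set of queries served by that cell and $B = N(A)$ restricted to the data points whose insertion wrote that cell. Then $B$ is small — bounded in terms of $t_U$ because the cell was written by few insertions — while $E(A,B)$ captures a $\gamma/t$-ish fraction of $E(A,V)$; comparing $|B|$ against $\Phi_r(\cdot)\cdot|A|$ and $|A|$ against $\tau |V|$ yields a contradiction with the definition of $\Phi_r(\tau, \tfrac{1}{4t^2})$ unless $t_U \ge \Omega(\Phi_r(\tau,\tfrac{1}{4t^2})/32t^4)$. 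The various constants ($32t^4$, $\tfrac{1}{4t^2}$) come from tracking the losses: a factor $\sim t$ from the union bound over which probe is informative, another $\sim t$ from splitting $E(A,V)$, and a factor $\sim t_U^2$ or $t^2 t_U$ from the pigeonhole over $(C_i, P(y))$ pairs combined with the error probability.

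\textbf{Main obstacle.} The delicate point is making "the query reads a cell the relevant update wrote" quantitatively robust in the average-case randomized model: a single cell written by the insertion of $x_i$ may encode information about many $b_j$'s simultaneously (cells are $w$ bits wide, not one bit), so I cannot charge query $y$ cleanly to one insertion without first arguing that the neighborhoods $N(x_i)$ are disjoint (using weak independence) so that the decision on $y$ is a function of $b_i$ alone plus the shared randomness. Even then, the adversary's error probability $\tfrac12$ must be handled: I expect to restrict to the $\Omega(1)$ fraction of inputs and random strings on which the algorithm is correct, then do the counting there, which is where the constant-factor slack and the $\tfrac{1}{4t^2}$ robustness parameter (rather than $\tfrac{1}{2t}$) get spent. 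A secondary technical nuisance is that the set of cells written over the whole insertion sequence depends on the data in a correlated way, so the "served by one cell" pigeonhole must be done after conditioning appropriately; I would structure this as: fix the algorithm, average over inputs, extract a good cell, then read off the sets $A$ and $B$ and invoke $\Phi_r$.
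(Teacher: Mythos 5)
There is a genuine gap, and it sits exactly where you flagged ``the heart of the argument.'' Your step of extracting, by pigeonhole over one sampled execution, a single cell together with $A=$ (queries probing that cell) and $B=$ (the data points whose insertions wrote that cell), and then ``comparing $|B|$ against $\Phi_r(\cdot)\cdot|A|$,'' does not typecheck against the definition of robust expansion. $\Phi_r(\tau,\gamma)$ lower-bounds $\mu(B)/\nu(A)$ only for sets $B\subseteq U$ that capture a $\gamma$-fraction of the \emph{edge measure} incident on $A$, i.e.\ $e(B,A)\ge\gamma\,e(U,A)$. Your $B$ is a finite collection of at most $n$ sampled points; its $\mu$-measure and its $e$-edge mass into $A$ are both negligible, so no contradiction with expansion can be read off from it. Moreover the auxiliary claim that ``$B$ is small because the cell was written by few insertions'' is unsupported: a single hot cell may be rewritten by every one of the $n$ insertions, and the correlated, state-dependent nature of the write sets (which you acknowledge) is not resolved by the proposed conditioning. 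The missing idea is the bridge the paper uses: Lemma~\ref{lem:exptoshatweak} (expansion implies weak shattering). That lemma is precisely the device that converts a measure-theoretic expansion bound into a statement about a single random point $x\sim\mu$, namely that for any $\tau$-sparse partition (and by low contention, each lookup function induces one) no single part carries more than roughly $1/K$ of the neighborhood measure $\nu_x$, where $K=\Phi_r(\tau,\tfrac{\gamma^2}{4t^2})\gamma^3/16t^3$, up to the weak-shattering slack handled via the dominated measures $\nut_x$. Without this (or an equivalent re-derivation, which is essentially the averaging argument in the lemma's proof, carried out over parts weighted by $e(x,\cdot)$ rather than over sampled incidences), your outline cannot reach a bound expressed in terms of $\Phi_r(\tau,\cdot)$.

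The second, smaller divergence is structural: you argue globally over all $n$ insertions and a query issued at the end, tracking the $\le nt_U$ written cells, whereas the theorem follows from a purely local, single-insertion argument. The paper fixes the data structure state just before inserting $x$, notes that the $t$ lookup functions give $\tau$-sparse partitions $\{Q^i_l\}$, observes (as you do, correctly) that every $y\in N'(x)$ must probe one of the $|L|\le t_U$ cells rewritten by the insertion, and then applies shattering: $\nu_x\bigl(\cup_i Q^i_L\bigr)\le \tfrac{\gamma}{2}+t\,|L|/K$, while success of the insertion forces this to be at least $1-\gamma$, giving $t_U\ge K/2t=\Omega\bigl(\Phi_r(\tau,\tfrac{1}{4t^2})/t^4\bigr)$. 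Restricting to one insertion removes all of the global bookkeeping (union of write sets, which insertion a cell ``belongs'' to, correlations across insertions) that your plan would still have to control.
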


Plugging in the expansion of the hypercube, we see that for a wide range of parameters Locality Sensitive Hashing is \emph{optimal} for the class of the low contention dynamic data structures over the hypercube.

\subsection{ Related Work}
\label{sec:relatedwork}
Most previous papers are concerned with the Hamming distance over the $d$-dimensional hypercube. The cases of exact or deterministic algorithms were handled in a series of papers
\cite{CCGL99}, \cite{BOR99},\cite{Liu04}, \cite{BarkolR02}. These lower bounds hold for any polynomial space. In contrast the known upper bounds are both approximate and randomized, and with polynomial space can retrieve the output with one query. Chakrabarti and Regev~\cite{ChakrabartiR04} allow for both randomization and approximation, with polynomial space and show a tight bound for the \emph{nearest} neighbor problem. \Patrascu and Thorup\cite{PatrascuT06a} showed lower bounds on the query time of near neighbor problems with a stronger space restriction (near linear space), although their bound holds for deterministic or exact algorithms.
The metric $\ell_\infty$ is considered in an intriguing paper by Andoni \etal \cite{ACP08} who prove a lower bound for deterministic algorithms. The paper uses the richness lemma though the crux of the proof is an interesting isoperimetric bound on $\ell_\infty$ for a carefully chosen measure.

We are aware of only two papers which prove time-space lower bounds for near neighbor problems where both randomization and approximation are allowed.

Andoni, Indyk and \Patrascu~\cite{AIP06} show that for small $\e > 0$, any $O(1)$-probe algorithm for $(1+\e)$-approximate near neighbor problem must use space $n^{\Omega(\frac{1}{\e^2})}$. This bound is tight for small enough $\e > 0$ \cite{AIP06}. Panigrahy \etal \cite{PTW08} show that space  $n^{1+\Omega(\frac{1}{\e t})}$ is needed for any algorithm with $t$ queries and $\e$ approximation, for the search version of the problem. This bound is tight for constant $t$.

With the exception of~\cite{PTW08} all previous bounds were proven using  communication complexity framework~\cite{MNSW98}, and in particular the richness lemma.

\paragraph{Comparison to~\cite{PTW08}:}
While there is some overlap in the techniques between this work and~\cite{PTW08}, the current work is much more general, and stronger even for the special case (our lower bound now applies to the decision version of NNS). We show that expansion may serve as a single explanation that unifies all previous results, and also gives a simple recipe to prove lower bounds for other metrics such as $\ell_{\infty}$ and edit distance. While~\cite{PTW08} essentially contained a version of the lower bound (\ref{eq:rand1}) with the edge expansion, we are now able to additionally show (\ref{eq:rand1}). Additionally, we can use vertex expansion to show lower bounds for deterministic data structures. Moreover, we show that the randomized lower bounds hold under the much weaker notion of robust expansion.  As we discuss in Section~\ref{sec:techniques}, this strengthening is provably needed for deriving the right lower bound for the $(1+\e)$-approximation range for the Hypercube. We remark that both (\ref{eq:det2}) and (\ref{eq:rand2}) hold for communication protocol. While we do not know if (\ref{eq:det1}) and (\ref{eq:rand1}) hold for communication protocols, our proofs do shed some light on how the two approaches differ, and make clearer how the data structure is used in proving our lower bound.

\paragraph{Restricted Models:} Higher lower bounds may be achieved when considering models which are more restricted than the cell probe model. Beame and Vee \cite{BeameV02} investigate branching programs. Krauthgamer and Lee \cite{KL05} show tight upper and lower bounds for the 'black box model' where the algorithm is only allowed to query distances between points of the data set. They show that in this case the complexity of NNS is determined by the intrinsic doubling dimension of the data-set. Motwani, Naor and Panigrahy~\cite{MNP05} prove an LSH lower bound for $\ell_1$, which has recently been strengthened to the tight bound by O'Donnell, Wu and Zhou~\cite{ODonnellWZ09}.

\subsection{Notation and Preliminaries}

A data structure for Graph neighbor search is defined as follows. Given a database of $n$ points $x_1,\ldots,x_n \in U$, and $b_1,\ldots,b_n \in \{0,1\}$ the preprocessing algorithm computes a set of $t$ tables $T_1,\ldots,T_t$, where each table stores $m$ words of $w$ bits each. We often call each such word a \emph{cell} of the table. In practice there is only one table, but for notational convenience and with out loss of generality we let the data structure construct a different table for each query.

The query algorithm is specified by $t$ lookup functions $F_1,\ldots,F_t$, where $F_i$ takes in the query point $y$ and $(i-1)$ words of $w$ bits each, and outputs an integer in $[m]$, and function $F_*: V \times (2^w)^t \rightarrow \{0,1\}$. On a query $y$, the data structure looks up $c_1=T_1[F_1(y)]$, $c_2=T_2[F_2(y,c_1)],\ldots,c_t=T_t[F_t(y,c_1,\ldots,c_{t-1})]$. Finally it computes $F_*(y,c_1,\ldots,c_t)$. Note that the lookup functions, $F_i$'s and $F_*$ are fixed independent of the database, only the tables $T_1,\ldots,T_t$ can depend on $x_1,\ldots,x_t, b_1,\ldots,b_t$. We say the algorithm is \emph{non adaptive} if the lookup functions are independent of the content of the tables, i.e. of the $c$ values.

\subsection{Overview of Techniques}
\label{sec:techniques}
The core idea behind our approach is quite simple. We demonstrate it by showing a simple argument that the vertex expansion of $G$ provides a lower bound on the space of $1$-probe data structures for deterministic algorithms. By the definition of vertex expansion, every set of $|V|/\Phi_v$ nodes is incident to at least half of the nodes of $G$. Let $L$ be a uniformly random sample of a $1/\Phi_v$ fraction of the cells of the table $T$, and let $Q$ be the set of nodes in $V$ for which the algorithm probes a cell in $L$. Clearly $Q$ is expected to contain a $1/\Phi_v$ fraction of the nodes in $G$. Now consider a sample data set $(x_1,b_1),\ldots,(x_n,b_n)$ where $x_1,\ldots,x_n$ are randomly sampled nodes in the graph and $b_1,\ldots,b_n$ are random bits. With overwhelming probability at least a quarter of the $x_i$'s have a neighbor in the set $Q$, and thus the random bits associated with these points should be retrievable from the contents of $L$ alone. We conclude that the total number of bits in $L$ is at least $n/4$ and thus the space of the data structure is at least $n\Phi_v/4$ bits.

This basic sampling approach for $1$-probe data structures can be extended to $t$-probe data structures in two different ways.

\medskip\noindent{\em Cell Sampling:} Here we sample a $\Phi_v^{-\frac{1}{t}}$ fraction of the cells in each table. Thus a $1/\Phi_v$ fraction of $V$ is expected to access only the sampled cells. This immediately gives bound \eqref{eq:det1} for non-adaptive algorithms.

\medskip\noindent{\em Path Sampling:} Here we sample a {\em path} as follows: we pick a cell randomly from the first table so that a $\frac{1}{m}$ fraction of the vertices $Q_1$ lookup this cell. Then we sample a cell from the second table in such a way  that a $\frac{1}{m}$ fraction of $Q_1$ looks up this cell in the second read, and so on. This immediately leads to the lower bound in \eqref{eq:det2} for non-adaptive algorithms.

We remark that the path sampling approach actually leads to communication complexity lower bounds for the 2-player version of the problem where Alice has the query point and Bob has the database. Any $t$-probe data structure with $m$ cells of $w$ words each implies the existence of a $t$-round communication protocol where Alice sends $\log m$ bits, and Bob sends $w$ bits, in each round. A communication protocol has more freedom however; unlike in a data structure, where the same table $T_2$ is used to answer any second query, in a communication protocol, the message Bob sends in the second round may depend not just on the second message from Alice, but also on the first. Path sampling can be immediately translated to a ``transcript sampling" technique and thus gives lower bounds for communication protocols. There is no similarly obvious translation for cell sampling.

\medskip

We can extend these ideas and provide lower bounds for adaptive algorithms by observing the following two facts. Firstly, for a fixed data structure, the probability over a random data set that the data structure succeeds is exponentially small in $n$. On the other hand, the number of bits read by the sampling procedures above is sublinear, thus the number of all possible non-adaptive algorithms is sub exponential. Informally, this allows us to do a union bound over all possible values of the bits read.

In randomized algorithms not all points in $N(x)$ are good query points for $x$. In particular, the specific query point that queries the cells that are sampled may be a point on which the algorithm errs. The notion of \emph{shattering} plays a major role in extending the bound for this case: Given any fixed partitioning $A_1,\ldots,A_m$ of $V$ such that each set is of cardinality $O(|V|/m)$, a randomly chosen $x$ has (with high probability) the property that $\max_i |N(x)\cap A_i|$ is at most $|N(x)|/K$, for a $K$ that depends on the edge expansion. In other words, $N(x)$ is {\em shattered} by the partitioning. Given that the lookup algorithm is correct for a large fraction of $N(x)$, shattering suffices to show that the algorithm still gives the right answer for a \emph{majority} of the points in $N(x)$ which can be looked up from the cell sample (or the path sample).

In order to prove lower bounds for randomized adaptive algorithms we need to combine the ideas outlined in the two previous paragraphs, which requires more work. Intuitively, for every $x$ such the $N(x)$ is shattered, and for any fixed subset $N'(x)$ on which the algorithm succeeds, the sampling is very likely to recover the correct answer. Moreover, for every collection of bits read, almost all points shatter. While it would be tempting to use a union bound at this point, that does not quite work. Informally, there are dependencies everywhere: the part of $N(x_i)$ that the algorithm gets right depends on all the other $x_j$'s, the bits that are read depend on the sampled cells, etc. The proof  carefully defines a notion of shattering that depends only on the $x$'s and not on $N'(x)$'s and argues (over the randomness in picking the $x_i$'s) that most points get shattered. Separately, we argue that for a point that gets shattered, and for any fixed $N'(x)$, the majority answer is correct with high probability (over the sampling procedure alone).

\medskip

The notion of edge expansion does not quite suffice: for the hypercube when $r=(\frac 1 2 -\eps)d$, for fixed partitioning $A_1,\ldots,A_m$ of $V$ into cells of size $|V|/m$, the largest $|N(x)\cap A_i|/|N(x)|$ is likely to be quite large ($ \approx \frac{1}{m^{\eps}}$), whereas we would need it to be $\frac{1}{m^{\eps^2}}$ to get the correct bound. The definition of robust expansion $\Phi_{r}$ comes to our rescue here. We can show that while the largest $|N(x)\cap A_i|$ is usually large, the large pieces account for a very small fraction of $N(x)$. In fact, after removing a vanishingly small fraction of $|N(x)|$, every other piece is only about $\frac{1}{m^{\eps^2}}$. We show that our lower bound proofs are robust enough to handle this weaker notion of shattering.

While our techniques do not improve the dependency on the query time $t$, they overcome some of the inherent obstacles in the richness method, so for instance, strengthening the isoperimetric bound of $\ell_\infty$ would imply that the bound in \cite{ACP08} extends to randomized algorithms as well.

\section{Deterministic Lower bounds}
In this section we prove Theorem~\ref{thm:deterministic}. The analysis of deterministic algorithms involves node expansion and does not require shattering. It allows us to demonstrate the techniques of cell sampling and path sampling in a simple setting.

\subsection{Cell Sampling}
The following theorem is a restatement of  inequality~\eqref{eq:det1}
\begin{theorem}\label{thm:det1}
Let $\mu$ be such that $(G,\mu)$ is a strongly independent instance, and $\Phi_v$ be the vertex expansion with respect to $\mu,\nu$. Then any deterministic algorithm solving GNS must satisfy
$(\frac{8mwt}{n})^t \ge \Phi_v$.
\end{theorem}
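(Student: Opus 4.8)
The plan is a proof by contradiction using the \emph{cell sampling} technique of Section~\ref{sec:techniques}, phrased as a compression argument. Suppose $(8mwt/n)^t<\Phi_v$ and set $p:=\Phi_v^{-1/t}$, so that the assumption reads $8\,tpmw<n$; the goal is to exhibit a prefix code for the ``answer bits'' of a random database whose expected length falls below the entropy bound, a contradiction.

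First set up the hard distribution: draw $x_1,\dots,x_n$ i.i.d.\ from $\mu$ and draw $b_1,\dots,b_n$ uniform and independent. Strong independence guarantees that with probability at least $0.99$ the neighborhoods $N(x_1),\dots,N(x_n)$ are pairwise disjoint, and on that event every $y\in N(x_i)$ is a legal GNS query that must be answered $b_i$. By the normalization in which each of the $t$ probes reads a fresh table, a query $y$ touches exactly one cell of each table $T_1,\dots,T_t$. Now sample a set $L$ consisting of a $p$ fraction of the cells of each table, and call $y$ \emph{captured} if all $t$ cells on its probe path lie in $L$. For a non-adaptive algorithm the captured set $Q_L$ depends only on $L$, not on the database, and $\Pr_L[y\in Q_L]=p^t=1/\Phi_v$ for every fixed $y$, hence $\Ex_L[\nu(Q_L)]=1/\Phi_v$; fix an $L$ with $\nu(Q_L)\ge 1/\Phi_v$. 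Passing to a subset of $Q_L$ of $\nu$-measure just below $\tfrac{1}{2\Phi_v}$ and invoking the definition of $\Phi_v$ gives $\mu(N(Q_L))\ge \tfrac14$ (atoms of $\nu$ are negligible in the instances we care about).

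Next I run the encoding, treating $x_1,\dots,x_n$, the graph $G$, and the fixed set $L$ as known to the decoder. The encoder runs the preprocessing and transmits: the $w$-bit contents of every cell of $L$ (that is $tpmw$ bits), one flag bit recording whether the disjointness event occurred, and, if it did, the bits $b_i$ only for those $i$ with $x_i\notin N(Q_L)$. The decoder recomputes $N(Q_L)$, hence knows which $i$ were withheld; for each such $i$ it picks any $y\in N(x_i)\cap Q_L$ and simulates the lookup algorithm on $y$ --- every cell on that path lies in $L$, so its content was transmitted --- recovering the algorithm's output, which under disjointness equals $b_i$. Thus the expected code length is $tpmw+1+n-\Ex[Z]$, where $Z$ is the number of recovered indices; since $Z$ is at least $\#\{i:x_i\in N(Q_L)\}$ minus $n$ times the indicator of the bad event, $\Ex[Z]\ge n\,\mu(N(Q_L))-0.01n\ge 0.24n$. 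As any such code has expected length at least $H(b\mid x)=n$, we obtain $tpmw\ge 0.24n-1$, contradicting $8tpmw<n$ once $n$ exceeds an absolute constant. This settles non-adaptive algorithms.

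The main obstacle is lifting the non-adaptivity assumption. For an adaptive algorithm the probe path of $y$, and therefore $Q$ and its neighborhood $N(Q)$, depend on the table contents --- hence on the \emph{whole} database --- destroying the independence between the $x_i$'s and $N(Q)$ used above. The remedy is that, with $L$ fixed, the contents of the cells of $L$ take at most $2^{tpmw}$ values, and each value determines a database-\emph{independent} decoding scheme (a captured set together with an answer function); the scheme actually realized is one of these. One then replaces the single encoding by a union bound over all $2^{tpmw}$ schemes, exploiting on one side that $2^{tpmw}$ is subexponential in $n$ --- which is exactly what $8tpmw<n$ provides --- and on the other that for any fixed scheme the probability (over the random database) of it being both realized and consistent with the data structure is exponentially small in $n$, since consistency pins the $\Theta(n)$ recovered bits to a string fixed in advance. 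Making these two estimates collide is where essentially all the work lies; it is also the step that has to be redone, via the notion of shattering, in the randomized Theorem~\ref{thm:randomized}.
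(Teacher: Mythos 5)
Your non-adaptive argument is fine and is essentially the paper's cell-sampling idea in encoding language, but the theorem is about arbitrary deterministic algorithms, and the adaptive case is exactly where your proposal has a genuine gap rather than just "work left to do." You fix a single random $L$ (a $p$-fraction of each table, $p=\Phi_v^{-1/t}$) in advance and then union-bound over the $2^{tpmw}$ possible contents $\omega$ of $L$. For this to give a contradiction you need each event $A_{L,\omega}$ to have probability $2^{-\Omega(n)}$, and your justification ("consistency pins the $\Theta(n)$ recovered bits") presupposes that the realized captured set $Q_{L,\omega}$ has $\nu$-measure about $1/\Phi_v$, so that $\mu(N(Q_{L,\omega}))=\Omega(1)$ and $\Theta(n)$ of the $x_i$'s land in it. But for an adaptive algorithm the only guarantee you have is $\Ex_L[\nu(Q_L)]=p^t$ \emph{per database}; once $L$ is frozen, the probe paths (hence $Q_{L,\omega}$) depend on the table contents, and the data structure can be such that for precisely the content strings $\omega$ that actually occur, $Q_{L,\omega}$ is tiny or empty (e.g., second-probe destinations steered by bits stored in the first table). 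Then $\Pr[A_{L,\omega}]$ is not exponentially small and the union bound collapses. Nor can you simply add $L$ to the union bound: the number of choices of $L$ is roughly $2^{tpm\log(1/p)}=2^{pm\log\Phi_v}$, and nothing in the hypothesis bounds $\log(1/p)=\tfrac{1}{t}\log\Phi_v$ by $w$, so this count is not controlled by the assumed bound $tpmw<n/8$.

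The paper closes exactly this hole by abandoning the fixed random sample: the cells are chosen \emph{greedily as a function of the contents read so far}. In round $i+1$ one takes the $m\Phi_v^{-1/t}$ parts of the partition induced by $F_{i+1}$ (restricted to the surviving query set $Q_i$) that are heaviest under $\nu$; an averaging argument then gives $\nu(Q_i)\ge\Phi_v^{-i/t}$ \emph{deterministically for every possible content string}, so $\nu(Q_t)\ge 1/\Phi_v$ and $\mu(N(Q_t))\ge 1/2$ always hold. Crucially, the selected locations are a function of $\omega$ alone, so the union bound over the $2^{\Phi_v^{-1/t}mtw}$ content strings (with the Chernoff$+$bit-matching estimate per string) is all that is needed. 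To repair your write-up you would have to replace the fixed random $L$ by such a content-driven selection (or otherwise prove a lower bound on $\nu(Q_{L,\omega})$ for every realizable $\omega$); as it stands, the step you defer is the theorem.
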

\begin{proof}
Recall that $T_i$ represents a table with $m$ cells, from which the $i'th$ query reads, and $F_i:V \rightarrow T_i$ denotes the $i$'th lookup function.
We will state a procedure that obtains a set of at most $tm/\Phi_v^{1/t}$  cells  so that at least a $1/\Phi_v$ fraction of the query points only access these cells. We call this procedure \emph{cell sampling}. Note that here this procedure is entirely deterministic. A probabilistic variant is used in the next section.

{\it Cell sampling procedure:} The cells are obtained iteratively in $t$ phases, each phase corresponds to a query of the table. In each phase at most $m/\Phi_v^{1/t}$ cells are chosen. The first lookup function $F_1$ induces a partition over $V$. The set $L_1$ is chosen to be the $m/\Phi_v^{1/t}$ cells in $T_1$ that maximize $\nu(F_1^{-1}(L_1))$. In other words, the first lookup function partitions $V$ according to its image in $T_1$. We choose $L_1$ to be the cells corresponding to the $m/\Phi_v^{1/t}$ largest partitions, as measured by $\nu$. Set $Q_1\subset V$ to be those vertices; i.e. $F_1^{-1}(L_1)$. The selection process continues iteratively in a similar manner. Let $L_i$ denote the set of cells obtained in the $i$'th phase and let $Q_i \subseteq V$ denote the set of vertices that (if given as a query) access only cells in $L_1,\ldots,L_i$ In the $(i+1)$'th phase we consider $F_{i+1}$ and set $L_{i+1}$ to be the $m/\Phi_v^{1/t}$ cells with highest measure, where we restrict $\nu$ to $Q_{i}$. In other words, when measuring $F_i^{-1}(L_{i+1})$ we assign a measure of $0$ for vertices outside $Q_{i}$. It is easy to inductively argue that $\nu(Q_i) \geq \Phi_v^{-i/t}$, so that $\nu(Q_t) \geq 1/\Phi_v$ and thus $\mu(N(Q_t)) \geq \half$.

Intuitively, the set of cells in $L_1,\ldots,L_t$ encode half of the $b_i$ bits and therefore must contain $\Omega(n)$ bits, which would imply the lower bound. Of course, $L_1,\ldots,L_t$  depends upon the content of the table which depends upon the points in the data set. These dependencies can be handled by using a union bound over all the possible values of $L_t$:
To see this, fix the values written in the cells $L_1,\ldots,L_t$ to some string $\omega$ and sample the $n$ data set points from $U$ independently according to $\mu$. Let $A_\omega$ denote the event that when the value of the cells $L_1,\ldots,L_t$ is $\omega$, an algorithm reading $\omega$ succeeds in guessing the $b_i$ bits for the data set points that fall in $N(Q_t)$.  Note that $Q_t$ depends only upon $\omega$ and that since the procedure of obtaining $L_t$ is deterministic, the locations of the cells obtained also depends only on $\omega$. Vertex expansion implies that $\mu(N(Q_t)) \geq \half$. Also, since $\mu$ is strongly independent and the algorithm is assumed to be correct, $\Pr[\cup_\omega A_\omega] \geq \half$. By Chernoff's bound, the  probability that less than $n/8$ points fall in $N(Q_t)$ is at most $2^{-n/8}$. Note that the $b_i$'s are chosen independently, therefore, for a fixed table, if $n/8$ points indeed fall in $N(Q_t)$, then the probability that the sampled $b_i$'s match the output of the algorithms is $2^{-n/8}$. We conclude that $\Pr[A_\omega] \leq 2^{-n/8}+2^{-n/8}$. Now, let $K = 1/\Phi_v^{1/t}$. There are $2^{Kmtw}$ ways of choosing $\omega$, so we must have $2^{1-n/8} 2^{Kmtw}\geq \half$. We conclude that $Kmtw \geq n/8$ which implies the theorem.
\end{proof}

\subsection{Path Sampling}
We now prove Inequality \eqref{eq:det2}.
\begin{theorem}
Let $\mu$ be  strongly independent, and $\Phi_v$ be the vertex expansion with respect to $\mu,\nu$, then any data structure with a deterministic querying algorithm must satisfy
$\frac{9m^twt}{n} \ge \Phi_v(1/m^t)$.
\end{theorem}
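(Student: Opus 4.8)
The plan is to mirror the structure of the cell-sampling proof, but instead of sampling a whole layer of cells in each table, I would sample a single cell per table along a ``path'' so that a $1/m^t$ fraction of the query vertices follow that path. Concretely, let $T_1, \ldots, T_t$ be the tables and $F_1, \ldots, F_t$ the lookup functions. The lookup function $F_1$ partitions $V$; pick the single cell $c_1 \in T_1$ maximizing $\nu(F_1^{-1}(c_1))$, so the set $Q_1 = F_1^{-1}(c_1)$ has $\nu(Q_1) \ge 1/m$. Now restrict $\nu$ to $Q_1$ and renormalize: $F_2$ (with its first argument frozen to the content of $c_1$) partitions $Q_1$, so there is a cell $c_2 \in T_2$ with $\nu(Q_2) \ge \nu(Q_1)/m$ where $Q_2 \subseteq Q_1$ is the set of vertices in $Q_1$ that probe $c_2$ on their second read. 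Iterating $t$ times yields a single path of cells $c_1, \ldots, c_t$ (total $tw$ bits of content) and a set $Q_t \subseteq V$ with $\nu(Q_t) \ge 1/m^t$; all vertices in $Q_t$ answer their query by reading only these $t$ cells. By the definition of vertex expansion applied at scale $\delta = 1/m^t$, we get $\mu(N(Q_t)) \ge \Phi_v(1/m^t) \cdot \nu(Q_t)$.

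Next I would run the same averaging/union-bound argument as in \theoremref{thm:det1}. Sample the $n$ data points $x_1, \ldots, x_n$ i.i.d.\ from $\mu$ and the bits $b_1, \ldots, b_n$ uniformly. Fix a string $\omega$ of $tw$ bits to be the contents written in the path cells; note that $Q_t$, and which cells $c_1, \ldots, c_t$ are selected, are determined by $\omega$ alone (the maximization at each step only looks at the contents read so far). Let $A_\omega$ be the event that the algorithm, reading $\omega$ on the path, correctly outputs $b_i$ for every data point $x_i \in N(Q_t)$. Since $\mu$ is strongly independent the neighborhoods are disjoint with probability $\ge 0.99$, and since the algorithm is always correct, $\Pr[\bigcup_\omega A_\omega] \ge 1/2$ roughly. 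By Chernoff, at least $\mu(N(Q_t)) n / 2 \ge \Phi_v(1/m^t) n / (2 m^t)$ of the points land in $N(Q_t)$ except with probability exponentially small in that quantity; conditioned on that, for a fixed table the chance the guessed bits all match is $2^{-\Omega(\Phi_v(1/m^t) n / m^t)}$. There are at most $m^t$ choices for \emph{which} path of cells (i.e.\ which $t$-tuple of indices) times $2^{tw}$ choices for $\omega$, hence at most $m^t 2^{tw}$ events $A_\omega$, and a union bound forces $\log(m^t 2^{tw}) \gtrsim \Phi_v(1/m^t) n / m^t$, i.e.\ $t\log m + tw \gtrsim \Phi_v(1/m^t) n / m^t$, which after absorbing constants rearranges to $\tfrac{9 m^t w t}{n} \ge \Phi_v(1/m^t)$.

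The main obstacle, as in the cell-sampling case, is correctly handling the dependency between the sampled path and the table contents: the path cells are chosen \emph{greedily as a function of what is read}, so one must be careful that $Q_t$ is genuinely a function of $\omega$ only and not of the unread portion of the database, and that the number of distinct $(\text{path}, \omega)$ pairs one union-bounds over is only $m^t 2^{tw}$ rather than something exponential in $n$. A secondary technical point is the inductive claim $\nu(Q_i) \ge \nu(Q_{i-1})/m$ when the later lookup functions are adaptive: freezing the contents of $c_1, \ldots, c_{i-1}$ to their values in $\omega$ makes $F_i$ a well-defined partition of $Q_{i-1}$, so the pigeonhole step goes through, but this is exactly where ``for communication protocols the message may also depend on earlier rounds'' is used, so the bookkeeping must be stated with care. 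Everything else is the same Chernoff-plus-union-bound calculation as before, with $K m t$ replaced by $m^t$ and $\Phi_v$ replaced by $\Phi_v(1/m^t)$.
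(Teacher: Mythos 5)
Your proposal is correct and takes essentially the same route as the paper: greedily select a single cell per table so that $\nu(Q_t)\ge m^{-t}$, apply vertex expansion to get $\mu(N(Q_t))\ge \Phi_v(1/m^t)m^{-t}$, and run the same Chernoff-plus-union-bound encoding argument over the possible contents $\omega$ of the sampled path. Your extra factor of $m^t$ in the union bound (over which cells form the path) is unnecessary, since the greedy selection makes the path itself a function of $\omega$ alone, but it is harmless and only affects constants.
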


\begin{proof}
The proof is similar to that of Theorem~\ref{thm:det1} with a different choice of parameters. We present it here separately because the two approaches diverge in the next section when we deal with the randomized case.  We use the cell sampling technique to select a set of cells from the tables, only this time, each phase we select \emph{a single} cell from the table (as opposed to selecting $m\Phi_v^{-1/t}$ cells in Theorem~\ref{thm:det1}). We call the approach of sampling a single cell from each table \emph{path sampling} because we sample a single possible ``query path'' along the $t$ tables. We also observe that lower bounds based on this approach imply communication complexity lower bounds.

Now the contents of $L_1,\ldots,L_t$ are $tw$ bits, and $\nu(Q_t) \geq m^{-t}$ so that $\mu(N(Q_t)) \geq \Phi_v(m^{-t})m^{-t}$. When fixing the bits of $L_t$ to be the string $\omega$, the expected number of data set points that fall in $N(Q_t)$ is at least $n\Phi_v(m^{-t})m^{-t}$. Define $A_\omega$ as before and recall that we have $\Pr[\cup A_\omega] \geq \half$. Let $Z_\omega$ be the number of data set points falling in $N(Q_t)$. Chernoff's bound implies that $\Pr[Z_\omega \leq \half n\Phi_v(m^{-t})m^{-t}] \leq 2^{-\Phi_v(m^{-t})m^{-t}/8}$. Since the string $\omega$ now encodes $Z_\omega$ random bits, $\Pr[A_\omega] \le 2^{-\Phi_v(m^{-t})m^{-t}/8} + 2^{-\Phi_v(m^{-t})m^{-t}/2}$. There are $2^{tw}$ ways of choosing $\omega$, so we have $2^{tw}\cdot 2^{-\Phi_v(m^{-t})m^{-t}/8} \geq \half$ which implies the theorem.
\end{proof}
\section{Randomized Lower bounds}
\label{sec:randomized}

\subsection{Preliminaries}

To prove lower bounds for randomized data structure, we will use Yao's minimax theorem, and instead show a distribution over instances such that for some constant $\delta>0$, any deterministic $t$-probe data structure that succeeds with probability $(1-\delta)$ needs large space.

We consider the following randomized version of the {\em Graph Neighbor Search (GNS)} problem on a bipartite graph $G=(U,V,E)$. We are given a set of $n$ tuples $(x_1,b_1),\ldots,(x_n,b_n)$, where $x_i \in U$ and $b_i \in\{0,1\}$ to preprocess into a data structure. Then given a query $y \in V$, the query algorithm makes $t$ probes into the data structure, and is expected to return $b_i$ if $x_i$ is the unique neighbor in $G$ of $y$ in $\{x_1,\ldots,x_n\}$ (if there is no unique neighbor, any output is considered valid).

Let $G=(U,V,E)$ be a bipartite graph and let $e$ be a probability distribution over $E$. Let $\mu(u) = e(u,V) = \sum_{v\in V} e(u,v)$ be the induced distribution on $U$, and let $\nu(v) = e(U,v)$ be the induced distribution on $V$. For $x \in U$, we denote by $\nu_x$ the conditional distribution of the endpoints in $V$ of edges incident on $u$, i.e. $\nu_x(y) = e(x,y)/e(x,V)$.

Suppose we have a graph $G=(U,V,E)$, and the distribution $e$ on $E$. Then $(G,e)$ define a distribution over instances of GNS as follows. We select $n$ points $x_1,\ldots,x_n$ independently from the distribution $\mu$ uniformly at random and pick $b_1,\ldots,b_n$ independently and uniformly from $\{0,1\}$. This defines the database distribution. To generate the query, we pick an $i \in [n]$ uniformly at random, and sample $y$ independently from $\nu_{x_i}$.

We say the tuple $(G,e)$ satisfies $\gamma$-{\em weak independence (WI)} if $\Pr_{x,z \sim \mu, y \sim \nu_x}[(y,z) \in E] \leq \frac{\gamma}{n}$. In other words, WI ensures that with probability $(1-\gamma)$, for the instance generated as above, $x$ is indeed the unique neighbor in $G$ of $y$ in $\{x_1,\ldots,x_n\}$.

We next define the notion of expansion that we use. Recall that the vertex expansion of a set $A\subseteq V$ in an unweighted graph $G$ is the ratio $\frac{|B|}{|A|}$, where $B=N(A)$ is the smallest set such that all edges incident on $A$ are captured by $B$, i.e. $|E(B,A)| = |E(U,A)|$. A relaxation of this definition, which we call $\gamma$-robust expansion, is the ratio $\frac{|B|}{|A|}$ where $B$ is now the smallest set that captures a $\gamma$ fraction of the edges incident on $A$, i.e. $e(B,A) \geq \gamma e(U,A)$. The following definition generalizes this notion to weighted bipartite graphs.

\begin{definition}
[Robust Expansion]
The $\gamma$-robust expansion of a set $A \subseteq V$ is defined as $\phi_{r}(A,\gamma) \eqdef \min_{B \subseteq U: e(B,A) \geq \gamma e(U,A)} \mu(B)/\nu(A)$.

Let $w$ be an auxiliary weight function on $U$ with $\sum_{u \in U} w(u) = 1$. The $\gamma$-robust expansion with respect to $w$ is defined as $\phi^w_{r}(A,\gamma) \eqdef \min_{B \subseteq U: e(B,A) \geq \gamma e(U,A)} w(B)/\nu(A)$.

We say that $(G,e)$ has $(\beta,\gamma)$-robust expansion $\phi^w_{r} = \phi^w_{r}(\beta,\gamma)$ at least $K$ if for every subset $A \subseteq V$ such that $\nu(A) \leq \beta$, we have $\phi^w_{r}(A,\gamma) \geq K$.
\end{definition}

For intuition, consider the setting where $G =(U,V,E)$ is derived naturally from an undirected graph $H = (V_H,E_H)$ by making two copies of $V_H$ and for each edge $(u,v) \in E_H$, placing the edges $(u_1,v_2)$ and $(v_1,u_2)$. Formally, $U_G = V_H \times \{1\},  V_G = V_H\times \{2\}$, and $E_H = \{((u,1),(v,2)) \in U_G \times V_G : (u,v) \in E_H\}$. Then for any set $A \subseteq V$, we have $\phi^w_{r}(A,1) = w(N(A))/\nu(A)$, which is the vertex expansion of $A$ in $H$ under $\nu$, for $w=\nu$. Similarly, if a set $A$ has conductance $e(A,A^c)/e(A,V_H)$ at most $1-\gamma$, then $e(A,A) \geq \gamma e(A,V)$ so that $\phi^w_{r}(A,\gamma) \leq w(A)/\nu(A)$. A similar correspondence holds for directed graphs.

We next give some more definitions.
\begin{definition}
[$\beta$-sparse] A collection $A_1,\ldots,A_k$ of disjoint subsets of $V$ is said to be $\beta$-sparse with respect to $(G,e)$ if $\max_i \nu(A_i) \leq \beta$.
\end{definition}

We now recall the notion of strong shattering.

\begin{definition}[Strong Shattering]
Given $(G,e)$ and a collection $A_1,\ldots,A_k$ of disjoint subsets of $V$, we say the collection $\{A_i\}_i$ {\em $K$-strongly shatters} a point $x \in V$ if $\max_i \nu_x(A_i) \leq \frac{1}{K}$.
\end{definition}

We shall in fact show our lower bounds using a weaker notion of shattering, which allows a small probability mass from $\nu_x$ to be in $A_i$'s with $\nu_x$ measure larger than $\frac{1}{K}$. For a real number $a$, let $(a)^+ \eqdef \max(a,0)$ denote the positive part of $x$. Note that strong shattering says that each of the $\nu_x(A_i)$'s is at most $\frac{1}{K}$ so that $\sum_{i} (\nu_x(A_i) - \frac{1}{K})^+$ is zero. We relax this condition.

\begin{definition}
[Weak Shattering]
Given $(G,e)$ and a collection $A_1,\ldots,A_k$ of disjoint subsets of $V$, we say the collection $\{A_i\}_i$ {\em $(K,\gamma)$-weakly shatters} a point $x\in V$ if $\sum_{i} (\nu_x(A_i) - \frac{1}{K})^+ \leq \gamma\nu(\cup_i A_i)$.

\end{definition}

\begin{definition}[$(K,\beta,\gamma)$-weak shattering (WS) property]
We say a tuple $(G,e)$ satisfies the $(K,\beta,\gamma)$-weak shattering (WS) property if for every $\beta$-sparse collection $A_1,\ldots,A_k$ of disjoint subsets of $V$,
$$\Pr_{x \sim \mu}[A_1,\ldots,A_k \mbox{ $(K,\gamma)$-weakly shatters } x ] \geq 1-\gamma.$$
\end{definition}

We record the following implication of weak shattering.
\begin{observation}
\label{obs:Kbeta-transfer}
If $(G,e)$ satisfies $(K,\beta,\gamma)$-weak shattering property, then it also satisfies $(K/\lceil\frac{\beta'}{\beta}\rceil,\beta',\gamma)$-weak shattering for any $\beta'>0$.
\end{observation}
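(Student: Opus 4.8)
The plan is to reduce the claimed $(K/\lceil\beta'/\beta\rceil,\beta',\gamma)$-weak shattering for a given $\beta'$-sparse collection to the assumed $(K,\beta,\gamma)$-weak shattering for a suitable $\beta$-sparse \emph{refinement} of that collection. Write $\ell = \lceil\beta'/\beta\rceil$. Given a $\beta'$-sparse disjoint collection $A_1,\dots,A_k$ of subsets of $V$, partition each $A_i$ into subsets $A_i^1,\dots,A_i^\ell$ with $\nu(A_i^j)\le\beta$ for all $j$; this is possible by greedily splitting $A_i$ (here one uses that $\nu(A_i)\le\beta'\le\ell\beta$, and that in our applications $\nu$ is spread out enough — e.g.\ uniform — that no single vertex carries mass exceeding $\beta$). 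The collection $\{A_i^j\}_{i,j}$ is then $\beta$-sparse and disjoint, so the hypothesis applies to it.

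Next I would invoke the $(K,\beta,\gamma)$-weak shattering property on $\{A_i^j\}$: for all but a $\gamma$-fraction of $x\sim\mu$, we have $\sum_{i,j}\bigl(\nu_x(A_i^j)-\tfrac1K\bigr)^+\le\gamma\,\nu\bigl(\bigcup_{i,j}A_i^j\bigr)=\gamma\,\nu\bigl(\bigcup_i A_i\bigr)$. Fix such an $x$. The heart of the argument is the pointwise inequality, for each fixed $i$,
\[
\Bigl(\nu_x(A_i)-\tfrac{\ell}{K}\Bigr)^+\ \le\ \sum_{j=1}^{\ell}\Bigl(\nu_x(A_i^j)-\tfrac1K\Bigr)^+ ,
\]
which holds because $\nu_x(A_i)=\sum_j\nu_x(A_i^j)$, so the right-hand side is at least $\sum_j\bigl(\nu_x(A_i^j)-\tfrac1K\bigr)=\nu_x(A_i)-\tfrac{\ell}{K}$, and it is also manifestly nonnegative, hence it dominates the positive part. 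Summing this over $i$ and combining with the displayed bound from weak shattering gives $\sum_i\bigl(\nu_x(A_i)-\tfrac{\ell}{K}\bigr)^+\le\gamma\,\nu\bigl(\bigcup_i A_i\bigr)$, which is exactly the statement that $\{A_i\}$ $(K/\ell,\gamma)$-weakly shatters $x$. Since this holds for all but a $\gamma$-fraction of $x$, and $\ell=\lceil\beta'/\beta\rceil$, we obtain the $(K/\lceil\beta'/\beta\rceil,\beta',\gamma)$-weak shattering property.

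I do not expect a serious obstacle here: the only delicate point is the refinement step, namely guaranteeing each piece $A_i^j$ has $\nu$-measure at most $\beta$, which could in principle fail if $\nu$ had an atom of mass between $\beta$ and $\beta'$ — a situation that does not arise in the intended applications (where $\nu$ is uniform over a large set). The substantive content of the argument is the convexity-type inequality above, which converts an excess-over-threshold-$\tfrac1K$ bound for the fine partition into an excess-over-threshold-$\tfrac{\ell}{K}$ bound for the coarse one; everything else is bookkeeping.
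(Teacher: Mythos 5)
Your proof is correct and takes essentially the same route as the paper: split each $A_i$ into $\lceil\beta'/\beta\rceil$ pieces of $\nu$-measure at most $\beta$, apply the assumed $(K,\beta,\gamma)$-weak shattering to the refined collection, and convert back using the positive-part inequality $\bigl(\sum_j a_j\bigr)^+ \le \sum_j (a_j)^+$ with $a_j = \nu_x(A_i^j)-\tfrac1K$. You even state that inequality in the correct direction (the paper's one-line proof writes it reversed, evidently a typo), and the caveat you flag about atoms of $\nu$ obstructing the refinement is one the paper implicitly shares by asserting the split can be done arbitrarily.
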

\begin{proof}
For $\beta' < \beta$, there is nothing to prove since every $\beta'$-sparse collection is also $\beta$-sparse. For $\beta'>\beta$, we can arbitrarily break each set $A_i$ into $s=\lceil\frac{\beta'}{\beta}\rceil$ pieces to derive a $\beta$-sparse collection; the shattering follows using the identity $(\sum_{i=1}^s a_i)^+ \geq \sum_{i=1}^s (a_i)^+$.
\end{proof}

Also observe that
\begin{observation}
\label{obs:shatternut}
Let $A_1,\ldots,A_k$ be a collection of disjoint subsets of $V$. Then for any $x \in V$, there is a measure $\nup_x$ such that (a) $\nu_x$ dominates $\nup_x$, i.e. $\nup_x(A) \leq \nu_x(A)$ for all $A \subseteq V$, (b) $\nup_x(A_i) \leq \frac{1}{K}$ for all $i \in [k]$, and (c) If $x$ is $(K,\gamma)$-weakly shattered, then $ \nup_x(\cup A_i) \geq \nu_x(\cup A_i) - \gamma \nu(\cup A_i)$.
\end{observation}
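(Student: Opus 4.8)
The plan is to obtain $\nup_x$ from $\nu_x$ by simply ``capping'' the mass that $\nu_x$ places on each $A_i$ at the threshold $\tfrac1K$, while leaving the mass outside $\bigcup_i A_i$ untouched. Concretely, I would define a density $f : V \to [0,1]$ with respect to $\nu_x$ by setting $f(y) = 1$ for $y \notin \bigcup_i A_i$ and, for $y \in A_i$, setting $f(y) = \min\bigl(1, \tfrac{1/K}{\nu_x(A_i)}\bigr)$ (interpreting this as $1$ when $\nu_x(A_i)=0$, in which case $A_i$ is $\nu_x$-null and the choice is irrelevant). Then one sets $\nup_x(A) \eqdef \int_A f\, d\nu_x$ for every $A \subseteq V$; since the $A_i$ are pairwise disjoint this is well defined, and $\nup_x$ is a nonnegative (not necessarily probability) measure on $V$.

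Verifying the three properties is then a short calculation. Property (a) is immediate since $0 \le f \le 1$ pointwise, so $\nup_x(A) = \int_A f\, d\nu_x \le \nu_x(A)$. For property (b), because $f$ is constant equal to $\min\bigl(1, \tfrac{1/K}{\nu_x(A_i)}\bigr)$ on $A_i$, we get $\nup_x(A_i) = \min\bigl(1, \tfrac{1/K}{\nu_x(A_i)}\bigr)\,\nu_x(A_i) = \min\bigl(\nu_x(A_i), \tfrac1K\bigr) \le \tfrac1K$. For property (c), summing this identity over $i$ and using disjointness of the $A_i$ gives
\[
\nup_x\Bigl(\bigcup_i A_i\Bigr) = \sum_i \min\bigl(\nu_x(A_i), \tfrac1K\bigr) = \sum_i \Bigl(\nu_x(A_i) - \bigl(\nu_x(A_i) - \tfrac1K\bigr)^+\Bigr) = \nu_x\Bigl(\bigcup_i A_i\Bigr) - \sum_i \bigl(\nu_x(A_i) - \tfrac1K\bigr)^+ .
\]
If $x$ is $(K,\gamma)$-weakly shattered by $\{A_i\}_i$, the definition of weak shattering bounds the last sum by $\gamma\,\nu(\bigcup_i A_i)$, which yields exactly $\nup_x(\bigcup_i A_i) \ge \nu_x(\bigcup_i A_i) - \gamma\,\nu(\bigcup_i A_i)$, as required.

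I do not anticipate a genuine obstacle here: the only points needing a word of care are the degenerate case $\nu_x(A_i) = 0$ (handled by the convention above) and the observation that $\nup_x$ is not claimed to be a probability measure, so we may discard mass without redistributing it. The entire content of the statement is the elementary identity $\min(a,c) = a - (a-c)^+$ applied to $a = \nu_x(A_i)$ and $c = \tfrac1K$, combined with the definition of weak shattering.
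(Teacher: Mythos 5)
Your construction is exactly the paper's: cap the mass on each heavy $A_i$ by rescaling $\nu_x$ on it by $\tfrac{1/K}{\nu_x(A_i)}$, leave the rest untouched, and note $\nu_x(A_i)-\nup_x(A_i)=(\nu_x(A_i)-\tfrac{1}{K})^+$ so that weak shattering gives (c). The proposal is correct and matches the paper's proof, with only cosmetic additions (the density formulation, the $\nu_x(A_i)=0$ convention, and specifying $\nup_x$ outside $\cup_i A_i$).
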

Note that $\nup$ is not necessarily a probability measure. Intuitively, $\nup$ is a part of the measure $\nu$ that gets shattered. Such a measure can be obtained by shaving the mass on $y$ that fall in clusters with large $\nu_x$ mass.
\begin{proof}
For each $A_i$ with $\nu_x(A_i) \geq \frac{1}{K}$, we set $\nup_x(y) = \frac{\frac{1}{K}}{\nu_x(A_i)} \nu_x(y)$ for each $y \in A_i$. $\nup_x(y)$ is set $\nu_x(y)$ for the remaining $A_i$'s. The dominance is immediate, and the small loss property follows from the fact that $\nup_x(A_i) = \frac{1}{K}$ for every $A_i$ of the first type so that $\nu_x(A_i) - \nup_x(A_i) = (\nu_x(A_i) - \frac{1}{K})^+$.
\end{proof}

Finally, we shall use the following simple information-theoretic lemma:
\begin{lemma}
Let $\delta < \frac{1}{4}$, and let $Enc: \{0,1\}^n \rightarrow \{0,1\}^N$ and $Dec: \{0,1\}^N \rightarrow \{0,1\}^n$ be functions such that $|b - Dec(Enc(b))|_1 < \delta n$ with probability at least $\frac{1}{2}$ when $b$ is drawn at random. Then there exists a constant $r = r(\delta)>0$ such that $N \geq rn$, where $\lim_{\delta \rightarrow 0} r(\delta) = 1$.
\end{lemma}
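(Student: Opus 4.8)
The statement is a textbook sphere\-covering (volume) bound, phrased as a data\-structure codec, and I would prove it by a direct counting argument. Let $H(\cdot)$ denote the binary entropy function. Define the set of \emph{good} inputs
$$\mathsf{Good} \;=\; \{\, b \in \{0,1\}^n \;:\; |b - Dec(Enc(b))|_1 < \delta n \,\},$$
so that by hypothesis $|\mathsf{Good}| \ge \thalf \cdot 2^n$. The point is that the encoding has only $2^N$ possible values, and each value can be ``correctly decoded up to $\delta n$ errors'' for only few inputs $b$.

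First I would partition $\mathsf{Good}$ according to the encoded value: for $c \in \{0,1\}^N$ let $\mathsf{Good}_c = \{ b \in \mathsf{Good} : Enc(b) = c\}$. Every $b \in \mathsf{Good}_c$ satisfies $|b - Dec(c)|_1 < \delta n$, hence lies in the Hamming ball of radius $\floor{\delta n}$ around the single fixed point $Dec(c) \in \{0,1\}^n$. (Note $Enc$ need not be injective; grouping by $c$ and bounding each class by a ball volume rather than by $1$ is exactly how one sidesteps that.) Since $\delta < \thalf$, the standard estimate $\sum_{i=0}^{\floor{\delta n}} \binom{n}{i} \le 2^{H(\delta) n}$ bounds the size of each such ball, so $|\mathsf{Good}_c| \le 2^{H(\delta) n}$. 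Summing over the at most $2^N$ values of $c$ gives $\thalf 2^n \le |\mathsf{Good}| = \sum_c |\mathsf{Good}_c| \le 2^N \cdot 2^{H(\delta) n}$.

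Taking logarithms yields $N \ge (1 - H(\delta)) n - 1$. Since $\delta < \tfrac14$ we have $H(\delta) < H(\tfrac14) < 1$, so $1 - H(\delta)$ is a positive absolute constant; thus in the regime of interest (the lemma is applied with $n$ the — large — number of data points) we get $N \ge r(\delta)\, n$ with $r(\delta) = 1 - H(\delta) - o_n(1)$, and since $H(\delta) \to 0$ as $\delta \to 0$ we obtain $\lim_{\delta \to 0} r(\delta) = 1$, as claimed. There is essentially no obstacle here: the only mild care needed is the non\-injectivity of $Enc$ (handled as above) and the harmless additive $-1$ coming from the factor $\thalf$ in the success probability, which is absorbed for $n$ large. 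One could alternatively run this through Fano's inequality, but the volume argument is both shorter and gives the sharp leading constant $1 - H(\delta)$.
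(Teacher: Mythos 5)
Your proof is correct, but it takes a genuinely different route from the paper. You use a direct sphere-covering count: group the good inputs by their encoded value $c$, observe that each class sits inside the Hamming ball of radius $\lfloor\delta n\rfloor$ around the single point $Dec(c)$, bound the ball by $2^{H(\delta)n}$, and conclude $2^{n-1}\le 2^{N+H(\delta)n}$, i.e.\ $N\ge (1-H(\delta))n-1$. The paper instead fixes a binary code $C$ of positive rate and minimum distance $2\delta n$, randomly translates it, notes that in expectation a constant fraction of the translated codewords are correctly decoded to within $\delta n$, and argues that any two such codewords must receive distinct encodings (two points within distance $<\delta n$ of a common decoded point would be within $<2\delta n$ of each other, contradicting the minimum distance), so $2^N$ is at least the number of surviving codewords. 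Your argument is more elementary (no appeal to the existence of good codes) and yields the sharper rate $1-H(\delta)$, whereas the paper's code-based argument gives the rate of a distance-$2\delta n$ code (e.g.\ $1-H(2\delta)$ via Gilbert--Varshamov); both constants are positive for $\delta<\tfrac14$ and tend to $1$ as $\delta\to 0$, which is all the application needs. Your handling of the additive $-1$ from the success probability $\tfrac12$ (absorbed for large $n$) is at the same level of rigor as the paper's own treatment of that factor, so there is no substantive gap.
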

\begin{proof}
Let $C \subseteq \{0,1\}^n$ be a binary error correcting code with positive rate and minimum distance $2\delta < \frac{1}{2}$. Then for a random $v \in \{0,1\}^n$, $\hat{C_v} = \{b \in v+C : |Dec(Enc(b)) - b|_1 < \delta n\}$ has expected size $2^{rn}$ for an $r=r(\delta)>0$. Since the minimum distance of $\hat{C_v}$ is $2\delta n$, the values $Enc(b) : b \in \hat{C_v}$ are all distinct, leading to the claim bound.
\end{proof}

The lemma can be extended to the setting where the encoder and the decoder share some randomness.
\begin{corollary}
\label{cor:nocompression}
Let $\delta < \frac{1}{4}$, and let $Enc: \{0,1\}^n \times \{0,1\}^R \rightarrow \{0,1\}^N$ and $Dec: \{0,1\}^N \times \{0,1\}^R \rightarrow \{0,1\}^n$ be functions such that $\Ex_{b,z}[|b - Dec(Enc(b,z),z)|_1] \leq (\delta/2) n$. Then there exists a constant $r = r(\delta)>0$ such that $N \geq rn$, where $\lim_{\delta \rightarrow 0} r(\delta) = 1$.
\end{corollary}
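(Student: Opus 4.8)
The plan is to derive this from the preceding lemma by fixing a single good value of the shared random string $z$, thereby reducing the shared-randomness setting to the deterministic one already handled.

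First I would apply Markov's inequality to the nonnegative random variable $|b - Dec(Enc(b,z),z)|_1$. Since its expectation over the joint choice of $(b,z)$ is at most $(\delta/2)n$, we get
\[
\Pr_{b,z}\bigl[\,|b - Dec(Enc(b,z),z)|_1 \geq \delta n\,\bigr] \leq \tfrac{1}{2},
\]
so that the reconstruction error is strictly below $\delta n$ with probability at least $\tfrac12$ over $(b,z)$ jointly.

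Next I would average over $z$. Writing $p(z) \eqdef \Pr_{b}\bigl[\,|b - Dec(Enc(b,z),z)|_1 < \delta n\,\bigr]$, the bound above says $\Ex_z[p(z)] \geq \tfrac12$, hence there is a fixed string $z_0 \in \{0,1\}^R$ with $p(z_0) \geq \tfrac12$. I would then define the deterministic maps $Enc'(b) \eqdef Enc(b,z_0)$ and $Dec'(c) \eqdef Dec(c,z_0)$; by the choice of $z_0$ these satisfy $|b - Dec'(Enc'(b))|_1 < \delta n$ with probability at least $\tfrac12$ over a uniformly random $b$. Applying the lemma to the pair $(Enc', Dec')$ produces a constant $r = r(\delta) > 0$ with $N \geq rn$ and $\lim_{\delta \to 0} r(\delta) = 1$, which is exactly the claim (the constant $r(\delta)$ is inherited verbatim from the lemma).

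There is essentially no obstacle here: the one point that needs care is that a single fixing $z_0$ of the shared string must be simultaneously good for a constant fraction of the inputs $b$, and this is precisely what the averaging step over $z$ guarantees. Everything else is a direct invocation of the lemma.
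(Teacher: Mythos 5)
Your proof is correct and follows essentially the same route as the paper's: the paper first fixes a good $z$ by averaging the expectation and then applies Markov over $b$, while you apply Markov over $(b,z)$ jointly and then average over $z$ — the same two trivial steps in the opposite order, followed by the same invocation of the lemma.
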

\begin{proof}
There must exist a $z$ such that $\Ex_{b}[|b - Dec(Enc(b,z),z)|_1] \leq (\delta/2) n$. By Markov's inequality, $\Pr_b[|b - Dec(Enc(b,z),z)|_1 > \delta n] \leq \frac{1}{2}$. The claim follows.
\end{proof}

\subsection{Main Result}
The main result of this section is the following.

\begin{theorem}\label{thm:randomizedfull}
There exists an absolute constant $\gamma$ such that the following holds.
Let $(G,e)$ satisfy $\gamma$-weak independence (WI). Then for $t \leq n^{\frac{1}{4}}$, any deterministic $t$-probe data structure for the distribution over GNS instances defined by $(G,e)$ that succeeds with probability $(1-\gamma)$ must satisfy
\begin{align}
(\frac{mwt^4}{n})^{2t} &\ge \Phi^w_{r}(\frac{1}{m}, \frac{\gamma}{t})\\
\frac{m^{t}w}{n} &\ge \Phi^w_{r}(\frac{1}{m^t},\frac{\gamma}{t})
\end{align}
where $w$ is an arbitrary auxiliary weight function, and $t$ is $o(n^{\frac{1}{4}})$.
\end{theorem}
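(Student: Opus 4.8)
### Proof Proposal for Theorem~\ref{thm:randomizedfull}

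\textbf{Overview.} The plan is to follow the deterministic arguments (cell sampling for the first inequality, path sampling for the second), but now carefully accommodate the fact that a randomized-turned-deterministic data structure is only correct on a $(1-\gamma)$ fraction of query-input pairs, and that the ``good'' query point for a given $x_i$ — the one whose probes land in the sampled cells — need not be a point on which the algorithm is correct. The two tools that bridge this gap are \emph{weak shattering} (to argue that $\nu_{x}$ is spread out enough across the partition cells induced by any lookup function that a \emph{majority vote} over the sampled part of $N(x)$ still recovers $b_i$) and \emph{Corollary~\ref{cor:nocompression}} (the robust incompressibility statement that replaces the crude counting bound used in the deterministic proofs). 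First I would verify that weak independence together with standard expansion arguments implies the $(K,\beta,\gamma)$-weak shattering property with $K = \Phi^w_{r}(\beta,\gamma/t)$ for the relevant $\beta$ (this is where robust expansion, rather than edge expansion, is essential: only robust expansion guarantees that, after discarding a $\gamma$-fraction of $\nu_x$'s mass, every cell of a $\beta$-sparse partition holds at most a $1/K$ fraction).

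\textbf{Cell sampling (first inequality).} I would run the probabilistic variant of the cell-sampling procedure: in each of the $t$ tables independently sample a random subset $L_j$ of a $p$-fraction of the $m$ cells, where $p = (n/(mwt^4))^{?}$ is tuned so that $\prod_j$ of the fractions gives the claimed bound; set $Q = \{y : \text{all $t$ probes of $y$ land in } L_1,\dots,L_t\}$. In expectation $\nu(Q)$ is $p^t$, and by weak independence a $\Theta(1)$ fraction of the $x_i$ are such that $\nu_{x_i}(Q)$ is comparable to its expectation $p^t$. For such an $x_i$, I would apply weak shattering to the partition of $V$ induced by the composition of all $t$ lookup functions restricted to paths through $L_1,\dots,L_t$ — this partition is $\beta$-sparse with $\beta = 1/m$ since each leaf corresponds to a fixed cell of $T_t$ which is accessed by at most a $1/m$-ish fraction — so that the algorithm's answers on $N(x_i)\cap Q$, which are correct on all but a $\gamma$-fraction, determine $b_i$ by majority. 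Then the bits in $L_1,\dots,L_t$ form an encoding of a $(1-o(1))$-fraction of $(b_1,\dots,b_n)$ recoverable by a decoder that also knows the $x_i$'s (shared randomness), and Corollary~\ref{cor:nocompression} forces $p^t m t w \gtrsim n$, giving $(mwt^4/n)^{?}\ge (1/p)^{?} = \Phi^w_{r}(1/m,\gamma/t)$ after plugging $1/p = \Phi^w_{r}(1/m,\gamma/t)^{1/(2t)}$ and collecting the polynomial-in-$t$ slack.

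\textbf{Path sampling (second inequality).} Here I would instead sample a single random cell from each table so that the surviving set $Q$ has $\nu(Q) \approx m^{-t}$, hence (by the robust expansion at scale $1/m^t$) $x$-mass at least $\Phi^w_{r}(m^{-t},\gamma/t)\,m^{-t}$ under the auxiliary weight $w$ — this is exactly the point where the auxiliary weight $w$ enters, since we want to count the ``information content'' recovered, measured by $w$, not by $\mu$. Combined with majority decoding over the weakly-shattered $N(x_i)$ (the partition into $m^t$ query-paths is $m^{-t}$-sparse) and Corollary~\ref{cor:nocompression}, this yields $tw \gtrsim \Phi^w_{r}(m^{-t},\gamma/t)\,m^{-t}\,n$, i.e. $m^t w/n \ge \Phi^w_{r}(1/m^t,\gamma/t)$ up to the absorbed constants; I would note the $t \le n^{1/4}$ (or $o(n^{1/4})$) hypothesis is used precisely to ensure the slack terms and the failure probabilities coming from concentration and from weak shattering ($\gamma$-fraction bad events over $n$ points and $t$ probes) stay below the $\delta$ in Corollary~\ref{cor:nocompression}.

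\textbf{Main obstacle.} The delicate part — flagged already in the ``Overview of Techniques'' section — is decoupling the dependencies: the subset $N'(x_i) \subseteq N(x_i)$ on which the data structure is correct depends on the \emph{entire} database (all $x_j$ and all $b_j$), while the sampled cells and hence $Q$ also depend on the database, so a naive union bound over ``which points shatter'' fails. The resolution I would carry out is the one sketched in the paper: define weak shattering purely in terms of the $x$'s and the fixed lookup functions (independent of $N'(x)$ and of $\omega$), show over the choice of $x_1,\dots,x_n$ that all but a $\gamma$-fraction of indices are weakly shattered by \emph{every} relevant partition simultaneously (there are few relevant partitions since they are indexed by the sampled cells), and then argue separately that, conditioned on a fixed shattered $x_i$ and \emph{any} fixed correct-set $N'(x_i)$, majority decoding from $N(x_i)\cap Q$ succeeds with high probability over the sampling alone. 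Stitching these two independent statements together, rather than union-bounding a single joint event, is the technical heart of the proof.
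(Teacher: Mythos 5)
Your overall architecture for the first inequality matches the paper's: sample a constant fraction of cells from each table, use robust expansion to get weak shattering of each $\frac 1m$-sparse level partition (this is the paper's Lemma~\ref{lem:exptoshatweak}, and note it needs only robust expansion, not weak independence, which is used solely to identify the correct answer with $b_i$), decode by a majority vote over the shattered part of $\nu_{x_i}$, handle adaptivity by a level-by-level induction plus a union bound over the sampled cell contents (the paper's analogue of Lemma~\ref{lem:pathmanyshatter}), and finish with Corollary~\ref{cor:nocompression}. Up to the unverified parameter bookkeeping, that half of your plan is essentially the paper's Theorem~\ref{thm:randpathsample}.

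The gap is in your path-sampling argument for the second inequality. You sample a \emph{single} path and account for the recoverable information as $n\cdot \Phi^w_r(m^{-t},\gamma/t)\,m^{-t}$, ``measured by $w$.'' Two things fail here. First, the auxiliary weight $w$ does not measure how many data bits you recover: the $x_i$ are drawn from $\mu$, so the number of decodable indices is governed by $\mu$-mass; $w$ enters the paper only inside Lemma~\ref{lem:exptoshatweak}, as an arbitrary normalized weight certifying expansion and hence shattering, never as a counting measure for data points. Second, and more fundamentally, with one path the majority decoder has no concentration to lean on: $\nu_{x_i}(Q)$ is a single random part of measure comparable to the shattering threshold $1/K$, so the $2\gamma$ error mass of $\nu_{x_i}$ can swamp it with non-negligible probability, and the exact-recovery union bound over the $2^{tw}$ contents $\omega$ that rescued the deterministic single-path proof is unavailable once the structure errs on part of each neighborhood. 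Moreover a single path can only ever recover about $n\Phi m^{-t}\ll n/4$ bits, so Corollary~\ref{cor:nocompression} (which demands Hamming error below $\delta n$, $\delta<\tfrac14$, on the whole vector) simply does not apply. The paper's Theorem~\ref{thm:randcellsample} instead samples $s=\Theta(\gamma^2 n/tw)$ paths, uses Observation~\ref{obs:Kbeta-transfer} to get shattering with parameter $K_k=Km^{k-t}$ at level $k$, recovers all but $O(\gamma)n$ of the bits from an encoding of length $stw\approx \gamma^2 n/4$, and the bound $m^t w/n\gtrsim \Phi^w_r(1/m^t,\gamma/t)$ emerges from requiring the per-level Chernoff exponents (which involve $K_k$ times the sampled measure) to be large, not from a count of data points adjacent to the sampled set. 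You would need to replace your single-path accounting with this (or an equivalent) multi-path scheme for the second inequality to go through.
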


The theorem will follow from Lemma~\ref{lem:exptoshatweak}, and Theorems~\ref{thm:randcellsample} and~\ref{thm:randpathsample} that we prove next.

\subsection{Expansion to Shattering}
\begin{lemma}[Expansion implies shattering]
\label{lem:exptoshatweak}
Let $A_1,\ldots,A_{k}$ be a $\beta$-sparse collection of disjoint
subsets of $V$. Then
\[
\Pr_{x \sim \mu}[x \;\;\;{\mbox is }\;\;\;(K,\gamma){\mbox-weakly
shattered }] \geq (1-\gamma)
\]
for $K=\Phi_{r}(\beta,\frac{\gamma^2}{4}) \gamma^3/16$
\end{lemma}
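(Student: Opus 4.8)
The plan is to prove the contrapositive in a quantitative form: if a random $x \sim \mu$ fails to be $(K,\gamma)$-weakly shattered with probability more than $\gamma$, then the collection $A_1,\ldots,A_k$ must witness a violation of robust expansion at scale $\beta$ and edge-fraction $\tfrac{\gamma^2}{4}$. The natural object to consider is the ``bad'' set $B \subseteq V$ of query endpoints that lie in heavy clusters: for each $x$, let $S_x = \bigcup\{A_i : \nu_x(A_i) > \tfrac1K\}$ be the union of the clusters that are heavy for $x$, and note that $x$ fails to be weakly shattered exactly when $\nu_x(S_x) > \gamma\,\nu(\cup_i A_i)$ — more precisely when $\sum_i(\nu_x(A_i)-\tfrac1K)^+ > \gamma\,\nu(\cup_iA_i)$, which since there are at most $K$ heavy clusters (they each have $\nu_x$-mass $>\tfrac1K$) is within a factor of $2$ of $\nu_x(S_x) > \gamma\,\nu(\cup_i A_i)$ after a small adjustment in constants.

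First I would dispose of a normalization issue: by splitting and a scaling argument (as in Observation~\ref{obs:Kbeta-transfer}) we may assume $\nu(\cup_i A_i)$ is at least a constant, say $\tfrac12$, or alternatively carry it along; the cleanest route is to restrict attention to $A := \cup_i A_i$ and work with the conditional edge distribution on $A$. Next, the key counting step: consider the edge distribution $e$ restricted to endpoints in $V$ landing in heavy clusters. Summing over $i$, the total $e$-mass of edges $(u,v)$ with $v \in A_i$ and $v$ heavy-for-$u$ is, by assumption on shattering failure and Markov, at least $c\gamma \cdot \nu(A)$ for a constant $c$ (using that a $\gamma$ fraction of $x$'s contribute $\gtrsim\gamma\nu(A)$ each). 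Now here is where robust expansion enters. Fix a cluster $A_i$. The set of $u \in U$ for which $A_i$ is heavy, i.e. $\nu_u(A_i) > \tfrac1K$, call it $B_i$. Since $\nu_u(A_i) = e(u,A_i)/e(u,V) = e(u,A_i)/\mu(u)$, heaviness says $e(u,A_i) > \tfrac1K \mu(u)$. Summing over $u \in B_i$: $e(B_i, A_i) > \tfrac1K \mu(B_i)$, hence $\mu(B_i) < K \cdot e(B_i,A_i) \le K \cdot e(U,A_i) = K\,\nu(A_i)$. With $K = \Phi_r(\beta,\tfrac{\gamma^2}{4})\cdot \tfrac{\gamma^3}{16}$ this makes $\mu(B_i)$ smaller than $\Phi_r(\beta, \tfrac{\gamma^2}{4})\cdot\nu(A_i)$ up to the $\gamma^3/16$ slack, so by the definition of robust expansion $e(B_i, A_i) \le \tfrac{\gamma^2}{4}\,e(U,A_i)$ — the heavy edges into $A_i$ are a tiny fraction of all edges into $A_i$. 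Summing this over $i$ gives that the total mass of heavy edges is at most $\tfrac{\gamma^2}{4}\,\nu(A)$, and choosing the constants in $K$ so that $\tfrac{\gamma^2}{4} < c\gamma$ contradicts the lower bound from the previous paragraph.

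The step I expect to be the main obstacle is the bookkeeping that ties together the two inequalities — the ``shattering fails with probability $\ge\gamma$, so heavy edge mass is $\gtrsim\gamma^2\nu(A)$'' direction and the ``per-cluster robust expansion caps heavy edge mass at $\tfrac{\gamma^2}{4}\nu(A)$'' direction — because one must be careful that the robust-expansion bound is applied to each $B_i$ at a scale $\nu(A_i) \le \beta$ (which holds since the collection is $\beta$-sparse), and that the slack factors $\gamma, \gamma^2, \gamma^3$ compose correctly to yield exactly $K = \Phi_r(\beta,\tfrac{\gamma^2}{4})\gamma^3/16$. A secondary subtlety is handling the $(\cdot)^+$ truncation versus the cleaner ``$\nu_x(S_x)$'' quantity and the bound on the number of heavy clusters; this only costs constant factors but needs to be done explicitly. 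Everything else — Markov, Fubini to swap the $x$-average with the sum over $i$, and the elementary inequality $e(u,A_i) = \nu_u(A_i)\mu(u)$ — is routine.
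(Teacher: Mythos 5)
Your proposal is correct, and it takes a genuinely different (and in fact simpler) route than the paper's proof. You run one global double count over all clusters: each failing $x$ has $\nu_x(S_x)\ge\sum_i(\nu_x(A_i)-\tfrac1K)^+>\gamma\,\nu(\cup_iA_i)$, and the failing set has $\mu$-mass more than $\gamma$, so the total heavy-edge mass $\sum_i e(B_i,A_i)$ exceeds $\gamma^2\nu(\cup_iA_i)$; on the other hand $e(u,A_i)>\mu(u)/K$ for $u\in B_i$ gives $\mu(B_i)\le K\,e(B_i,A_i)\le K\,\nu(A_i)<\Phi_{r}(\beta,\tfrac{\gamma^2}{4})\,\nu(A_i)$, so by definition of robust expansion $e(B_i,A_i)<\tfrac{\gamma^2}{4}e(U,A_i)$, and summing over the disjoint $A_i$'s caps the heavy-edge mass at $\tfrac{\gamma^2}{4}\nu(\cup_iA_i)$ --- a contradiction since $\gamma^2>\gamma^2/4$. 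The paper instead truncates via Markov ($\nu_x(\cup_iA_i)\le 2\eta/\gamma$), builds a bipartite heaviness graph, prunes clusters receiving little heavy mass, and uses a per-$x$ degree bound plus averaging to exhibit a single cluster $A_{i^*}$ whose heavy neighborhood $B_{i^*}$ is small yet captures a $\tfrac{\gamma^2}{4}$-fraction of its edges. Your argument avoids all of that machinery, and the $\gamma^3/16$ slack in $K$ is used only to guarantee $K\le\Phi_{r}(\beta,\tfrac{\gamma^2}{4})$, so you actually prove a quantitatively stronger version of the lemma as stated. What the paper's heavier argument buys is generality: the degree-bound/averaging step works with an arbitrary auxiliary weight function $w$ on $U$ (yielding the lemma for $\Phi^w_{r}$, which is what Theorem~\ref{thm:randomizedfull} invokes), whereas your double count is tied to $w=\mu$, since heaviness $\nu_u(A_i)>1/K$ lower-bounds $e(u,A_i)$ only against $\mu(u)$ and gives no control on $w(B_i)$ for general $w$. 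Two cosmetic slips in your write-up do not affect validity: the lower bound on the heavy-edge mass is $\gamma^2\nu(\cup_iA_i)$ (your ``constant $c$'' is really $\gamma$, and no Markov step is needed there), which still beats $\tfrac{\gamma^2}{4}$; and neither the normalization of $\nu(\cup_iA_i)$ nor the ``within a factor of $2$'' comparison is needed, since only the one-sided inequality $\nu_x(S_x)\ge\sum_i(\nu_x(A_i)-\tfrac1K)^+$ enters the argument.
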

\begin{proof}
We will show that if $\Pr_{x \sim \mu}[x \;\;\;{\mbox is }\;\;\;(K,\gamma){\mbox-weakly
shattered}] \leq (1-\gamma)$, then one of the $A_i$'s does not expand enough, thus deriving a contradiction.

Let $\eta= \nu(\cup A_i)$. Observe that $\Ex_{x \sim \mu} [\nu_x(\cup A_i)] = \nu(\cup A_i) = \eta$. Thus by Markov's inequality, $\Pr_{x \sim \mu}[\nu_x(\cup A_i) \geq \frac{2\eta}{\gamma}] \leq \frac{\gamma}{2}$.

Suppose that $\Pr_{x \sim \mu}[x \;\;\;{\mbox is }\;\;\;(K,\gamma){\mbox-weakly
shattered}] \leq (1-\gamma)$. Thus for at least $\frac{\gamma}{2}$ fraction of $x$'s (drawn from $\mu$), $\nu_x(\cup A_i) \leq \frac{2\eta}{\gamma}$ and yet $x$ is not $(K,\gamma)$-weakly shattered. Let $B$ be the set of such $x$'s. In other words, the set $B$ satisfies
\begin{itemize}
\item $\mu(B) \geq \frac{\gamma}{2}$.
\item For each $x\in B$, $\nu_x(\cup A_i) \leq \frac{2\eta}{\gamma}$.
\item For each $x\in B$, $\sum_i (\nu_x(A_i) - \frac{1}{K})^+ \geq \gamma \eta$.
\end{itemize}

Construct an weighted graph $H_0$ on $B \times [k]$, where we put an edge between $(x,i)$ with weight $\mu(x)\nu_x(A_i) = e(x,A_i)$ if $\nu_x(A_i) \geq \frac{1}{K}$. Thus $e_{H_0}(x,i) \leq e(x,A_i)$ so that $e_{H_0}(B,i) \leq \nu(A_i)$.

The (unweighted) degree of each node $x\in B$  in $H_0$ is at most $\frac{\nu_x(\cup A_i)}{\frac{1}{K}} \leq 2\eta K/\gamma$, since each edge incident on $x$ in $H_0$ contributes at least $\frac{1}{K}$ to $\nu_x(\cup A_i)$. Moreover, by the properties of $B$ above the total edge weight in $H_0$  $e_{H_0}(B,[k])$ is at least $(\frac{\gamma}{2}) (\gamma \eta)  = \frac{\gamma^2\eta}{2}$.

Let $H_1$ be the subgraph of $H_0$ formed by deleting all nodes $i \in [k]$ such that the total edge weight $e_{H_0}(B,i)$ incident on $i$ is at most $\frac{\gamma^2}{4} \nu(A_i)$. The total edge weight deleted in the process is at most $\frac{\gamma^2}{4} \eta$ so that the total edge weight in $H_1$  $e_{H_1}(B,[k])$ is at least $\frac{\gamma^2\eta}{4}$.

Let $R \subseteq [k]$ be the set of nodes on the right surviving in $H_1$. Since each node in $B$ has unweighted degree at most $2\eta K/\gamma$ in $H_1$, $\sum_{i \in R} w(N_{H_1}(i)) \leq (2\eta K/\gamma) \sum_{x \in B} w(x) \leq 2 \eta K/\gamma$. On the other hand, $\sum_{i \in R} \nu(A_i)$ is at least the total weight $e_{H_1}(B,R)$ of edges in $H_1$ which is lower bounded by $\frac{\gamma^2 \eta}{4}$. It follows by an averaging argument that there is a node $i^* \in S$ such that
$$ \frac{w(N_{H_1}(i^*)}{\nu(A_{i^*})} \geq \frac{\sum_{i \in S} w(N_{H_1}(i))}{\sum_{i\in S} \nu(A_i)} \leq \frac{2\eta K/\gamma}{ \gamma^2 \eta/4} = \frac{8K}{\gamma^3}.$$

Let $B_{i^*} = N_{H_1}(i^*)$. Since $i^* \in R$, we have $e(B_{i^*},A_{i^*}) \geq \frac{\gamma^2}{4} \nu(A_{i^*})$. Since $\nu(A_{i^*}) \leq \beta$ by assumption, $$\Phi_{r}(\beta,\frac{\gamma^2}{4}) \geq \Phi_{r}(A_{i^*},\frac{\gamma^2}{4}) = \frac{w(B_{i^*})}{\nu(A_{i^*})} \geq \frac{8K}{\gamma^3},$$ which contradicts the definition of $K$.
\end{proof}

\subsection{Path Sampling}

In this section, we show the following theorem
\begin{theorem}
\label{thm:randcellsample}
There exists a constant $\gamma>0$ such that the following holds.
Let $(G,e)$ satisfy $\gamma$-weak independence (WI) and $(K,\frac{1}{m^t},\frac{\gamma}{t})$-weak shattering property. Then for $t \leq n^{\frac{1}{4}}$, any deterministic $t$-probe data structure for the distribution over GNS instances defined by $(G,e)$ that succeeds with probability $(1-\gamma)$ must use space $m$ at least $\Omega((nK^2/w)^{\frac{1}{t}})$.
\end{theorem}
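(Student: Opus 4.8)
The plan is to mimic the deterministic path-sampling argument (the proof of inequality~\eqref{eq:det2}) but account for two new difficulties: the data structure only succeeds with probability $(1-\gamma)$ over the random instance and the algorithm's randomness, and a query $y$ sampled from $\nu_{x_i}$ need not land among the ``good'' query points for $x_i$. First I would set up path sampling exactly as before: iteratively pick a single cell from each of the $t$ tables, choosing at step $j$ the cell in $T_j$ that is probed (as the $j$-th probe, conditioned on the earlier choices) by the largest $\nu$-mass of query points, so that after $t$ steps we obtain a set $Q \subseteq V$ of query points that probe only the $t$ selected cells, with $\nu(Q) \geq m^{-t}$. The content of these $t$ cells is only $tw$ bits. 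The data-set points whose entire neighborhood-query-measure concentrates on $Q$ must have their bits $b_i$ recoverable from these $tw$ bits, and weak independence guarantees that such points genuinely have $x_i$ as a unique neighbor, so a correct algorithm must output $b_i$; counting gives $tw \gtrsim n \cdot (\text{something})$, which after rearranging yields the claimed bound $m \geq \Omega((nK^2/w)^{1/t})$.

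The place where the new ideas enter is in arguing that a non-negligible fraction of the $x_i$'s really do have their query mass on $Q$ and really are answered correctly. Here I would invoke the $(K,\tfrac1{m^t},\tfrac{\gamma}{t})$-weak shattering hypothesis applied to the partition of $V$ induced by the first lookup function (refined by the later ones along the sampled path) — which is $\tfrac1{m^t}$-sparse once we condition along a path, since each of the $m^t$ root-to-leaf ``transcripts'' captures at most a $1/m^t$ fraction in the worst case after the greedy choice, or more precisely we apply shattering to the $m$ (or $m^t$) pieces at the relevant level. Weak shattering says that for all but a $\gamma$ fraction of $x$'s (over $\mu$), the mass $\nu_x$ spills over the $1/K$ threshold on at most a $\tfrac{\gamma}{t}$ fraction; using Observation~\ref{obs:shatternut} I would pass to the shaved sub-measure $\hat\nu_x$, which dominates $\nu_x$ minus a small loss and puts mass $\leq 1/K$ on each piece. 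The key quantitative consequence is that, conditioned on $x_i$ being shattered, the probability (over the choice of the sampled path and over which of $x_i$'s query points is drawn) that the query $y \sim \nu_{x_i}$ both lands in $Q$ and lies among the $1-O(\gamma)$ fraction of $N(x_i)$ on which the algorithm is correct, is at least something like $\Omega(1/(Km^{\,t-?}))$ — so that an expected $\Omega(n K^{-1}\cdot(\cdot))$ points contribute a recoverable bit. Combining the shattering failure probability $\gamma$, the algorithm's error probability $\gamma$, and the weak-independence slack $\gamma$ via a union bound leaves a constant fraction of $x_i$'s in good shape; I would then fix the best choice of the algorithm's randomness and the ``bad event'' conditioning, and apply the incompressibility Corollary~\ref{cor:nocompression} (with encoder = the $tw$ bits in the sampled cells plus the small auxiliary advice, decoder = simulate the $t$ probes for each recovered query) to conclude $tw + (\text{advice}) \geq r(\delta)\cdot(\text{number of recoverable bits})$.

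The main obstacle, and where I expect the proof to need the most care, is the dependency tangle flagged in Section~\ref{sec:techniques}: the set of query points $Q$ depends on the sampled cells, which depend on the table contents, which depend on \emph{all} the $x_j$'s and $b_j$'s; the ``good'' sub-neighborhood $N'(x_i)$ on which the algorithm is correct also depends on the whole database; and shattering of $x_i$ must be asserted independently of both. The resolution is to define weak shattering purely as a property of $x_i$ relative to a \emph{fixed} sparse partition (which is why the definition is stated the way it is), prove shattering holds with probability $1-\gamma$ over $x_i \sim \mu$ for every fixed partition, and only \emph{afterward} let the partition be the (random but database-determined) one induced by the lookup functions — taking a union bound over the at most $m^t$ possible root-to-leaf transcripts, which is affordable because $w, m$ are such that $m^t$ is sub-exponential in $n$ in the regime of interest (the $t \leq n^{1/4}$ hypothesis is what makes this counting work). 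I would handle the correctness-set dependency symmetrically: argue that for a shattered $x_i$ the majority vote over the recoverable query points is correct \emph{whatever} the fixed good-set $N'(x_i)$ is, using that $\hat\nu_{x_i}$ spreads mass so thinly ($\leq 1/K$ per cell) that no single cell's answer dominates, so a $1-O(\gamma)$ correct fraction of $N(x_i)$ forces the majority among the sampled-path queries to be correct with probability $1-O(\gamma)$ over the sampling alone. Once these independence issues are surgically separated, the rest is the bookkeeping of the deterministic proof with $\Phi_v$ replaced by the shattering parameter $K$, yielding the stated space lower bound.
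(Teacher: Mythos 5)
There is a genuine gap, and it is in the heart of the argument: your sampling scheme. You propose to sample a \emph{single} (greedily chosen) path, so that the encoding is only the $tw$ bits in those $t$ cells, and then to decode by a majority vote of $\nu_{x_i}$ restricted to the one resulting set $Q$ with $\nu(Q)\ge m^{-t}$. In the randomized setting this cannot work: weak shattering only caps the $\nu_{x_i}$-mass of each piece of the induced partition by $1/K$ (after shaving); it says nothing about how the algorithm's error mass (a $\Theta(\gamma)$ fraction of $\nu_{x_i}$, chosen by a data structure that sees the whole database) is distributed \emph{within and among} pieces. The error mass can completely fill the particular pieces in which $x_i$ has mass, so for a single sampled piece the majority vote can be wrong for essentially all of the $x_i$'s you are counting, and there is no concentration to appeal to --- one sample has no variance reduction. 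This is exactly why the paper's proof of Theorem~\ref{thm:randcellsample} samples $s=\Theta(\gamma^2 n/tw)$ \emph{random} paths (so the encoding is a small constant fraction of $n$ bits, not $tw$ bits), defines the shaved measures $\nut^k_x$ level by level, and proves the events $Rep_k$ and $Small_k$ inductively by Chernoff bounds: correctness of the majority needs the retained mass per point, roughly $s/(2m)^k$ at level $k$, to dominate the per-piece cap $1/K_k$ (with $K_k=Km^{k-t}$ via Observation~\ref{obs:Kbeta-transfer}, since the level-$k$ partition is only $1/m^k$-sparse, not $1/m^t$-sparse as you assert), and it is precisely this requirement $Ks/m^t\gtrsim t^2\log(t/\gamma)$ that produces the space bound. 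Your ``counting gives $tw\gtrsim n\cdot(\text{something})$'' leaves unspecified where $K$ enters, and no single-path count can be completed because the per-bit correctness cannot be certified.

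The dependency resolution you sketch is also not the one that works. You propose a union bound over the at most $m^t$ root-to-leaf transcripts and claim it is affordable because $t\le n^{1/4}$; but $m^t$ need not be subexponential in $n$ (e.g.\ $m=n^{10}$), and in any case the object that must be fixed before invoking shattering over the randomness of the $x_i$'s is not the transcript identity but the \emph{contents} of the sampled cells, since the adaptive partitions at levels $2,\dots,t$ and the correctness sets are determined by those contents. The paper's Lemma~\ref{lem:pathmanyshatter} does the union bound over the pair $(\Lambda,C)$, i.e.\ the sampled locations and their contents, which costs $2^{st(w+\log 2m)}$ and is affordable exactly because $s$ was chosen so that $st(w+\log 2m)\le \gamma^2 n/4$; the hypothesis $t\le n^{1/4}$ is used elsewhere (to absorb deviation terms such as $k\sqrt{\log(t/\gamma)/s}$), not to make this union bound work. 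Your high-level instincts --- shaved measures via Observation~\ref{obs:shatternut}, majority decoding, fixing partitions before letting the database vary, and concluding via Corollary~\ref{cor:nocompression} --- match the paper, but without the many-path sampling and the inductive $Rep$/$Small$ concentration argument the proof does not go through.
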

{\bf Proof Sketch:} Suppose that a datastructure with $m < (\gamma^6nK/4 t^4w)^{\frac{1}{t}}$ exists that succeeds with probability $(1-\gamma^3)$. We use it construct a randomized encoding and decoding algorithm for a random binary vector $b \in \{0,1\}^n$. We sample $x_1,\ldots,x_n$ from $\mu$, and build the data structure for the database $(x_1,b_1),\ldots,(x_n,b_n)$ to get $t$ tables $T_1,\ldots,T_t$ where each $T_i$ contains $m$ cells of $w$ bits each.

We show how to sample $s$ cells from each table, for a suitable $s$ and let $Enc(b,z)$ be the contents of those cells (where $z$ is used as the randomness to pick the $x_i$'s and in the sampling process). The decoding algorithm essentially takes the majority answer in $\nu_{x_i}$, restricted to queries that the data structure can answer based on the sampled cells, as its guess for $b_i$. The success of the data structure, and the WI property, imply that the answer on $\nu_{x_i}$ is equal to $b_i$ with probability $(1-\gamma)$, for most $x_i$'s. We show that the weak shattering property is sufficient to guarantee (using Chernoff bounds) that the majority answer on the restriction of $\nu_{x_i}$ is still equal to $b_i$ with high probability. For suitably small $\gamma$, this violates corollary~\ref{cor:nocompression}.

Intuitively, the $t$ lookup functions break $G$ into $m^t$ pieces. We could sample $s$ of these pieces. If all $x_i$'s were strongly shattered, each piece has little influence on measure in $\nu_{x_i}$ that can be looked up from the sample. For large enough $s$, Chernoff bounds would then imply that the restricted measure has a large probability of answering $b_i$ as well, completing the proof.

The proof below, while following the above intuition, is made complicated by several factors. The lookup functions are adaptive so that the $m^t$ pieces that $V$ breaks into depends on the table contents. Thus the shattering itself depends on the table contents sampled, making a one-shot sampling argument untenable. We instead give an inductive proof, that handles these dependencies. The weaker shattering assumption forces us to slightly change the decoding algorithm, to take a majority under a modified measure. Additionally, the pieces may be of different sizes, and we need to break large pieces to ensure sparseness.

We are now ready to present a detailed proof.

\begin{proof}
We assume the contrary so that for $m < (\gamma^6nK/4 t^4w)^{\frac{1}{t}}$, there is $t$-probe space $m$ data structure that succeeds with probability $1-\gamma^4$ on the distribution defined by $(G,e)$. We use this data structure to construct functions $Enc$ and $Dec$ violating corollary~\ref{cor:nocompression}. We use the auxiliary input $z$ as shared randomness between $Enc$ and $Dec$ throughout this proof.

We first sample points $x_1,\ldots,x_n$ from $\mu$ and use $(x_1,b_1),\ldots,(x_n,b_n)$ as the database. Note that when $b$ and $z$ are random, $(x_i,b_i)$ are distributed according to $(G,e)$, and hence for a random $y$ drawn from the appropriate distribution, the data structure will return the right answer for $y$ with probability $1-\gamma^3$. By Markov's inequality with probability $(1-\gamma)$, it is the case that except for $(1-\gamma)$ of the $i$'s, the data structure answers correctly with probability $(1-\gamma)$ when $y$ is drawn from $\nu_{x_i}$. By WI, except with probability $\gamma$, $y$ has a unique neighbor in $x_1,\ldots,x_n$ so that the correct answer is $b_i$. Thus except for a $\gamma$ fraction of the $i$'s, with probability $(1-2\gamma)$, the data structure returns $b_i$ on a random $y$ from $\nu_{x_i}$. Thus the tables $T=T_1,\ldots,T_t$ are a valid, albeit long, encoding that can be decoded appropriately by taking the majority answer on $\nu_{x_i}$ as a guess for $b_i$. In the rest of the proof, we argue that in fact a random sample of the tables suffices.

\medskip\noindent{\em Sampling Procedure:}
We will sample $s$ ``paths" in the table, where $s$ is set to $s\eqdef \frac{\gamma^2 n}{4tw}$. Further we assume that $w \geq 2\log m$.

Let $F_1 = F_1(y)$, $F_2 = F_2(y,\alpha_1)$, $F_3=F_3(y,\alpha_1,\alpha_2)$  etc. be adaptive lookup functions that the data structure uses.  The $j$th path consists of $t$ cells $\Lambda_{j1},\ldots,\Lambda_{jt}$, one from each of the $t$ tables sampled sequentially. We will also get a telescoping sequence of subsets $V \supseteq A^{(1)} \supseteq, \ldots, A^{(t)}$ where $A^{(k)}$ denotes the set of queries that access the sampled cells at locations $\Lambda_{j1},\ldots, \Lambda_{j(k-1)}$ in the first $k-1$ tables, for some $j \in [s]$. Observe that the cells accessed in $T_k$ depends on the contents of the cells accessed in the previous table.

We first describe how to sample a single path $\Lambda_{11},\ldots,\Lambda_{1t}$. To sample from the first table we look at the partition of $V$ into $m$ parts induced by the value of $F_1(y)$ over all $y \in V$. Let $A^{(10)}_1,\ldots, A^{(10)}_{2m}$ be a $\frac{1}{m}$-sparse partition of $V$ that refines $\{F^{-1}(l): l \in [m]\}$. Such a partitioning can be obtained by starting with $\{F^{-1}(l): l \in [m]\}$ and repeatedly splitting parts larger than $\frac{1}{m}$ into smaller pieces. This splitting can be done arbitrarily, and results in a $\frac{1}{m}$-sparse partitioning containing at most $2m$ parts; we pad this with empty parts to get exactly $2m$ sets $A^{(10)}_1,\ldots, A^{(10)}_{2m}$. This corresponds to a table $T_1$ of size $2m$ where the cells corresponding the partitions that were split are replicated appropriately. The first cell of the path is simply obtained by picking a random index $\Lambda_{11}$ into this table. Let $A^{(11)}\eqdef A^{(10)}_{\Lambda_{11}}$ be the sampled part and let $C_{11}$ denote the contents of the corresponding cell.

Inductively, suppose that we have defined $\Lambda_{11},\ldots,\Lambda_{1k}$, cell contents $C_{11},\ldots,C_{1k}$, and set $A^{(11)},\ldots,A_{(1k)}$, so that all for points in $A^{(1k)}$, the query algorithm looks up $\Lambda_{(11)},\ldots,\Lambda_{(1k)}$ in the first $k$ lookups, given the contents $C_{11},\ldots,C_{1(k-1)}$. Inductively, we ensure that $\nu(A^{(1k)}) \leq \frac{1}{m^k}$. The $(k+1)$th lookup function, given the contents $C_{11},\ldots,C_{1k}$ partitions the set $A^{(1k)}$ into $m$ parts, and as above, we can refine this partition to get a $\frac{1}{m^{k+1}}$ sparse partitioning $A^{(1k)}_l : l \in [2m]$. We sample $\Lambda_{1(k+1)}$ uniformly from $[2m]$, and denote by $C_{1(k+1)}$ the contents of the corresponding cell. We define $A^{(1(k+1))}$ to be $A^{(1k)}_{\Lambda_{1(k+1)}}$. Clearly $A^{1(k+1)}$ has the desired inductive properties. Continuing in this fashion, we get $\Lambda_{11},\ldots,\Lambda_{1t}$, $C_{11},\ldots,C_{1t}$ and $A^{(11)},\ldots,A^{(1t)}$.

We repeat the above process $s$ times to get $s$ such paths. The matrix $\Lambda_{jk}$  ($j\in [s], k\in [t]$) denotes the sampled cell locations for the $s$ paths, the matrix $C_{jk}$ denotes the contents of these cells, and the sets $A^{(jk)}$ denotes the telescoping sequence of subsets for each of the $s$ paths.

For a technical reason, each entry in the first column of the $\Lambda$ matrix is drawn independently without replacement from $[2m]$, thus ensuring that this column consists of $s$ random distinct values from $[2m]$. For a matrix $U$ and sets $I,J$ of indices, let $U(I,J)$ denote the submatrix of $U$ indexed by $I$ and $J$.

\medskip\noindent{\em The measure of $\cup_{j\in[s]} A^{(jk)}$:} We first show that the measure $\nu(\cup_{j\in[s]} A^{(jk)})$ is concentrated around $\frac{s}{(2m)^k}$.
\begin{lemma}
\label{lem:numuconcentrated}
$\nu(\cup_{j\in[s]} A^{(jk)})$ is at most $(1+2k\sqrt{\frac{\log (t/\gamma)}{s}}) \frac{s}{(2m)^k}$, except with probability $\frac{k\gamma^2}{t^2}$.
\end{lemma}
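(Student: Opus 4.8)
The plan is to fix the level $k$ and analyze the random variable $X_k \eqdef \nu(\cup_{j\in[s]} A^{(jk)})$ by revealing the sampling process table by table and applying a martingale (or direct union-over-paths) concentration argument. The key structural fact is that the process is a depth-$t$ branching scheme: at level $k$, each of the $s$ paths has an associated set $A^{(jk)}$ of $\nu$-measure at most $\frac{1}{m^k}$ (by the inductive sparseness guarantee built into the sampling procedure), and, conditioned on everything revealed through the first $k-1$ tables, the $k$th cell index $\Lambda_{jk}$ is chosen uniformly from $[2m]$, so that $A^{(jk)} = A^{(j(k-1))}_{\Lambda_{jk}}$ is a uniformly random one of the $2m$ parts refining $A^{(j(k-1))}$. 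Hence $\Ex[\nu(A^{(jk)}) \mid \text{level } k-1] = \frac{1}{2m}\nu(A^{(j(k-1))})$, and by induction $\Ex[\nu(A^{(jk)})] = \frac{1}{(2m)^k}$, giving $\Ex[X_k] = \frac{s}{(2m)^k}$ (the first-column-without-replacement tweak only helps, since sampling without replacement is at least as concentrated as with replacement).

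First I would set up the concentration for a single path. Define, for path $j$, the sequence $Y_{j0} \eqdef 1$, $Y_{jk} \eqdef (2m)^k \nu(A^{(jk)})$; then $Y_{j0},\ldots,Y_{jt}$ is a martingale with $\Ex Y_{jk} = 1$ and multiplicative increments bounded in $[0,2m]$. Rather than control a single path tightly (which one cannot, since $\nu(A^{(jk)})$ can be as large as $\frac{1}{m^k}$, i.e. $Y_{jk}$ as large as $2^k$), I would sum over the $s$ independent paths: $X_k = \frac{1}{(2m)^k}\sum_{j} Y_{jk}$, where the $Y_{jk}$ for distinct $j$ are i.i.d. (up to the without-replacement coupling in the first coordinate, which I would handle by noting it does not increase the variance or the upper tail). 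Each $Y_{jk} \in [0, 2^k]$ and has mean $1$, so by a Chernoff/Hoeffding bound for bounded i.i.d. variables, $\Pr[\sum_j Y_{jk} \geq (1+\theta) s] \leq \exp(-\Omega(\theta^2 s / 4^k))$ for $\theta \in (0,1)$. Setting $\theta = 2k\sqrt{\frac{\log(t/\gamma)}{s}}$ — note $(1+\theta)^{?}$: actually the statement has a factor $(1+2k\sqrt{\log(t/\gamma)/s})$, so I would pick $\theta$ of exactly this form and check that the exponent $-\Omega(\theta^2 s/4^k) = -\Omega(k^2 \log(t/\gamma)/4^k)$ is at most $-\log(t^2/(k\gamma^2))$ for the relevant regime of $k$ (this is where I would need $s$ large enough relative to $4^k$, which holds since $s = \frac{\gamma^2 n}{4tw}$ is polynomial in $n$ while $m^t$, hence $4^k \le 4^t$, is sub-polynomial under the theorem's hypothesis $m < (\gamma^6 nK/4t^4 w)^{1/t}$ and $t = o(n^{1/4})$).

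The main obstacle I anticipate is the mismatch between the crude per-path bound $Y_{jk} \le 2^k$ and the desired failure probability: a naive Hoeffding bound loses a factor $4^k$ in the exponent, and one must verify this is absorbed by the slack in $s$. The cleanest fix is to use a Bernstein-type bound instead, exploiting that $\mathrm{Var}(Y_{jk}) \le \Ex[Y_{jk}^2] = \Ex[(2m)^{2k}\nu(A^{(jk)})^2]$, which one can bound inductively by $\Ex[Y_{j(k-1)}^2]$ times the second moment of a uniform choice among the $2m$ refined parts; since the parts are $\frac{1}{m^k}$-sparse, $\nu(A^{(jk)}) \le \frac{1}{m^k}$ forces $(2m)^k\nu(A^{(jk)}) \le 2^k$, but typically it is $O(1)$, so $\mathrm{Var}(Y_{jk}) = 2^{O(k)}$ and Bernstein gives $\Pr[X_k \ge (1+\theta)\Ex X_k] \le \exp(-\Omega(\min(\theta^2 s/2^{O(k)}, \theta s/2^{O(k)})))$, still adequate. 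I would then take a union bound over the $k$ levels (losing a factor $t$, absorbed into the stated $\frac{k\gamma^2}{t^2}$) to conclude. The remaining steps — checking the inductive sparseness claim $\nu(A^{(jk)}) \le \frac{1}{m^k}$, which is already established in the construction, and the arithmetic verifying the exponent dominates $\log(t^2/(k\gamma^2))$ — are routine.
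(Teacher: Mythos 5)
Your setup is sound and genuinely different from the paper's: you fix the final level $k$, normalize each path as $Y_{jk}=(2m)^k\nu(A^{(jk)})\in[0,2^k]$ with mean $1$, and apply a single concentration bound over the $s$ (negatively associated, by the without-replacement first column) paths. The paper instead argues level by level: conditioned on level $k$, the $s$ increments $\nu(A^{(j(k+1))})$ are independent and each bounded by the sparsity scale $1/m^{k+1}$, so an \emph{additive} Chernoff bound at that scale gives deviation $\sqrt{2s\log(t/\gamma)}/m^{k+1}$ with failure probability $\gamma^2/t^2$ \emph{independently of $k$}, and a union bound over the $k$ levels yields the stated $k\gamma^2/t^2$.

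The genuine gap is in your quantitative step. With $\theta=2k\sqrt{\log(t/\gamma)/s}$, the Hoeffding exponent for variables of range $2^k$ is $\Theta(\theta^2 s/4^k)=\Theta(k^2\log(t/\gamma)/4^k)$: the $s$ cancels, so enlarging $s$ does not help at all, and the bound becomes vacuous already for $k$ beyond a small constant — it cannot deliver failure probability $k\gamma^2/t^2$ uniformly for $k\le t$. Your proposed Bernstein fix inherits the same defect: the variance-dominated term is $\Theta(\theta^2 s/2^{\Theta(k)})$, again independent of $s$ and decaying in $k$, while the linear term $\Theta(\theta s/2^{\Theta(k)})=\Theta(k\sqrt{s\log(t/\gamma)}/2^{\Theta(k)})$ only suffices when $4^k\lesssim s$ up to logarithmic factors, i.e.\ $t=O(\log n)$; your claim that $4^t$ is sub-polynomial because $t=o(n^{1/4})$ is false (e.g.\ $t=\log^2 n$ gives $4^t=n^{2\log n}$). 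The idea you are missing is the paper's conditioning trick: by working one level at a time, the relevant range of each summand is the refined-part sparsity $1/m^{k+1}$, not the accumulated ratio $2^k$ between the worst-case and typical path measure, so the per-level failure probability does not degrade with $k$; only the relative error accumulates across levels. (As a side remark, the accumulation of relative error in the paper's own sketch is also looser than the stated $2k\sqrt{\log(t/\gamma)/s}$ — the per-level relative deviation carries a $2^{\Theta(k)}$ factor — but the failure-probability bookkeeping, which is what your one-shot argument loses, does go through level by level.)
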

\begin{proof}
We argue inductively. For $k=1$, the expected value of $Y_1\eqdef \nu(\cup_{j\in[s]} A^{(j1)})$ is exactly $\frac{s}{2m}$. Moreover, by $\frac{1}{m}$-sparsity of $A_1,\ldots,A_{2m}$, and using Chernoff bounds (for negatively correlated r.v.'s), the deviation from the mean is at most $\sqrt{2s\log (t/\gamma)}/m$, except with probability $\frac{\gamma^2}{t^2}$.

Inductively, the expected value of $Y_{k+1} \eqdef \nu(\cup_{j\in[s]} A^{(j(k+1))})$ is exactly $\frac{1}{2m}Y_k \leq (1+8k\sqrt{\frac{\log (t/\gamma)}{s}}) \frac{s}{(2m)^{k+1}}$  by the induction hypothesis. A Chernoff bound argument implies that the deviation from the mean is at most $\sqrt{2s\log(t/\gamma)}/m^{k+1}$, except with probability $\frac{\gamma^2}{t^2}$. The claim follows.
\end{proof}
	
For the rest of the proof, we will assume that for all $k\leq t$,  $\nuright(\cup_{j\in[s]} A^{(jk)})$ is indeed at most $(1 + 2k\sqrt{\frac{\log (t/\gamma)}{s}}) \frac{s}{(2m)^k} \leq \frac{3}{2} \frac{s}{(2m)^k}$ for as long at $t,w$ are $o(n^{\frac{1}{4}})$.

\medskip\noindent{\em The Encoder:} The encoding is simply set to be the matrix $C$. Note that the matrix $C$, along with $\Lambda$, which is part of the shared randomness, is enough to compute the answer computed by the data structure for every $y \in \cup_{j\in [s]} A^{(jt)}$.

In the rest of the proof, we argue that there is a good decoding algorithm that recovers most of the $b_i$'s. We do so in two steps. We first argue that if $x_i$ shatters at all levels, then the decoding algorithm succeeds with high probability. We then argue that in fact most $x_i$'s must shatter at all levels.

\medskip\noindent{\em If every $x_i$ shatters, the sampled cell contents are sufficient to estimate $b_i$'s:}

We first argue that the sampling process ensures that the majority vote on the sample agrees with the true majority, if the appropriate shattering happens at each level. For ease of notation, for $k \geq 1$, let $K_k = K m^{k-t}$. Note that by Observation~\ref{obs:Kbeta-transfer}, the $(K,\frac{1}{m^t})$-weak shattering implies $(K_k,\frac{1}{m^k})$-weak shattering. We start by defining the shattering event formally.

\begin{definition}[WShatter]
We will say that $WShatter_{k+1}(x,\Lambda([s],[k]), C([s],[k]))$ occurs if the collection $\{ A^{(jk)}_l: j\in[s],l\in[m]\}$ $(K_{k+1},\gamma^2/t)$-weakly shatters $x$. We use the notation $WShatter(x,\Lambda,C)$ to denote the event $\myand_{k \leq t} WShatter_{k}(x,\Lambda([s],[k]), C([s],[k]))$.
\end{definition}

When $\Lambda$ and $C$ are obvious from context, we will simply abbreviate these events as $WShatter_k(x)$ and $WShatter(x)$.

Based on Observation~\ref{obs:shatternut}, we define a sequence of measures. Let $\nut^0_x = \nu_x$. For any $k < t$, consider the collection $\{ A^{(jk)}_l: j\in[s],l\in[m]\}$. Let $\nup^{k+1}_x$ be the measure guaranteed by observation~\ref{obs:shatternut} so that $\nup^{k+1}_x$ is dominated by $\nu_x$, and satisfies $\nup^{k+1}_x(A^{(jk)}_l) \leq \frac{1}{K_{k+1}}$. Moreover, assuming $WShatter_{k+1}(x)$, $\nu_x(\cup_{j\in [s],l\in [2m]} A^{(jk)}_l)-\nup^{k+1}_x(\cup_{j\in [s],l\in [2m]} A^{(jk)}_l)$ is small. We set $\nut^{k+1}_x(y) = \min(\nut^{k}_x(y),\nup^{k+1}_x(y))$ for every $y \in V$. Thus  $\nut^{k+1}_x$ is a part of $\nut^k_x$ that is shattered at level $(k+1)$. We remark that if $x$ was strongly shattered at each level, $\nut^{k}_x = \nu_x$ for all $k$.

Let $V_0$ (resp. $V_1$) denote the set of vertices for which the query algorithm, given the table population, outputs $0$ (resp. $1$). So $V_b$ and $V_{b^c}$ are the set of queries for which the query algorithm outputs the bits $b$ and its complement $b^c$ respectively.

The decoding algorithm works as follows: for each $i$, we would like to compute the majority answer restricted to this set $\cup_{j\in [s]} A^{(jt)}$, under the measure $\nu_{x_i}$ restricted to this set. To deny any one random choice a large influence on the outcome, we take the majority under the measure $\nut^t_{x_i}$. Thus the decoder outputs
 $$\hat{b}_i = arg\,max_b \left( \sum_{j\in[s]}\nut^t_x(V_{b} \cap A^{(jk)})\right).$$

To prove that this decoding is usually correct, we show that the measure of $\cup_j A^{(jk)}$ under $\nut^{k}$ remains close to its expectation, and that the measure of points $y \in \cup_j A^{(jk)}$ where the data structure returns the wrong answer remains small. We define two more events.
\begin{definition}[Rep]
For $k\geq 1$, we let $Rep_{k}(x,\Lambda([s],[k]), C([s],[k-1]))$ be the event
$$
\sum_{j\in [s]} \nut^{k}_x(A^{(jk)}) \geq (1-\frac{3k\gamma}{t}) \frac{s}{(2m)^k}.
$$
\end{definition}

\begin{definition}[Small]
For $k\geq 1$, we let $Small_k(x,b,\Lambda([s],[k]), C([s],[k]))$ denote the event
$$
\sum_{j\in[s]}\nut^{k}_x(V_{b} \cap A^{(jk)}) \leq (2\gamma + \frac{k\gamma}{t})\frac{s}{(2m)^k}.
$$
For convenience, let $Small_0(x,b)$ denote the event $\nu_x(V_b) = \nut^0_x(V_b) \leq 2\gamma$.
\end{definition}

By the discussion above, except with probability $\gamma$, $Small_0(x_i,b_i^c)$ occurs for $(1-\gamma)$ of the $i$'s. We assume that this is indeed the case.

With the definitions in place, we are now ready to argue that each of these events happens for most $x_i$'s. For brevity, we use $Rep_k(x),Small_k(x,b)$ when the other arguments are obvious from context. It is immediate from the definitions that
\begin{lemma}
If $Rep_t(x_i)$ and $Small_t(x_i,b_i^c)$ occur, then the decoding $\hat{b_i}$ agrees with $b_i$.
\end{lemma}

We argue that assuming $WShatter_k(x)$ for each $k$, the events $Rep_t(x)$ and $Small_t(x)$ indeed happen with high probability. The following two lemmas form the base case, and the induction step of such an argument.

We first argue that
\begin{lemma}
\label{lem:pathbasecase}
For any $x,b$,
$$\Pr_{\Lambda([s],1)}[(Rep_{1}(x) \mid WShatter_{1}(x)] \geq 1-\frac{\gamma^2}{t}
$$
$$\Pr_{\Lambda([s],1)}[Small_{1}(x,b)) \mid Small_0(x,b)] \geq 1-\frac{\gamma^2}{t}
$$
\end{lemma}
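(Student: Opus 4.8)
\textbf{Proof proposal for Lemma~\ref{lem:pathbasecase}.}

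The plan is to analyze the single level $k=1$, where the sampling is cleanest because the partition $A^{(10)}_1,\ldots,A^{(10)}_{2m}$ is fixed in advance (it does not depend on any sampled cell contents), and the $s$ indices $\Lambda_{11},\ldots,\Lambda_{s1}$ forming the first column of $\Lambda$ are drawn uniformly without replacement from $[2m]$. For a fixed $x$, write $p_\ell \eqdef \nu_x(A^{(10)}_\ell)$ for $\ell\in[2m]$, so $\sum_\ell p_\ell = \nu_x(V) = 1$, and each sampled path picks one part, contributing $p_\ell$ to $\sum_{j\in[s]}\nu_x(A^{(j1)})$. Thus $\sum_{j\in[s]}\nu_x(A^{(j1)}) = \sum_{j\in[s]} p_{\Lambda_{j1}}$ is a sum of $s$ values sampled without replacement from the population $\{p_\ell\}$, with expectation exactly $\frac{s}{2m}$ (since each index is marginally uniform). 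The idea for both parts is to apply a concentration bound for sampling without replacement (e.g.\ the Serfling/Hoeffding bound, or a Chernoff bound for negatively correlated random variables as already invoked in Lemma~\ref{lem:numuconcentrated}), after first replacing $\nu_x$ by the shattered/truncated measure $\nut^1_x$ so that no single part carries too much mass; this is precisely why the shattering hypothesis $WShatter_1(x)$ (equivalently $Small_0$, which controls $\nu_x(V_b)$ from above) is needed as a conditioning event.

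For the first inequality: condition on $WShatter_1(x)$, i.e.\ the collection $\{A^{(10)}_\ell\}$ $(K_1,\gamma^2/t)$-weakly shatters $x$. By Observation~\ref{obs:shatternut} this gives a measure $\nup^1_x = \nut^1_x$ dominated by $\nu_x$ with $\nut^1_x(A^{(10)}_\ell)\le \frac{1}{K_1}$ for every $\ell$, and $\nut^1_x(V) \ge \nu_x(V) - \frac{\gamma^2}{t}\nu(V) = 1 - \frac{\gamma^2}{t}$, hence $\Ex[\sum_{j}\nut^1_x(A^{(j1)})] \ge (1-\frac{\gamma^2}{t})\frac{s}{2m}$. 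Now $\sum_j \nut^1_x(A^{(j1)})$ is a sum of $s$ samples without replacement from $\{\nut^1_x(A^{(10)}_\ell)\}_\ell$, each term bounded by $\frac{1}{K_1}$. Plugging $K_1 = K m^{1-t}$ and $s = \frac{\gamma^2 n}{4tw}$ into a Hoeffding-type tail bound, the deviation below the mean is $O(\sqrt{(s/K_1^2)\log(t/\gamma)})$; one checks that with the hypothesized bound $m < (\gamma^6 nK/4t^4 w)^{1/t}$ (so $K/m^{t-1}$ is large relative to what is needed) this deviation is at most $\frac{\gamma}{t}\cdot\frac{s}{2m}$ except with probability $\frac{\gamma^2}{t}$. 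Combining the two losses gives $\sum_j \nut^1_x(A^{(j1)}) \ge (1 - \frac{3\gamma}{t})\frac{s}{2m}$ with the claimed probability, which is exactly $Rep_1(x)$.

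For the second inequality: condition on $Small_0(x,b)$, i.e.\ $\nu_x(V_b)\le 2\gamma$. Here $\sum_{j}\nut^1_x(V_b\cap A^{(j1)}) \le \sum_j \nu_x(V_b \cap A^{(j1)})$ (dominance), and the right side is again a without-replacement sum of $s$ values $q_\ell \eqdef \nu_x(V_b\cap A^{(10)}_\ell) \le p_\ell$ with $\sum_\ell q_\ell = \nu_x(V_b) \le 2\gamma$, so its expectation is $\le \frac{2\gamma s}{2m}$. The tricky point is that the individual $q_\ell$ need not be small a priori (a single part could contain a lot of $V_b$-mass) — but this is exactly where we truncate: we may assume each $q_\ell \le p_\ell \le \frac1{K_1}$ after the same shattering replacement, or more directly absorb any part with $p_\ell > \frac1{K_1}$ into the deterministic ``expectation shift'' term, since there are at most $K_1$ such heavy parts and the chance of sampling one among $s$ draws is controllable. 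A Chernoff/Hoeffding bound for the bounded, negatively correlated terms then shows the sum exceeds $(2\gamma + \frac{\gamma}{t})\frac{s}{2m}$ with probability at most $\frac{\gamma^2}{t}$, giving $Small_1(x,b)$.

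\textbf{Main obstacle.} The delicate step is not the concentration inequality itself but getting the parameters to line up: one must verify that $s = \frac{\gamma^2 n}{4tw}$ samples, each of magnitude at most $\frac{1}{K_1} = \frac{m^{t-1}}{K}$, really do concentrate to within a $(1\pm O(\gamma/t))$ factor of a mean of size $\frac{s}{2m}$ — this requires $s \cdot \frac{1}{K_1^2} \cdot \log(t/\gamma) \ll (\gamma/t)^2 (s/2m)^2$, i.e.\ $K_1^2 \gg (2m)^2 (t/\gamma)^2 \log(t/\gamma)/s$, and one has to check this is implied by the assumed upper bound on $m$ together with $w\ge 2\log m$ and $t = o(n^{1/4})$. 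The second, more conceptual, subtlety is handling parts that are too heavy for $\nu_x$ itself (before truncation): one must be careful that the event being conditioned on ($WShatter_1$, or $Small_0$) is genuinely sufficient to perform the truncation to $\nut^1_x$ cleanly, and that the small additive loss from truncation ($\le \frac{\gamma^2}{t}\nu(\cup A_\ell)$) is correctly folded into the $\frac{3k\gamma}{t}$ and $(2\gamma+\frac{k\gamma}{t})$ slacks in the definitions of $Rep$ and $Small$ rather than being double-counted.
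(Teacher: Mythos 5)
Your proposal follows essentially the same route as the paper: under $WShatter_1(x)$ the truncated measure $\nut^1_x$ retains mass at least $1-\gamma^2/t$ and has every part bounded by $1/K_1$, so the sum over the $s$ without-replacement (negatively correlated) samples concentrates by a Chernoff/Hoeffding bound, and $Small_1$ is the analogous upper-tail estimate with the expectation controlled via dominance and $Small_0$. One small remark: your fallback of ``absorbing heavy parts'' for the $Small_1$ bound is neither needed nor sound as stated (the chance of sampling a heavy part among $s$ draws need not be small for these parameters), but your primary route is exactly right because $Small_1$ is already defined in terms of $\nut^1_x$, which satisfies $\nut^1_x(A^{(10)}_\ell)\le \tfrac{1}{K_1}$ by construction (Observation~\ref{obs:shatternut}) regardless of whether the shattering event holds.
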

\begin{proof}
Assuming $WShatter_1(x)$, we have
$$ \sum_{l=1}^{2m} \nut^1_x(A^{(10)}_l) \geq \sum_{l=1}^{2m} \nu_x(A^{(10)}_l) - \frac{\gamma^2}{t} \nu(\cup_{l=1}^{2m} A^{(10)}_l) \geq 1 - \gamma^2/t.$$

Since the $A^{(j1)}$  are drawn at random without replacement from these $2m$ sets,  each $A^{(10)}_l$ contributes $\frac{s}{2m}\nut^1_x(A^{(10)}_l)$ to the expectation of $Y \eqdef \sum_{j\in [s]} \nut^{1}_x(A^{(j1)})$ and this lower bounds $\Ex[Y]$. Finally, since these terms are negatively correlated, Chernoff bounds imply
$$ Y \geq (1-\frac{\gamma}{t})\Ex[Y],$$
except with probability $\exp(-\gamma^2 K_1\Ex[Y]/2t^2)$, since each term in $Y$ is in $(0,\frac{1}{K_1})$. The claim follows.

Next we argue $Small_1(x,b)$, i.e. we need to upper bound $\sum_{j\in[s]}\nut^{1}_{x}(V_{b} \cap A^{(jk)})$. The bound on the expectation follows from $Small_0(x,b)$ and the sampling. A Chernoff bound argument identical to that for $Rep_1$ completes the proof.
\end{proof}

Similarly, we argue  that
\begin{lemma}
\label{lem:pathindcase}
For any $k \geq 1$, any $x,b$,
$$\Pr_{\Lambda([s],{k+1})}[Rep_{k+1}(x) \mid (Rep_{k}(x) \myand WShatter_{k+1}(x))] \geq 1-\frac{\gamma^2}{t}
$$
$$\Pr_{\Lambda([s],{k+1})}[Small_{k+1}(x,b) \mid Small_k(x,b)] \geq 1-\frac{\gamma^2}{t}
$$
\end{lemma}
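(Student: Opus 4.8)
The plan is to mimic the base case (Lemma~\ref{lem:pathbasecase}) at level $k+1$, treating the set $A^{(jk)}$ produced at level $k$ as the "new $V$" and conditioning on the contents $C([s],[k])$ so that the lookup function $F_{k+1}$ becomes a fixed function on each $A^{(jk)}$. First I would fix $j\in[s]$ and recall that, given $C([s],[k])$, the cell contents along path $j$ determine a partition of $A^{(jk)}$ into at most $m$ parts, which we refined into the $\frac{1}{m^{k+1}}$-sparse family $\{A^{(jk)}_l : l\in[2m]\}$, and that $\Lambda_{j(k+1)}$ is chosen uniformly from $[2m]$, with $A^{(j(k+1))} = A^{(jk)}_{\Lambda_{j(k+1)}}$. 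So $\sum_{j\in[s]}\nut^{k+1}_x(A^{(j(k+1))})$ is a sum of $s$ independent (across $j$, since for $k+1\geq 2$ the columns are sampled with replacement and independently across paths) contributions, each of which is $\nut^{k+1}_x(A^{(jk)}_{\Lambda_{j(k+1)}})$, i.e. a uniformly chosen one of the $2m$ pieces of $A^{(jk)}$.

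For the $Rep$ part: conditioning on $Rep_k(x)$ gives $\sum_{j\in[s]}\nut^k_x(A^{(jk)}) \geq (1-\frac{3k\gamma}{t})\frac{s}{(2m)^k}$, and conditioning on $WShatter_{k+1}(x)$ lets me pass from $\nut^k_x$ to $\nut^{k+1}_x$ losing only $\frac{\gamma^2}{t}\,\nu(\cup_{j,l} A^{(jk)}_l) \leq \frac{\gamma^2}{t}\cdot\frac{3}{2}\frac{s}{(2m)^k}$ in total mass (using the measure-concentration bound from Lemma~\ref{lem:numuconcentrated} that was assumed to hold for all $k\leq t$, and Observation~\ref{obs:shatternut}(c) applied to the collection $\{A^{(jk)}_l\}$). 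Hence $\sum_{j\in[s]}\nut^{k+1}_x(A^{(jk)}) \geq (1-\frac{3k\gamma}{t}-\frac{3\gamma^2}{2t})\frac{s}{(2m)^k}$. Taking expectations over $\Lambda([s],k+1)$, each $A^{(jk)}$ contributes $\frac{1}{2m}\nut^{k+1}_x(A^{(jk)})$ to $\Ex[Y]$ where $Y\eqdef\sum_{j\in[s]}\nut^{k+1}_x(A^{(j(k+1))})$, so $\Ex[Y] \geq (1-\frac{3k\gamma}{t}-\frac{3\gamma^2}{2t})\frac{s}{(2m)^{k+1}}$. Since each summand lies in $(0,\frac{1}{K_{k+1}})$ by the defining property of $\nut^{k+1}_x$ (it is dominated by $\nup^{k+1}_x$), and the summands are independent, a Chernoff bound gives $Y \geq (1-\frac{\gamma}{t})\Ex[Y]$ except with probability $\exp(-\Omega(\gamma^2 K_{k+1}\Ex[Y]/t^2))$; combining the two multiplicative losses $(1-\frac{3k\gamma}{t}-\frac{3\gamma^2}{2t})(1-\frac{\gamma}{t}) \geq 1-\frac{3(k+1)\gamma}{t}$ for small enough $\gamma$, which is exactly $Rep_{k+1}(x)$. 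A computation using $K_{k+1}=Km^{k+1-t}$, $\Ex[Y]=\Theta(s/(2m)^{k+1})$, $s=\frac{\gamma^2 n}{4tw}$ and the contrapositive bound $m^t < \gamma^6 nK/4t^4w$ shows the failure probability is below $\frac{\gamma^2}{t}$.

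For the $Small$ part the argument is identical with $V_b\cap A^{(jk)}$ in place of $A^{(jk)}$: conditioning on $Small_k(x,b)$ bounds $\sum_j \nut^k_x(V_b\cap A^{(jk)})$, and since $\nut^{k+1}_x \leq \nut^k_x$ pointwise no shattering is needed for the upper bound; taking the expectation over $\Lambda([s],k+1)$ scales it by $\frac{1}{2m}$, and the same Chernoff bound (now an upper tail, using that each term is in $(0,\frac{1}{K_{k+1}})$) gives the claimed inequality $\sum_{j}\nut^{k+1}_x(V_b\cap A^{(j(k+1))}) \leq (2\gamma+\frac{(k+1)\gamma}{t})\frac{s}{(2m)^{k+1}}$ except with probability $\frac{\gamma^2}{t}$. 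I expect the main obstacle to be bookkeeping the several small additive/multiplicative error terms so that the constants telescope cleanly across all $t$ levels (the $\frac{3k\gamma}{t} \to \frac{3(k+1)\gamma}{t}$ and $\frac{k\gamma}{t}\to\frac{(k+1)\gamma}{t}$ steps), and making sure the Chernoff failure probability stays below $\frac{\gamma^2}{t}$ uniformly in $k$ — this is where the precise choice of $s$ and the hypothesized upper bound on $m$ get used, together with the assumption $t,w = o(n^{1/4})$.
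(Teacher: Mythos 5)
Your proposal is correct and follows essentially the same route as the paper: use $WShatter_{k+1}$ (via Observation~\ref{obs:shatternut}) together with $Rep_k$ and Lemma~\ref{lem:numuconcentrated} to lower bound $\sum_{j,l}\nut^{k+1}_x(A^{(jk)}_l)$, hence $\Ex[Y]$, then apply a Chernoff bound using independence across paths and the per-term bound $\frac{1}{K_{k+1}}$, with the analogous upper-tail argument (no shattering needed, by domination $\nut^{k+1}_x\le\nut^k_x$) for $Small_{k+1}$. The bookkeeping of the $\frac{3k\gamma}{t}$-type error terms and the failure-probability calculation are handled the same way as in the paper's proof.
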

\begin{proof}

To prove $Rep_{k+1}(x)$, we use $WShatter_{k+1}(x)$ and $Rep_{k}(x)$. By weak shattering, $\sum_{j\in[s],l} \nut^{k+1}_x(A^{(jk)}_l) \geq \nut^{k}_x(\cup_{j\in[s]} A^{(jk)}) - \frac{\gamma^2}{t} \nu(\cup_{j\in[s]} A^{(jk)})$. By disjointness of the $A^{(jk)}$'s and using $Rep_k(x)$, the first term is at least $(1-\frac{3k\gamma^2}{t}) \frac{s}{(2m)^k}$. Using lemma~\ref{lem:numuconcentrated}, the second term is at most $\frac{\gamma^2}{t} \cdot \frac{3s}{2(2m)^k}$.
This lower bounds the expectation of $Y \eqdef \sum_{j\in [s]} \nut^{k+1}_x(A^{(j(k+1))})$ since each $A^{j(k+1)}$ is chosen uniformly from $A^{(jk)}_l$'s. The choices for different $j$'s are independent, so that Chernoff bounds imply that
$$ Y \geq (1-\frac{\gamma}{t})\Ex[Y],$$
except with probability $\exp(-\gamma^2 K_{k+1}\Ex[Y]/2t^2)$. A calculation identical to the previous lemma implies that $Rep_{k+1}(x)$ occurs in this case.

An identical Chernoff bound argument suffices to show $Small_{k+1}(x,b)$.
\end{proof}

And thus by induction,
\begin{lemma}
\label{lem:pathinduction}
For any $x,b$,
$$\Pr_{\Lambda([s],[t])}[Rep_{t}(x) \wedge \neg WShatter(x)] \geq 1-\gamma^2
$$
$$\Pr_{\Lambda([s],[t])}[Small_{t}(x,b) \mid Small_0(x,b)] \geq 1-\gamma^2
$$
\end{lemma}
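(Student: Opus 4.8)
The plan is to derive \lemmaref{lem:pathinduction} from the base case \lemmaref{lem:pathbasecase} and the inductive step \lemmaref{lem:pathindcase} by a straightforward telescoping union bound over the $t$ levels. For the $Rep$ statement, I would first chain the conditional bounds: conditioned on $WShatter(x) = \myand_{k\leq t} WShatter_k(x)$, I claim that $Rep_t(x)$ holds except with probability at most $t\cdot\frac{\gamma^2}{t} = \gamma^2$. Indeed, \lemmaref{lem:pathbasecase} gives $\Pr[Rep_1(x) \mid WShatter_1(x)] \geq 1-\frac{\gamma^2}{t}$, and \lemmaref{lem:pathindcase} gives $\Pr[Rep_{k+1}(x)\mid Rep_k(x)\myand WShatter_{k+1}(x)]\geq 1-\frac{\gamma^2}{t}$ for each $k$. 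Since the sampling at level $k+1$, i.e. the choice of $\Lambda([s],k+1)$, is independent of the earlier columns $\Lambda([s],[k])$, I can multiply these conditional probabilities: conditioned on $WShatter(x)$, the event $\myand_{k\leq t} Rep_k(x)$ fails with probability at most $\sum_{k=1}^t \frac{\gamma^2}{t} = \gamma^2$ by a union bound over the "first failure" level. In particular $Rep_t(x)$ holds with probability $\geq 1-\gamma^2$ conditioned on $WShatter(x)$, which is exactly the stated inequality $\Pr[Rep_t(x)\wedge \neg WShatter(x)]\geq\ldots$ — here I read $\neg WShatter(x)$ as a slight abuse meaning "on the event $WShatter(x)$" (the statement as written with $\wedge \neg$ appears to be a typo for conditioning on or intersecting with $WShatter(x)$; I would restate it cleanly as $\Pr_{\Lambda([s],[t])}[Rep_t(x) \mid WShatter(x)] \geq 1-\gamma^2$, or equivalently note $\Pr[Rep_t(x) \wedge WShatter(x)] \geq \Pr[WShatter(x)] - \gamma^2$).

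For the $Small$ statement the argument is even cleaner, since the conditioning does not involve the shattering events: \lemmaref{lem:pathbasecase} gives $\Pr[Small_1(x,b)\mid Small_0(x,b)]\geq 1-\frac{\gamma^2}{t}$ and \lemmaref{lem:pathindcase} gives $\Pr[Small_{k+1}(x,b)\mid Small_k(x,b)]\geq 1-\frac{\gamma^2}{t}$. Again using that the $k+1$-st column of $\Lambda$ is sampled independently of the earlier columns, I chain these: the probability that $Small_k(x,b)$ holds for all $k\leq t$, conditioned on $Small_0(x,b)$, is at least $\prod_{k=0}^{t-1}(1-\frac{\gamma^2}{t}) \geq 1 - t\cdot\frac{\gamma^2}{t} = 1-\gamma^2$. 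Since $Small_t(x,b)$ is implied by $\myand_{k\leq t}Small_k(x,b)$ (it is the last one in the chain and each step's conclusion feeds the next step's hypothesis), this yields $\Pr[Small_t(x,b)\mid Small_0(x,b)]\geq 1-\gamma^2$ as claimed.

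The only subtlety — and the one point I would be careful to spell out — is the independence structure that licenses multiplying the conditional probabilities. The bound in \lemmaref{lem:pathindcase} is a probability over $\Lambda([s],k+1)$ alone, holding pointwise for every fixing of $(x,b)$ and of the contents $C([s],[k])$ (equivalently, of $\Lambda([s],[k])$ together with the table population). So for any fixed history through level $k$ on which $Rep_k(x)$ (resp. $Small_k(x,b)$) and $WShatter_{\leq k}(x)$ hold, the level-$(k+1)$ sampling pushes us into $Rep_{k+1}$ (resp. $Small_{k+1}$) with probability $\geq 1-\frac{\gamma^2}{t}$. Formalizing this as a supermartingale-type / induction-on-$t$ argument: let $p_k$ be the probability, conditioned on $WShatter(x)$, that $Rep_j(x)$ holds for all $j\leq k$; then $p_{k+1}\geq p_k - \frac{\gamma^2}{t}$ by the law of total probability, and $p_0 = 1$, so $p_t \geq 1 - \gamma^2$. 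I expect this bookkeeping to be the main (and only) obstacle, and it is minor; the statement really is just "$t$ steps, each losing $\frac{\gamma^2}{t}$, total loss $\gamma^2$."

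\begin{proof}
We prove the two statements by induction on $t$; both are instances of the same telescoping bound. Fix $x$ (and $b$ for the second statement).

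Consider the $Rep$ chain. Define, for $0\leq k\leq t$,
\[
p_k \eqdef \Pr_{\Lambda([s],[k])}\left[ Rep_j(x) \text{ for all } 1\leq j\leq k \;\middle|\; WShatter(x) \right],
\]
with $p_0 \eqdef 1$ (the empty conjunction). Conditioning on the contents $C([s],[k])$ determined by $\Lambda([s],[k])$ and the table population, \lemmaref{lem:pathindcase} states that for every such fixing on which $Rep_k(x)$ and $WShatter_{k+1}(x)$ hold, the level-$(k+1)$ sample satisfies $\Pr_{\Lambda([s],k+1)}[Rep_{k+1}(x)] \geq 1-\frac{\gamma^2}{t}$; the base case $k=0$ is \lemmaref{lem:pathbasecase}. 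Since $WShatter(x)$ entails $WShatter_{k+1}(x)$, the law of total probability gives
\[
p_{k+1} \;\geq\; p_k \cdot \left(1-\tfrac{\gamma^2}{t}\right) \;\geq\; p_k - \tfrac{\gamma^2}{t}.
\]
Iterating from $p_0 = 1$ yields $p_t \geq 1 - t\cdot\frac{\gamma^2}{t} = 1-\gamma^2$. Since $Rep_t(x)$ is the last event in the conjunction, $\Pr_{\Lambda([s],[t])}[Rep_t(x) \mid WShatter(x)] \geq p_t \geq 1-\gamma^2$, which is the first claim.

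For the $Small$ chain the argument is identical but without the shattering conditioning: set $q_k \eqdef \Pr_{\Lambda([s],[k])}[Small_j(x,b) \text{ for all } 0\leq j\leq k \mid Small_0(x,b)]$, so $q_0 = 1$. By \lemmaref{lem:pathbasecase} and \lemmaref{lem:pathindcase}, for any fixing of the history through level $k$ on which $Small_k(x,b)$ holds, the level-$(k+1)$ sample gives $Small_{k+1}(x,b)$ with probability $\geq 1-\frac{\gamma^2}{t}$, so $q_{k+1} \geq q_k - \frac{\gamma^2}{t}$ and hence $q_t \geq 1-\gamma^2$. As $Small_t(x,b)$ is implied by the conjunction, $\Pr_{\Lambda([s],[t])}[Small_t(x,b) \mid Small_0(x,b)] \geq 1-\gamma^2$.
\end{proof}
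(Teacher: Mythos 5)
Your approach is the same one the paper intends: the lemma is stated right after "And thus by induction," with no further proof, and the content is exactly the telescoping of Lemma~\ref{lem:pathbasecase} and Lemma~\ref{lem:pathindcase}, losing $\gamma^2/t$ per level over $t$ levels. Your reading of the statement is also right -- as written, "$Rep_t(x)\wedge\neg WShatter(x)$" cannot be intended (its probability can be $0$ when shattering always holds); the intended event is $Rep_t(x)\vee\neg WShatter(x)$, i.e.\ $\Pr[\neg Rep_t(x)\wedge WShatter(x)]\leq\gamma^2$, which is what is used downstream together with $ManyShatter$.

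One wrinkle in your formal write-up for the $Rep$ chain: you condition on the full event $WShatter(x)=\wedge_{k\leq t}WShatter_k(x)$ and then multiply conditional bounds column by column. But $WShatter_{k+2}(x),\ldots,WShatter_t(x)$ depend on $\Lambda([s],k+1)$ and $C([s],k+1)$, so conditioning on them tilts the distribution of the $(k+1)$-st column given the history, and Lemma~\ref{lem:pathindcase} (a bound over a fresh uniform column for a fixed history) no longer applies verbatim; so $p_{k+1}\geq p_k(1-\gamma^2/t)$ is not licensed as stated. The fix is the "first failure" decomposition you already sketch in your plan: for each $k$, the event $\neg Rep_{k+1}(x)\wedge Rep_k(x)\wedge WShatter_{k+1}(x)$ has probability at most $\gamma^2/t$, because $Rep_k$ and $WShatter_{k+1}$ are determined by $\Lambda([s],[k])$ and $C([s],[k])$, and for every such fixing the fresh column yields $\neg Rep_{k+1}$ with probability at most $\gamma^2/t$; summing over $k$ gives $\Pr[\neg Rep_t(x)\wedge WShatter(x)]\leq\gamma^2$, which is the joint-event form $\Pr[Rep_t(x)\wedge WShatter(x)]\geq\Pr[WShatter(x)]-\gamma^2$ that you also state and that suffices for the rest of the argument. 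Your $Small$ chain is fine as written, since $Small_0(x,b)$ depends only on $x$, $b$ and the table population, not on $\Lambda$, so the conditioning there does not bias the sampled columns.
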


If for most $x_i$, $WShatter(x_i)$ occurred, this would imply that the decoding algorithm succeeds with high probability. However, the event $WShatter_{k+1}(x_i)$ depends on the contents $C([s],[k])$, which are determined by the table population, which depends on $x_i$ itself.

\medskip\noindent{\em Proving that weak shattering happens for most $x_i$'s:}

The weak shattering property implies that for any $k$, for a fixed table population $T=T_1,\ldots,T_t$ (which determines $C([s],[k])$ via $\Lambda([s],[k])$),
$$\Pr_{x\sim \mu}[WShatter_{k+1}(x)] \geq 1-\frac{\gamma^2}{t},$$

Thus for any fixed $T$ and $\Lambda$, it is the case that
$$\Pr_{x\sim \mu}[WShatter(x)] \geq 1-\gamma^2.$$

Let $ManyShatter(T,\Lambda,\xvec)$ be the event that $WShatter(x)$ occurs for all but a $2\gamma$ fraction of the $x_i$'s in the instance $\xvec$, i.e. $\sum_{i} \one(WShatter(x_i,\Lambda([s],[k]),C(T,\Lambda)([s],[k]))) \geq  (1-2\gamma)n$.

\begin{lemma}
\label{lem:pathmanyshatter}
$$\Pr_{\xvec \sim \mu}[\exists T,\Lambda: \neg ManyShatter(T,\Lambda, \xvec)] \leq \exp(-\gamma^2 n/4))$$
\end{lemma}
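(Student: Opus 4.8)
The plan is to prove the bound by a union bound over the sub-exponentially many possible \emph{transcripts of sampled cell contents}, rather than over the (doubly exponentially many) table populations. The key observation is that the event $WShatter(x,\Lambda,C(T,\Lambda))$ — and hence $ManyShatter(T,\Lambda,\xvec)$ — depends on the table $T$ only through the sampled content matrix $C(T,\Lambda) \in (\{0,1\}^w)^{s\times t}$, which carries only $stw$ bits, and by the choice $s \eqdef \frac{\gamma^2 n}{4tw}$ we have $stw = \gamma^2 n/4 = o(n)$.

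First I would fix $\Lambda$ and an \emph{arbitrary} matrix $c \in (\{0,1\}^w)^{s\times t}$ (not necessarily realizable by any actual $T$), and run the sampling procedure ``symbolically'', using $c$ in place of $C(T,\Lambda)$ to define the lookup sets. This produces, at each level $k$, a well-defined collection $\{A^{(jk)}_l : j\in[s],\, l\in[2m]\}$ that is $\frac{1}{m^{k+1}}$-sparse by construction. Hence Observation~\ref{obs:Kbeta-transfer} together with the $(K,\frac{1}{m^t},\frac{\gamma}{t})$-weak shattering property, combined across the $t$ levels exactly as in the displayed bound immediately preceding the lemma, yields $\Pr_{x\sim\mu}[\neg WShatter(x,\Lambda,c)]\le \gamma^2$ for \emph{every} fixed pair $(\Lambda,c)$. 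The point is that this is now a statement about a fixed sparse collection, so the weak shattering hypothesis applies verbatim.

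Next, since $x_1,\ldots,x_n$ are i.i.d.\ from $\mu$, for fixed $(\Lambda,c)$ the count $Z_{\Lambda,c}\eqdef |\{i: \neg WShatter(x_i,\Lambda,c)\}|$ is a binomial random variable with mean at most $\gamma^2 n$. A Chernoff bound in the form $\Pr[Z\ge a]\le (e\,\Ex[Z]/a)^a$ with $a = 2\gamma n$ gives
\[
\Pr_{\xvec\sim\mu}[Z_{\Lambda,c}\ge 2\gamma n]\ \le\ (e\gamma/2)^{2\gamma n}\ \le\ 2^{-2\gamma n}
\]
for $\gamma$ a small enough absolute constant. Now union bound: there are at most $(2m)^{st}\le 2^{stw}$ choices of $\Lambda$ (using the assumption $w\ge 2\log m$) and at most $(2^w)^{st}=2^{stw}$ choices of $c$, and $stw=\gamma^2 n/4$, so
\[
\Pr_{\xvec\sim\mu}[\exists\,\Lambda,c:\ Z_{\Lambda,c}\ge 2\gamma n]\ \le\ 2^{2stw}\cdot 2^{-2\gamma n}\ =\ 2^{\gamma^2 n/2 - 2\gamma n}\ \le\ \exp(-\gamma^2 n/4)
\]
for small constant $\gamma$. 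To conclude, note that for any realizable pair $(T,\Lambda)$ the matrix $C(T,\Lambda)$ is one of the enumerated $c$'s and $\neg ManyShatter(T,\Lambda,\xvec)$ is precisely the event $Z_{\Lambda,C(T,\Lambda)}\ge 2\gamma n$; hence $\{\exists T,\Lambda:\neg ManyShatter(T,\Lambda,\xvec)\}$ is contained in the event bounded above.

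The main obstacle is not any single calculation but getting the quantifier structure right: one must (i) make precise that the shattering event for a fixed transcript $c$ depends only on $(\Lambda,c)$, so that the weak shattering property — a statement about a \emph{fixed} sparse collection — can be invoked, decoupling the $x_i$'s from the table they generate; and (ii) verify that the union bound over the $2^{O(stw)}$ transcripts is affordable precisely because the number of sampled bits $stw$ is sub-linear in $n$, so that the $2^{-2\gamma n}$ tail dominates. This is exactly the ``union bound over the bits read'' device from the deterministic proofs, now applied to the weak-shattering event rather than to a decoding event.
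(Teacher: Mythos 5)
Your proposal is correct and follows essentially the same route as the paper: a Chernoff bound for each fixed sample/transcript pair (valid because the weak shattering property applies to any fixed sparse collection, and the shattering event depends on $T$ only through the sampled contents $C(T,\Lambda)$), followed by a union bound over the $2^{O(stw)}$ possible values of $\Lambda$ and $C$, which is affordable since $stw$ is sublinear in $n$. The only difference is bookkeeping: you absorb the $\log 2m$ factor in the count of $\Lambda$-choices via the assumption $w\ge 2\log m$, whereas the paper counts $2^{st(w+\log 2m)}$ transcripts directly.
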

\begin{proof}
First consider a fixed $T,\Lambda$. Since the $x_i$'s are drawn independently, Chernoff bounds imply that
$$\Pr_{\xvec \sim \mu}[\neg ManyShatter(T,\Lambda, \xvec)] \leq  \exp(-\gamma^2 n/2).$$

Further, note that the event $ManyShatter(T,\Lambda, \xvec)$ depends on $T$ only through the contents $C(T,\Lambda)([s],[t])$. Thus doing a union bound over all possible values of $C$ and $\Lambda$,
$$\Pr_{x_1,\ldots,x_n}[\exists T,\Lambda: \neg ManyShatter(T,\Lambda, \xvec)] \leq 2^{(st(w+\log 2m))} \exp(-\gamma^2 n/2).$$
The claim follows since $s=\frac{\gamma^2 n}{4t(w+\log 2m)}$.
\end{proof}

We assume for the rest of the proof that the database $\xvec = x_1,\ldots,x_n$ indeed has this property; this changes the failure probability by a negligible amount.

Let $T(\xvec,\bvec)$ be the table population built by the data structure. Lemma~\ref{lem:pathmanyshatter} implies that
$$\sum_i \one(WShatter(x_i)) \geq (1-2\gamma)n.$$

Thus using Lemma~\ref{lem:pathinduction},
$$\sum_i \one(Rep_t(x_i) \myand Small_t(x_i,b_i^c)) \geq (1-6\gamma)n.$$

Since $Dec(Enc(b,z),z)_i = b_i$ whenever $Rep_t(x_i) \myand Small_t(x_i)$, it follows that

$$\Ex[\sum_{i} \one(Dec(Enc(b,z),z)_i=b_i)]  \geq (1-6\gamma)n.$$

It follows that
$$\Ex[|Dec(Enc(b,z),z) - b|_1] \leq 6\gamma n.$$

The rare events ignored during the rest of the proof add an additional $O(\gamma n)$ to this expectation.

For small enough $\gamma$, the size of the encoding $stw$ is smaller than $(\gamma^2/4) n < rn$ (since $\lim_{\delta \rightarrow 0} r(\delta) = 1$), contradicting Corollary~\ref{cor:nocompression}. Hence the claim.
\end{proof}

\subsection{Cell Sampling}

In this section, we show a different sampling technique, which gives a different lower bound. The main theorem is:
\begin{theorem}
\label{thm:randpathsample}
There exists a constant $\gamma>0$ such that the following holds.
Let $(G,e)$ satisfy $\gamma$-weak independence (WI) and $(K,\frac{1}{m},\frac{\gamma}{t})$-weak shattering property. Then for $t \leq n^{\frac{1}{4}}$, any deterministic $t$-probe data structure for the distribution over GNS instances defined by $(G,e)$ that succeeds with probability $(1-\gamma)$ must use space $m$ at least $\Omega(\gamma^3nK^{\frac{1}{2t}}/wt^3\log (t/\gamma))$.
\end{theorem}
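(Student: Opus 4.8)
The plan is to mirror the path-sampling proof of Theorem~\ref{thm:randcellsample}, but replace the step where we sample a single cell per table with the \emph{cell sampling} step from the deterministic proof of Theorem~\ref{thm:det1}: in each of the $t$ tables we sample a random subset of roughly $s = m/K^{1/2t}$ cells (say, an $s/m$ fraction), and we let the encoding $Enc(b,z)$ be the contents of all the sampled cells across all $t$ tables, using $z$ as the shared randomness both to draw $x_1,\dots,x_n$ from $\mu$ and to pick the sampled cells. The decoder, as before, recovers each $b_i$ by taking a majority vote under an appropriately shattered sub-measure $\nut^t_{x_i}$ of $\nu_{x_i}$, restricted to the set of queries $y$ whose entire lookup path stays inside the sampled cells. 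The whole argument will again be an induction over the $t$ levels, tracking the telescoping sets $A^{(k)}$ of queries that have survived (i.e. accessed only sampled cells) through the first $k$ lookups, and the analogues of the events $Rep_k$, $Small_k$, $WShatter_k$.

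First I would set up the sampling: at level $1$, refine the partition $\{F_1^{-1}(\ell)\}_{\ell\in[m]}$ into a $\frac1m$-sparse partition with at most $2m$ parts (exactly as in the path-sampling proof), and keep the parts corresponding to a random $s/(2m)$ fraction of the cells; call the union $A^{(1)}$. Its $\nu$-measure concentrates around $s/(2m)$, and for a point $x$ that is $(K_1,\gamma^2/t)$-weakly shattered the sub-measure $\nut^1_x(A^{(1)})$ concentrates around $(s/2m)$ as well, because each surviving part contributes at most $1/K_1$ to the sum and the parts are negatively correlated — this is the Chernoff step, and here the relevant $K$ is $K_k = K m^{k-1}$ (using Observation~\ref{obs:Kbeta-transfer} with $\beta = 1/m$, so the shattering assumption in the theorem is indeed $(K,\frac1m,\gamma/t)$-weak shattering). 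Inductively, given $A^{(k)}$ with $\nu(A^{(k)}) \le (3/2)\,s/(2m)^k$ and the cell contents $C^{(k)}$ revealed so far, the $(k{+}1)$st lookup function partitions $A^{(k)}$ into $\le 2m$ $\frac1{m^{k+1}}$-sparse pieces; we keep those indexed by the sampled cells of table $k{+}1$, obtaining $A^{(k+1)}$ with $\nu(A^{(k+1)})$ concentrated around $(s/2m)^{k+1}\cdot(2m)^k\cdot$… — more precisely around $s^{k+1}/(2m)^{?}$; the bookkeeping is the same telescoping product as in Lemma~\ref{lem:numuconcentrated} and Lemmas~\ref{lem:pathbasecase}--\ref{lem:pathindcase}, just with $s$ replaced by $s/(2m)$ per level (so after $t$ levels the measure is $\approx (s/(2m))^t$). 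Finally, weak independence plus the data-structure's success guarantees $Small_0(x_i,b_i^c)$ for $(1-\gamma)$ of the $i$'s, and the induction propagates $Rep_k$ and $Small_k$ through all $t$ levels for every $x_i$ that is weakly shattered at every level; $Rep_t \wedge Small_t(\cdot,b_i^c)$ forces $\hat b_i = b_i$.

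The decoupling of shattering from the table contents is handled exactly as in Lemma~\ref{lem:pathmanyshatter}: the event $WShatter(x)$, for a fixed table population $T$ and fixed sampled-cell pattern, depends on $T$ only through the contents of the sampled cells, of which there are $st$ cells carrying $stw$ bits total; so a union bound over all $2^{stw}$ possible contents (times the $\binom{m}{s}^t$ choices of cell pattern, which is also only $2^{O(st\log m)}$) against a Chernoff bound with failure $\exp(-\gamma^2 n/2)$ shows that, with overwhelming probability over the draw of $x_1,\dots,x_n$, at least $(1-2\gamma)n$ of the $x_i$'s are weakly shattered at all levels for \emph{every} $T$ and every sampled pattern. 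This is what lets the decoding succeed on $(1-O(\gamma))n$ coordinates, so $\Ex[|Dec(Enc(b,z),z)-b|_1] \le O(\gamma)n$. The encoding length is $stw = (m/K^{1/2t})\cdot t\cdot w$ bits; wait — I want this to be $< (\gamma^2/4)n$ to contradict Corollary~\ref{cor:nocompression}, which forces $m \le \gamma^3 n K^{1/2t}/(wt^3\log(t/\gamma))$ up to constants, i.e. $m = \Omega(\gamma^3 n K^{1/2t}/(wt^3\log(t/\gamma)))$ is false when $m$ is too small, giving the stated bound once we observe the $\log(t/\gamma)$ factor comes from needing $w + \log 2m$ bits per cell and from the $\sqrt{\log(t/\gamma)/s}$ slack in the concentration bounds. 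I expect the main obstacle to be exactly this accounting: getting the exponent of $K$ right (it should be $1/2t$, not $1/t$, because each surviving piece at the bottom level has $\nut$-mass bounded by $1/K_t = 1/(Km^{t-1})$ but the number of survived pieces is $\approx (s/2m)^t \cdot (2m)^{t-1}$, and balancing the Chernoff failure probability $\exp(-\gamma^2 K_t \Ex[Y]/t^2)$ across all $t$ levels requires $s \gtrsim m/K^{1/2t}$), and making sure the union bound over table contents is not swamped — for which one needs $s \cdot t \cdot (w+\log 2m) = O(\gamma^2 n)$, consistent with the claimed $m$. Everything else is a line-by-line transcription of Theorems~\ref{thm:randcellsample} and~\ref{thm:det1}.
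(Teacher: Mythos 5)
Your overall architecture is the paper's: sample $s$ cells from each table, let the encoding be their contents, decode by a majority vote under the shattered sub-measures $\nut^t_{x_i}$ restricted to the surviving set $A^{(t)}$, run the $Rep_k$/$Small_k$/$WShatter_k$ induction, and union-bound over the $2^{st(w+\log 2m)}$ possible sampled contents (and patterns) before invoking Corollary~\ref{cor:nocompression}. The telescoping measure $\nu(A^{(k)})\approx (s/2m)^k$ and the final parameter choice also match. However, there is a genuine flaw in the step that is supposed to produce the $K^{1/2t}$: you claim the relevant shattering constant at level $k$ is $K_k=Km^{k-1}$, citing Observation~\ref{obs:Kbeta-transfer} with $\beta=1/m$. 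That observation only goes the other way: coarsening from $\beta$ to $\beta'>\beta$ \emph{degrades} the constant to $K/\lceil \beta'/\beta\rceil$, while for finer collections ($\beta'<\beta$) you get the same $K$ and nothing better (every $\beta'$-sparse collection is already $\beta$-sparse). Since the hypothesis here is $(K,\tfrac1m,\tfrac{\gamma}{t})$-weak shattering and every level's collection is refined to be $\tfrac1m$-sparse, the constant available at \emph{every} level is just $K$ — this is exactly what the paper's proof uses (contrast with Theorem~\ref{thm:randcellsample}, where the hypothesis is at scale $1/m^t$ and the transfer $K_k=Km^{k-t}$ is the legitimate, lossy direction). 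The inconsistency shows up in your own balancing: if each bottom-level piece really had $\nut$-mass at most $1/(Km^{t-1})$, the Chernoff exponent would be astronomically large and a much smaller $s$ than $m/K^{1/2t}$ would suffice, yielding a bound stronger than the theorem — a sign the bookkeeping is off.

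The fix is to drop $K_k$ entirely and run your induction with $K$ at every level: each of the $s$ sampled terms in $Y=\nut^{k+1}_x(A^{(k+1)})$ is bounded by $1/K$, $\Ex[Y]\approx(s/2m)^{k+1}$, so the failure probability is $\exp(-\gamma^2 K\,\Ex[Y]/2t^2)$, worst at the last level where $\Ex[Y]\approx(s/2m)^t$. Requiring $K(s/2m)^t\gtrsim t^2\log(t/\gamma)/\gamma^2$ forces $s/2m\gtrsim \bigl(t^2\log(t/\gamma)/\gamma^2\bigr)/K^{1/2t}$, i.e. the paper's $s=8t^2m\log(t/\gamma)/\gamma^2K^{1/2t}$; this, and not any level-dependent amplification of $K$, is where the exponent $\tfrac{1}{2t}$ comes from. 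With that correction, your encoding-length computation $st(w+\log 2m)<\tfrac{\gamma^2}{4}n$, the union bound, and the contradiction with Corollary~\ref{cor:nocompression} go through exactly as in the paper and give the stated space bound.
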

\begin{proof}
The proof is analogous to that for theorem~\ref{thm:randcellsample}. We define Encoding and Decoding procedures that compress a random string. The primary difference is in the sampling procedure; instead of sampling ``paths" as in the previous section, we sample cells from each table, and argue that the set of points that can be looked up only using paths in the sample is sufficient to recover the bits $b_i$. Specifically, we define events analogous to $Rep$ and $Small$ in the previous section, and show that they occur for many points.

We assume the contrary so that for $m < \gamma^3nK^{\frac{1}{2t}}/wt^3\log (t/\gamma))$, there is $t$-probe space $m$ data structure that succeeds with probability $1-\gamma^3$ on the distribution defined by $(G,\mu,\nucol)$. We use this data structure to construct functions $Enc$ and $Dec$ violating corollary~\ref{cor:nocompression}. We use the auxiliary input as shared randomness between $Enc$ and $Dec$ throughout this proof. The database $(x_1,b_1),\ldots, (x_n,b_n)$ is defined as before.

\medskip\noindent{\em Cell Sampling Procedure:}
Let $s=8t^2 m\log (t/\gamma)/\gamma^2K^{\frac{1}{2t}}$. Let $F_1 = F_1(y)$, $F_2 = F_2(y,\alpha_1)$, $F_3=F_3(y,\alpha_1,\alpha_2)$  etc. be adaptive lookup functions that the data structure uses. The sampling is done in $t$ steps one for each table. At each step we will get subsets $V=A^0 \supset A^1 \supset A^2, \ldots, A^t$ so that all the queries in $A^i$ only access the sampled cells for the first $i$ lookups.

Let $A^0= V$ and let $A^0_l$ for $l \in [2m]$ be a $\frac{1}{m}$-sparse partition obtained by refining $\{F_1^{-1}(l): l\in [m]\}$; this can be done as before by arbitrarily breaking up cells larger than $\frac{1}{m}$. Let $\Lambda_{11},\ldots,\Lambda_{s1}$ be a random subset of $[2m]$ of size $s$. Let the contents of the respective cells in a table population $T$ be denoted by $C_{11},\ldots,C_{s1}$. Thus the sample from the first table consists of $s$ rows $\Lambda_{11},\ldots,\Lambda_{s1} \in [2m]$ whose contents are $C_{11},\ldots C_{s1} \in \{0,1\}^w$. We set $A^1= \cup_{j\in [s]} A^0_{\Lambda_{j1}}$.

Let $A^1_l: l \in [2m]$ be a $\frac{1}{m}$-sparse partitioning obtained by refining $\{A^1 \cap F_2^{-1}(l): l\in [m]\}$ as above. We pick a random subset $\Lambda_{12},\ldots,\Lambda_{s2}$ of $[2m]$ of size $s$, and let $C_{12},\ldots,C_{s2}$ denote the relevant set of contents from $T$. We set $A^2 = \cup_{j\in [s]} A^1_{\Lambda_{j2}}$ be the set of queries $y \in V$ that look up one of the sampled cells in the first two tables.

Repeating this process, we get $s\times t$ matrices $\Lambda$ and $C$, and sets $A_1,\ldots,A_t$. Note that in any execution of the procedure, the set $A^k$ depends only on the samples $\Lambda$ and the contents $C$ read from the table population.

\medskip\noindent{\em The measure of $A^{k}$.} We first show that the measure $\nu(A^k)$ is concentrated around $(\frac{s}{2m})^k$.
\begin{lemma}
\label{lem:numuconcentratedcell}
$\nu(\cup_{j\in[s]} A^{k})$ is  at most  $(1 + 2k\sqrt{\frac{\log (t/\gamma)}{s}}) (\frac{s}{2m})^k$, except with probability $\frac{k\gamma^2}{t^2}$.
\end{lemma}
\begin{proof}
We argue inductively. For $k=1$, the expected value of $Y_1\eqdef \nu(A^1)$ is exactly $\frac{s}{2m}$. Moreover, by $\frac{1}{m}$-sparsity of $A_1,\ldots,A_{2m}$, and using Chernoff bounds (for negatively correlated r.v.'s), the deviation from the mean is at most $\sqrt{2s\log (t/\gamma)}/m$, except with probability $\frac{\gamma^2}{t^2}$.

Inductively, the expected value of $Y_{k+1} \eqdef \nu(A^{k+1})$ is exactly $\frac{s}{2m}Y_k \leq (1+ 2k\sqrt{\frac{\log (t/\gamma)}{s}}) (\frac{s}{2m})^{k+1}$ by the induction hypothesis. A Chernoff bound argument identical to the one above completes the proof.
\end{proof}
	
For the rest of the proof, we will assume that for all $k\leq t$,  $\nuright(A^k)$ is indeed at most $(1+ 2k\sqrt{\frac{\log (t/\gamma)}{s}}) (\frac{s}{2m})^k \leq \frac{3}{2} (\frac{s}{2m})^k$.

\medskip\noindent{\em The Encoder:} The encoding is set as before to be the matrix $C$. Note that the matrix $C$, along with $\Lambda$, which is part of the shared randomness, is enough to compute the answer computed by the data structure for every $y \in A^t$.

\medskip\noindent{\em If every $x_i$ shatters, the sampled cell contents are sufficient to estimate $b_i$'s:}

We argue that the sampling process ensures that the majority vote on the sample agrees with the true majority, if the appropriate shattering happens at each level. We start by defining the shattering event formally.

\begin{definition}[WShatter]
We will say that $WShatter_{k+1}(x,\Lambda([s],[k]), C([s],[k]))$ occurs if the collection $\{ A^k_l: l\in[m]\}$ $(K,\gamma^2/t)$-weakly shatters $x$. We use the notation $WShatter(x,\Lambda,C)$ to denote the event $\myand_{k \leq t} WShatter_{k}(x,\Lambda([s],[k]), C([s],[k]))$.
\end{definition}

When $\Lambda$ and $C$ are obvious from context, we will simply abbreviate these events as $WShatter_k(x)$ and $WShatter(x)$.

We once again define a sequence of measures. Let $\nut^0_x = \nu_x$. For any $k < t$, consider the collection $\{ A^k_l,l\in[m]\}$. Let $\nup^{k+1}_x$ be the measure guaranteed by observation~\ref{obs:shatternut} so that $\nup^{k+1}_x$ is dominated by $\nu_x$, and satisfies $\nup^{k+1}_x(A^k_l) \leq \frac{1}{K}$. Moreover, assuming $WShatter_{k+1}(x)$, $\nu_x(\cup_l A^k_l)-\nup^{k+1}_x(\cup_l A^k_l)$ is small. We set $\nut^{k+1}_x(y) = \min(\nut^{k}_x(y),\nup^{k+1}_x(y))$. We remark that if $x$ was strongly shattered at each level, $\nut^{k}_x = \nu_x$ for all $k$.

Let $V_0$ (resp. $V_1$) denote the set of vertices for which the query algorithm, given the table population, outputs $0$ (resp. $1$). So $V_b$ and $V_{b^c}$ are the set of queries for which the query algorithm outputs the bits $b$ and its complement $b^c$ respectively.

The decoding algorithm works as follows: for each $i$, we would like to compute the majority answer restricted to this set $A^{t}$, under the measure $\nu_{x_i}$ restricted to this set. Instead we take the majority under the measure $\nut^t_{x_i}$. Thus the decoder outputs
 $$\hat{b}_i = arg\,max_b \left(\nut^t_x(V_{b} \cap A^{t})\right).$$

To argue that the decoding is correct for most $x_i$'s, we show that the measure of $A^{k}$ under $\nut^{k}$ remains close to its expectation, and that the measure of points $y \in A^{k}$ where the data structure returns the wrong answer remains small. We define two more events.
\begin{definition}[Rep]
For $k\geq 1$, we let $Rep_{k}(x,\Lambda([s],[k]), C([s],[k-1]))$ be the event
$$
\nut^{k}_x(A^{k}) \geq (1-\frac{3k\gamma}{t}) (\frac{s}{2m})^k.
$$
\end{definition}

\begin{definition}[Small]
For $k\geq 1$, we let $Small_k(x,b,\Lambda([s],[k]), C([s],[k]))$ denote the event
$$
\nut^{k}_x(V_{b} \cap A^{k}) \leq (2\gamma + \frac{k\gamma}{t})(\frac{s}{2m})^k.
$$
For convenience, let $Small_0(x,b)$ denote the event $\nu_x(V_b) = \nut^0_x(V_b) \leq 2\gamma$.
\end{definition}

As in the path sampling case, except with probability $\gamma$, $Small_0(x_i,b_i^c)$ occurs for $(1-\gamma)$ of the $i$'s. We assume that this is indeed the case.

With the definitions in place, we are now ready to argue that each of these events happens for most $x_i$'s. For brevity, we use $Rep_k(x),Small_k(x,b)$ when the other arguments are obvious from context. It is immediate from the definitions that
\begin{lemma}
If $Rep_t(x_i)$ and $Small_t(x_i,b_i^c)$ occur, then the decoding $\hat{b_i}$ agrees with $b_i$.
\end{lemma}

We argue that assuming $WShatter_k(x)$ for each $k$, the events $Rep_t(x)$ and $Small_t(x)$ indeed happen with high probability. The following two lemmas form the base case, and the induction step of such an argument.

We first argue that
\begin{lemma}
\label{lem:cellbasecase}
For any $x,b$,
$$\Pr_{\Lambda([s],1)}[(Rep_{1}(x) \mid WShatter_{1}(x)] \geq 1-\frac{\gamma^2}{t}
$$
$$\Pr_{\Lambda([s],1)}[Small_{1}(x,b)) \mid Small_0(x,b)] \geq 1-\frac{\gamma^2}{t}
$$
\end{lemma}
\begin{proof}
Assuming $WShatter_1(x)$, we have
$$ \sum_{l=1}^{2m} \nut^1_x(A^{0}_l) \geq \sum_{l=1}^{2m} \nu_x(A^{0}_l) - \frac{\gamma^2}{t} \nu(\cup_{l=1}^{2m} A^{0}_l) \geq 1 - \gamma^2/t.$$

Since we draw $s$ sets at random without replacement from these $2m$ sets,  each $A^{0}_l$ contributes $\frac{s}{2m}\nut^1_x(A^{0}_l)$ to the expectation of $Y \eqdef \nut^{1}_x(A^1) = \sum_{j\in [s]} \nut^{1}_x(A^{0}_{\Lambda_{j1}})$ and this lower bounds $\Ex[Y]$. Finally, since these terms are negatively correlated, Chernoff bounds imply
$$ Y \geq (1-\frac{\gamma}{t})\Ex[Y],$$
except with probability $\exp(-\frac{\gamma^2}{2t^2} K(\frac{s}{2m}))$. The claim follows by an easy calculation.

Next we argue $Small_1(x,b)$, i.e. we need to upper bound $\sum_{j\in[s]}\nut^{1}_{x}(V_{b} \cap A^{1})$. The bound on the expectation follows from $Small_0(x,b)$ and the sampling. A Chernoff bound argument identical to that for $Rep_1$ completes the proof.
\end{proof}

Similarly, we argue  that
\begin{lemma}
\label{lem:cellindcase}
For any $k \geq 1$, any $x,b$,
$$\Pr_{\Lambda([s],{k+1})}[Rep_{k+1}(x) \mid (Rep_{k}(x) \myand WShatter_{k+1}(x))] \geq 1-\frac{\gamma^2}{t}
$$
$$\Pr_{\Lambda([s],{k+1})}[Small_{k+1}(x,b) \mid Small_k(x,b)] \geq 1-\frac{\gamma^2}{t}
$$
\end{lemma}
\begin{proof}
To prove $Rep_{k+1}(x)$, we use $WShatter_{k+1}(x)$ and $Rep_{k}(x)$. By weak shattering, $\sum_{l\in [2m]} \nut^{k+1}_x(A^{k}_l) \geq \nut^{k}_x(A^{k}) - \frac{\gamma^2}{t} \nu(A^{k})$. By $Rep_k(x)$, the first term is at least $(1-\frac{3k\gamma^2}{t}) (\frac{s}{2m})^k$. Using lemma~\ref{lem:numuconcentratedcell}, the second term is at most $\frac{\gamma^2}{t} \cdot \frac{3}{2}(\frac{s}{2m})^k$.
This lower bounds the expectation of $Y \eqdef \nut^{k+1}_x(A^{(k+1)}) = \sum_{j\in [s]} \nut^{k+1}_x(A^{k}_{\Lambda_{j(k+1)}})$ since $\Lambda_{k+1}$ is a uniformly random subset of $[2m]$ of size $s$. Chernoff bounds then imply that
$$ Y \geq (1-\frac{\gamma}{t})\Ex[Y],$$
except with probability $\exp(-\frac{\gamma^2}{2t^2} K\Ex[Y])$. A calculation similar to the previous lemma implies that $Rep_{k+1}(x)$ occurs in this case.

An identical Chernoff bound argument suffices to show $Small_{k+1}(x,b)$.
\end{proof}

And thus by induction,
\begin{lemma}
\label{lem:cellinduction}
For any $x,b$,
$$\Pr_{\Lambda([s],[t])}[Rep_{t}(x) \wedge \neg WShatter(x)] \geq 1-\gamma^2
$$
$$\Pr_{\Lambda([s],[t])}[Small_{t}(x,b) \mid Small_0(x,b)] \geq 1-\gamma^2
$$
\end{lemma}

If for most $x_i$, for all $k$, $WShatter_{k}(x)$ occurred, this would imply that the decoding algorithm succeeds with high probability.

The rest of the proof is identical to that for Theorem~\ref{thm:randcellsample}, since once again, the event $WShatter(x)$ depends on the table population only through the contents $C$.

\end{proof}

\section{Applications}
\label{sec:applications}

We show how lower bounds on GNS imply lower bounds for ANNS.
We stress that these bounds hold for the \emph{average case} where the $n$ data-set points are sampled randomly from a distribution over $|V|$. Thus, if with high probability the distance between all pairs of  points in the data set is at least $cr$, then the bounds above hold also for the approximate nearest neighbor within factor $c$.
The following table lists all these bounds and how they follow from our work.

\begin{table}[h]
    \centering

        \begin{tabular}{c|r|c|r|r|r}
        \textbf{metric space}  & \textbf{appr} & \textbf{det / rand} & \textbf{bound } & \textbf{ref} & \textbf{Thm}\\  \hline

        $\ell_1$  & $1/\e$  & det & $t \geq d\e^2/\log(mwd/n)$ &\cite{PatrascuT06a}, \cite{Liu04} & \ref{thm:deterministic}\footnote{Our result improves the dependence on $\e$ slightly from $\e^3$ to $\e^2$} \\ \hline
        $\ell_1$  & $1+\e$  & rand & $mw \geq n^{\tfrac{1}{\e^2 t}}$&\cite{AIP06} & \ref{thm:randomized} \\ \hline
        $\ell_1$  & $1/\e$  & rand & $mw \geq n^{1+\tfrac{\e}{t}}$ &\cite{PTW08} & \ref{thm:randomized} \\ \hline
        $\ell_{\infty}$  & $\log_\rho\log d$  & det & $mw \geq n^{\rho/t}$ &\cite{ACP08} & \ref{thm:deterministic} \\ \hline
        \end{tabular}
        \label{tab:sum}
    \caption{Known lower bounds, and how they follow from Theorems~\ref{thm:deterministic} and \ref{thm:randomized}}
\end{table}

\subsection{GNS to ANNS}

In the decisional version of the $(c,r)$-ANNS problem we have a metric space $\mathcal M$ and parameters $c$ and $r$. We preprocess $n$ points $x_1,...,x_n$ into a data structure. When given a query point $y$ the goal is to distinguish between the case where $d(x_i,y) \leq r$ for some $i \in [n]$, and the case where for all $i$ $d(x_i,y)\geq cr$. The query algorithm is required to output $1$ in the former case, $0$ in the latter case, and may report anything if neither of the two cases hold.

We show that if we have appropriate distributions over $\mathcal M$, we can derive lower bounds for $(c,r)$-ANNS by simply computing the relevant expansion parameter.

\begin{theorem}[GNS to ANNS Deterministic]
Let $c \geq 1$ and let $\mu$ be a distribution over a metric ${\mathcal M}= (V,d)$ satisfying:
$$(c\mbox{-Strong Independence})\;\;\;\;\;\;\;\;\Pr_{x,z \sim \mu} [d(x,z) \leq (c+1)r] \leq \frac{1}{100n^2}.$$
Let $G_r = (V, \{(u,v): d(u,v) \leq r\})$. Then GNS on $(G_r,\mu)$ reduces to $(c,r)$-approximate GNS on $\mathcal M$.
\end{theorem}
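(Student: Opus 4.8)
The plan is to use the hypothesized $(c,r)$-approximate GNS data structure on $\mathcal M$ as a black box and feed it, verbatim, the instance produced by GNS on $G_r$. Concretely, given a data set $(x_1,b_1),\dots,(x_n,b_n)$ with the $x_i$ drawn independently from $\mu$, we build the $(c,r)$-approximate GNS structure on exactly these points and bits; on a query $y \in V$ we simply run its query algorithm and output whatever it returns. This transformation changes none of the parameters $m, w, t$ and preserves determinism (and would equally preserve randomization), so the only thing to check is average-case correctness.

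First I would isolate the good event $\mathcal E$ that $d(x_i,x_j) > (c+1)r$ for every pair $i \neq j$. Since the $x_i$ are drawn independently from $\mu$, the $c$-Strong Independence hypothesis together with a union bound over the $\binom{n}{2}$ pairs gives $\Pr[\neg\mathcal E] \le \binom{n}{2}/(100 n^2) < 1/100$. As a byproduct, because $c \ge 1$ this also shows that $\mu$ is strongly independent for $G_r$ in the sense needed to invoke Theorem~\ref{thm:deterministic}, since $N_{G_r}(x)\cap N_{G_r}(z)\neq\emptyset$ forces $d(x,z)\le 2r\le (c+1)r$.

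The heart of the argument is a case analysis showing that, conditioned on $\mathcal E$, every GNS query on $G_r$ that demands a specific answer is mapped to a $(c,r)$-ANNS instance that demands the same answer. If $y$ has a unique $G_r$-neighbor $x_i$ among the data points, then $d(x_i,y)\le r$, while for every $j\neq i$ the triangle inequality and $\mathcal E$ give $d(x_j,y)\ge d(x_i,x_j)-d(x_i,y) > (c+1)r - r = cr$; hence $y$ lies in the promise region of $(c,r)$-ANNS with $x_i$ as the near point, so the $(c,r)$-approximate GNS structure must output $b_i$, which is exactly what GNS requires. If instead $y$ has no $G_r$-neighbor, or two or more of them, GNS permits an arbitrary answer, so any output is acceptable; in fact, conditioned on $\mathcal E$ and using $c \ge 1$, the ``two or more neighbors'' case cannot arise, since $y\in N(x_i)\cap N(x_j)$ would force $d(x_i,x_j)\le 2r$. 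Combining these, on the event $\mathcal E$ the constructed structure errs on GNS only when the $(c,r)$-approximate GNS structure errs, so its failure probability is at most that of the latter plus $\Pr[\neg\mathcal E] < 1/100$, which yields the claimed reduction.

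I do not expect a genuine obstacle: this is a direct promise-matching reduction whose only quantitative ingredient is the union bound over pairs, and the normalization $1/(100n^2)$ in $c$-Strong Independence is chosen precisely to absorb it. The one place deserving care is the bookkeeping of the two ``don't care'' regimes for GNS (no neighbor versus multiple neighbors in $G_r$) against the near/far/ambiguous trichotomy of ANNS, so that every configuration in which GNS forces a specific answer is matched by one in which $(c,r)$-approximate GNS forces the same answer.
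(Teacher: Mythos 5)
The target of your reduction is not the problem the paper actually reduces to. Despite the wording ``$(c,r)$-approximate GNS'' in the statement, the problem defined immediately before this theorem (and the one needed for the applications in Section~\ref{sec:applications}) is the \emph{decisional} $(c,r)$-ANNS problem: the data structure preprocesses a set of points only --- no bits $b_i$ are attached --- and on a query must output $1$ if some data point is within distance $r$ and $0$ if every data point is at distance at least $cr$. Your construction ``build[s] the $(c,r)$-approximate GNS structure on exactly these points and bits,'' i.e.\ it assumes the target problem still carries the bits; under that reading the theorem is essentially the identity map plus a promise check, and it cannot serve the paper's purpose of transferring GNS lower bounds to the standard (bit-free) near neighbor decision problem. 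The missing idea is the bit-encoding step: the paper feeds the ANNS structure only the subset $D_1=\{x_i : b_i=1\}$. Then, on the good event $\mathcal{E}$ that all pairwise distances exceed $(c+1)r$, a query $y$ with $d(x_i,y)\le r$ satisfies $d(x_j,y)>cr$ for all $j\ne i$ by the triangle inequality, so the yes/no answer of the decisional ANNS structure on the dataset $D_1$ is exactly $b_i$ (it is $1$ iff $x_i\in D_1$ iff $b_i=1$). Without some such encoding of the $b_i$'s into a bit-free instance, no reduction to decisional ANNS has been exhibited, and the lower bounds in Table~1 would not follow.

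Your surrounding bookkeeping --- the union bound over $\binom{n}{2}$ pairs using $c$-strong independence, the triangle-inequality case analysis, and the observation that on $\mathcal{E}$ a query cannot have two $G_r$-neighbors --- coincides with what the paper does and is fine; the gap is solely the identification of the target problem and the absence of the $D_1$ trick. (A minor point you would inherit from the paper once you adopt that trick: $D_1$ has random size, roughly $n/2$, which is harmless for the lower-bound application but deserves a sentence.)
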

\begin{proof}
Given a GNS instance $(x_1,b_1),\ldots,(x_n,b_n)$, we consider the dataset $D_1=\{x_i: b_i = 1\}$ as our input for the $(c,r)$-ANNS problem. It is easy to see that when $x_i \sim \mu$ and $b_i \sim \{0,1\}$, this set $D_1$ is a uniformly random dataset from $\mu$. The $c$-Strong independence implies strong independence for the GNS instance. Whenever $d(x_i,x_j) > (c+1)r$, and $d(x_i,y) \leq r$, we have $d(x_j,y) > cr$ so that for the  $c$-approximate NNS instance, the answer is $1$ if and only $x_i$ is in $D_1$, i.e. if and only if $b_i=1$. The claim follows.
\end{proof}

Thus to prove deterministic data structure lower bounds for $c$-approximate NNS, it suffices to exhibit $r$ and a distribution $\mu$ which satisfies $c$-strong independence and has large expansion.

Similarly
\begin{theorem}[GNS to ANNS Randomized]
Let $c \geq 1$ and let $e$ be a distribution over pairs of points in a metric ${\mathcal M}= (V,d)$. Let $\mu(x) = e(x,V)$ and $\nu(y) = e(V,y)$. Suppose that for small enough $\gamma$
$$(c\mbox{-Weak Independence})\;\;\;\;\;\;\;\;\Pr_{y \sim \nu, z \sim \mu} [d(y,z) \leq cr] \leq \frac{\gamma}{n},$$
and
$$ \Pr_{(x,y) \sim e}[d(x,y) \leq r] \geq 1-\gamma.$$
Then GNS on $(G,e)$ reduces to $(c,r)$-approximate GNS on $\mathcal M$.
\end{theorem}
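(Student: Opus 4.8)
The plan is to imitate the deterministic reduction proved just above. Given a GNS instance $(x_1,b_1),\ldots,(x_n,b_n)$ sampled from the distribution associated with $(G,e)$, I would take the dataset $D_1 \eqdef \{x_i : b_i = 1\}$ as the input to the $(c,r)$-ANNS problem on $\mathcal M$, build the (possibly randomized) ANNS data structure on $D_1$, and answer a GNS query $y$ by running the ANNS query on $y$ and reporting its output bit as the guess for $b_i$. Since the $x_i$ are i.i.d.\ from $\mu$ and the $b_i$ are independent fair coins, $D_1$ is a set of $\approx n/2$ points drawn i.i.d.\ from $\mu$ (concentrated in $[\tfrac n4 , n]$ by a Chernoff bound), which is precisely the average-case ANNS input distribution for which we want a lower bound.

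Next I would control the two events that make the reduction fail, each of probability at most $\gamma$. Write $i^{*}$ for the (uniformly random) index used to generate the query $y \sim \nu_{x_{i^{*}}}$. First, the ``intended'' neighbor $x_{i^{*}}$ might in fact be at distance more than $r$ from $y$; but by the identity $\mu(x)\nu_x(y)=e(x,y)$ the pair $(x_{i^{*}},y)$ is distributed exactly as $(x,y)\sim e$, so the hypothesis $\Pr_{(x,y)\sim e}[d(x,y)\le r]\ge 1-\gamma$ rules this out except with probability $\gamma$. Second, some other data point $x_j$, $j\ne i^{*}$, might lie within distance $cr$ of $y$; here the marginal of $y$ is $\nu$ and each $x_j$ ($j\neq i^{*}$) is an independent $\mu$-sample, so $c$-weak independence gives $\Pr[d(x_j,y)\le cr]\le \gamma/n$, and a union bound over the fewer than $n$ such indices contributes another $\gamma$. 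Off both events we have $d(x_{i^{*}},y)\le r$ while $d(x_j,y)>cr$ for every $j\ne i^{*}$; hence if $b_{i^{*}}=1$ then $x_{i^{*}}\in D_1$ certifies a point within distance $r$ and the ANNS algorithm must output $1$, whereas if $b_{i^{*}}=0$ every point of $D_1$ lies beyond $cr$ and the ANNS algorithm must output $0$. In both cases the ANNS output equals $b_{i^{*}}$.

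Putting these together, a $(c,r)$-ANNS data structure with $m$ cells of $w$ bits and query time $t$ that is correct with probability $1-\delta$ on the induced average-case distribution yields a $t$-probe GNS data structure on $(G,e)$ with the same $m$ and $w$ that is correct with probability at least $1-\delta-2\gamma$; invoking Theorem~\ref{thm:randomizedfull} (with its absolute weak-independence constant and a suitably small constant $\delta$) then transfers the robust-expansion lower bound for $G$ to $(c,r)$-ANNS on $\mathcal M$. I expect the only real work to be bookkeeping: tracking the random size of $D_1$ so that the ANNS statement is quoted for $\Theta(n)$ points, and checking that the ANNS data structure's error is taken over the same randomness (dataset, query, internal coins) as in the GNS distribution, exactly as in the deterministic case --- so there is no genuine obstacle beyond lining up the definitions.
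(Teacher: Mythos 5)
Your proposal is correct and follows essentially the same route as the paper: build the ANNS dataset $D_1=\{x_i : b_i=1\}$, use the hypothesis $\Pr_{(x,y)\sim e}[d(x,y)\le r]\ge 1-\gamma$ together with $c$-weak independence (plus a union bound over the other $n-1$ points, valid since the marginal of $y$ is $\nu$ and the $x_j$ are independent $\mu$-samples) to conclude that off a probability-$O(\gamma)$ event the ANNS answer equals $b_{i^*}$. Your extra bookkeeping (Chernoff bound on $|D_1|$, explicit error accounting before invoking Theorem~\ref{thm:randomizedfull}) only makes explicit what the paper leaves implicit.
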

\begin{proof}
As before, given a GNS instance $(x_1,b_1),\ldots,(x_n,b_n)$, we consider the dataset $D_1=\{x_i: b_i = 1\}$ as our input for the $(c,r)$-ANNS problem. It is easy to see that when $x_i \sim \mu$ and $b_i \sim \{0,1\}$, this set $D_1$ is a uniformly random dataset from $\mu$. The properties above imply weak independence for the GNS instance. Finally, except with small probability, we have $d(x_j,y) > cr$ for all $j \neq i$, so that for the  $c$-approximate NNS instance, the answer is $1$ if and only $x_i$ is in $D_1$, i.e. if and only if $b_i=1$. The claim follows.
\end{proof}

Thus to prove randomized data structure lower bounds for $c$-approximate NNS, it suffices to exhibit $r$ and a distribution $e$ which satisfies the above properties and has large expansion.

\subsection{Computing Expansion}
\label{sec:hypercubeexp}
Next we bound the expansion of the hypercube for appropriate distributions, which would imply the claimed lower bounds for the Hypercube. The vertex expansion result for $\ell_\infty$ in~\cite{ACP08} will lead to the cell probe lower bounds for $\ell_{\infty}$ proved in their work.

We will set $\mu$ to be the uniform distribution over the hypercube. For a set $A \subseteq H$, we let $a = \mu(A)$.
Observe that if we take $n$ uniformly random points from a $d$ dimensional hypercube then with high probability all pairs of points are at least $d/2 - O(\sqrt {d \log n})$ apart. Using bounds for the expansion for the ($G_r$ corresponding to the) $d$-dimensional hypercube, we will derive lower bounds for the near neighbor problem on the hypercube.

The following lemma is proved in~\cite{Bollobas86}
\begin{lemma}[Vertex Expansion of Hypercube]
Let $H=\{0,1\}^d$ be the boolean hypercube, and let $G_r$ be the graph with the edge set $E_r=\{(u,v): |u-v|_1 \leq r\}$.
Let $h_i = \frac{1}{2^d}\sum_{j=0}^i {d \choose j}$.
Then the vertex expansion $\Phi_v(h_i) \geq {h_{i+r}}$
\end{lemma}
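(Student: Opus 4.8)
The plan is to read the lemma as exactly Harper's vertex-isoperimetric inequality on the cube. First I would observe that in $G_r$ the neighborhood of a set $A\subseteq H$ is the closed Hamming ball $B_r(A)=\{v:\exists u\in A,\ |u-v|_1\le r\}$ (the distinction between open and closed $r$-neighborhoods is immaterial, since the extremal sets will be Hamming balls, for which $A\subseteq B_r(A)=N(A)$). Hence, with $\mu=\nu$ the uniform measure, $\Phi_v(\delta)=\min_{A:\ \mu(A)\le\delta}\mu(B_r(A))/\mu(A)$, so the statement to prove is a lower bound on the measure of the $r$-expansion of a set of bounded measure, normalized by the set's measure.

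Second, I would invoke Harper's theorem in the form stated in~\cite{Bollobas86}: among all subsets of $\{0,1\}^d$ of a prescribed cardinality, an initial segment of the simplicial (Hamming) order minimizes the cardinality of the $r$-neighborhood. Writing $f_r(a)$ for the resulting minimum of $\mu(B_r(A))$ over all $A$ with $\mu(A)=a$, Harper's theorem gives $f_r(h_j)=h_{j+r}$ for every $j$ (a Hamming ball of radius $j$ has measure $h_j$, and its $r$-neighborhood is the ball of radius $j+r$), and $f_r$ is nondecreasing. If one prefers a self-contained argument to a citation, these extremal sets are produced by the standard compression (down-shifting) argument.

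Third, I would pass from "extremal at a ball's own size" to "extremal among all sets of at most that size" using monotonicity of the ratio. The one genuinely non-vacuous point is that $a\mapsto f_r(a)/a$ is nonincreasing on the range used in the applications; at ball sizes this is the assertion that $j\mapsto h_{j+r}/h_j$ is nonincreasing, which after clearing denominators reduces to the standard log-concavity-type fact that $a\mapsto \binom{d}{a+1}/\sum_{l\le a}\binom{d}{l}$ is nonincreasing for indices up to the middle layer, and the interpolation between consecutive ball sizes follows from the nested structure of simplicial initial segments. Granting this, for any $A$ with $\mu(A)\le h_i$,
\[
\frac{\mu(N(A))}{\mu(A)}\ \ge\ \frac{f_r(\mu(A))}{\mu(A)}\ \ge\ \frac{f_r(h_i)}{h_i}\ =\ \frac{h_{i+r}}{h_i}\ \ge\ h_{i+r},
\]
and taking the minimum over all such $A$ gives $\Phi_v(h_i)\ge h_{i+r}/h_i\ge h_{i+r}$, which is even a bit stronger than the claimed bound (and it is this ratio form, $h_{i+r}/h_i$, that is implicitly used in the applications).

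The main obstacle: there is essentially nothing conceptually hard — the content is Harper's theorem, which we quote. The only items needing care are the harmless bookkeeping identifying $N(A)$ with $B_r(A)$, and the elementary monotonicity of partial binomial sums invoked to reduce to the extremal case of Hamming balls; the latter is the more computational of the two but is classical, and in the regime relevant to the applications ($r$ and $i$ with $i+r$ near $d/2$) all indices lie below the middle layer, where the monotonicity is transparent.
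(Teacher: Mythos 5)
Your proposal is correct and takes essentially the same route as the paper, which establishes this lemma simply by quoting Harper's vertex-isoperimetric theorem as presented in~\cite{Bollobas86}; the extra bookkeeping you supply (identifying $N(A)$ with the Hamming ball $B_r(A)$ and passing to sets of measure at most $h_i$ via monotonicity of $h_{j+r}/h_j$) matches how the bound is actually invoked, in the ratio form $h_{i+r}/h_i$, in the applications. The only caveats are minor: the literal statement $\Phi_v(h_i)\ge h_{i+r}$ already follows from $N(A)\supseteq A$ (the ratio is at least $1\ge h_{i+r}$), so the monotonicity step is needed only for your stronger ratio form, and there your interpolation between consecutive ball sizes is asserted from nestedness alone, which by itself only yields $f_r(a)\ge h_{j+r}$ on $[h_j,h_{j+1}]$ and hence the inconclusive comparison $h_{j+r}/h_{j+1}$ versus $h_{i+r}/h_i$; making that step airtight requires the exact form of Harper's theorem for initial segments (or a compression argument), not just the ball values of $f_r$.
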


Setting $r = d/3$ we see that the node expansion $\Phi_v \ge h_{d/2}/h_{d/2 - d/3} = 2^{\Omega(d)}$. From theorem~\ref{thm:deterministic} we get $(mwt/n)^2 \ge 2^{\Omega(d)}$ or $t \geq d/log(mwd/n)$. This gives us the deterministic lower bound for $2-$approximation in $\ell_1$ norm. Setting $r = \epsilon d/2$ gives us an a lower bound for $O(\epsilon)-$ approximation. For this value of $r$, $\Phi_v \ge h_{d/2}/h_{(1-\epsilon)d/2} = 2^{\Omega(\epsilon^2 d)}$. So we get the bound of $t \geq \epsilon^2 d/log(mwd/n)$.

For randomized lower bounds, we will use the distribution defined by the noise operator $T_{\rho}$ where $\rho = (1-\frac{2r}{d})$. I.e. to sample from $e$, we sample $x$ from the uniform distribution $\mu$ and sample $y$ by flipping each bit of $x$ independently with probability $\frac{(1-\rho)}{2}=\frac{r}{d}$. It is easy to check that for any $r \leq (\frac{1}{2}-\e)d$ and $d$ being $\Omega(\log n/\eps^2)$, we indeed have $\Pr_{(x,y) \sim e}[d(x,y) \leq (1+\frac{\e}{10})r] \geq 1- \frac{\gamma}{n}$. Moreover, since $\mu=\nu$, by the discussion above, $\Pr_{y\in \nu,z\in \mu}[d(y,z) \leq \frac{d}{2} - O(\sqrt {d \log n})] \leq \frac{\gamma}{n^2}$. Thus it remains to compute the expansion for appropriate $r$.

It will be convenient to work with the edge expansion.
\begin{definition}
We define the edge expansion $\Phi_e(\delta)$ for a $(G,e)$ as $\Phi_e (\delta) = min_{\mu(A) \leq \delta} \frac{e(A,V)}{e(A,A)}$.  Thus for any set of size measure $\delta$, at most $\frac{1}{\Phi_e(\delta)}$ mass of edges incident on $A$ stay within $A$.
\end{definition}

\begin{observation}
For any $(G,e)$, if $\mu$ is uniform, then $\Phi_{r}(\delta, \gamma) = \Omega(\gamma\Phi_e (2\delta))$
\end{observation}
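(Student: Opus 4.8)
The goal is to show that for every $A\subseteq V$ with $\nu(A)\le\delta$ and every $B\subseteq U$ with $e(B,A)\ge\gamma e(U,A)$ we have $\mu(B)\ge\Omega(\gamma\Phi_e(2\delta))\cdot\nu(A)$; since $\mu=\nu$ is uniform here, this is exactly $\Phi_r(\delta,\gamma)=\Omega(\gamma\Phi_e(2\delta))$ (with auxiliary weight $w=\mu$). The natural move is to merge $A$ and $B$ into a single set and feed it to the edge‑expansion bound.

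First I would note that every edge counted in $e(B,A)$ lies inside $S:=A\cup B$ (viewing $A$ and $B$ as subsets of the common vertex set), so $e(S,S)\ge e(B,A)\ge\gamma e(U,A)=\gamma\nu(A)$, while $e(S,V)=\mu(S)\le\mu(A)+\mu(B)=\nu(A)+\mu(B)$. When $\mu(S)\le 2\delta$ — which holds e.g. whenever $\mu(B)\le\delta$ — the definition of $\Phi_e$ applied to $S$, together with the fact that $\Phi_e$ is non‑increasing, gives $e(S,S)\le e(S,V)/\Phi_e(\mu(S))\le(\nu(A)+\mu(B))/\Phi_e(2\delta)$. Combining the two estimates yields $\gamma\nu(A)\Phi_e(2\delta)\le\nu(A)+\mu(B)$, i.e. $\mu(B)\ge\nu(A)\bigl(\gamma\Phi_e(2\delta)-1\bigr)$. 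In the regime of interest $\gamma\Phi_e(2\delta)\ge 2$, so this is $\mu(B)\ge\tfrac12\gamma\Phi_e(2\delta)\,\nu(A)$, which is what we want; when $\gamma\Phi_e(2\delta)<2$ the claimed bound is only a constant, and it follows from the trivial estimate $\mu(B)\ge e(B,A)\ge\gamma\nu(A)$ combined with the universal bound $\Phi_e(2\delta)=O(1/\delta)$ discussed below.

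The one remaining case is when the minimizing $B$ has $\mu(B)>\delta$, so that $S=A\cup B$ may exceed scale $2\delta$ and $\Phi_e(2\delta)$ cannot be invoked directly. Here I would use the easy fact that, for uniform $\mu$, one always has $\Phi_e(\eta)=O(1/\eta)$: a random set of measure $\approx\eta$ captures in expectation a $\gtrsim\eta$‑fraction of the edge mass incident on it (each internal pair occurs with probability $\approx\eta^2$, and self‑loops only help), so some such set witnesses $\Phi_e(\eta)=O(1/\eta)$. Hence $\gamma\Phi_e(2\delta)\,\nu(A)=O(\gamma\nu(A)/\delta)=O(\gamma)$, whereas $\mu(B)>\delta$; balancing these (and, when $\delta\ll\gamma$, re‑running the merging argument with $S=A\cup B$ at the appropriate scale $\mu(S)\le 2\mu(B)$) gives $\mu(B)=\Omega(\gamma\Phi_e(2\delta)\nu(A))$ in this case as well.

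I expect the merging inequality $e(B,A)\le e(A\cup B,A\cup B)$ and its combination with edge expansion to be completely routine; the one delicate point is the bookkeeping of scales — ensuring that the parameter that surfaces is $\Phi_e(2\delta)$ rather than $\Phi_e$ at some inflated scale — which is precisely what the case split on the size of $\mu(B)$ and the universal bound $\Phi_e=O(1/\eta)$ are there to absorb. In the applications, where $\Phi_e(\eta)$ is polynomial in $1/\eta$ and hence $\Phi_e(c\eta)=\Omega(\Phi_e(\eta))$ for any constant $c$, this bookkeeping is immediate and the whole argument collapses to the two‑line computation of the second paragraph.
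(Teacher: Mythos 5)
Your single-merge step ($e(B,A)\le e(A\cup B,A\cup B)$ followed by edge expansion) is indeed the first half of the paper's argument, and your Case 1 ($\mu(B)\le\delta$ and $\gamma\Phi_e(2\delta)\ge 2$) is fine. The genuine gap is the case $\mu(B)>\delta$, and this is not a fringe case but the main one: when $\gamma\Phi_e(2\delta)\gg 1$ the very bound you are proving forces the minimizing $B$ to have measure around $\gamma\Phi_e(2\delta)\nu(A)$, typically far above $\delta$, so the merge at scale $2\delta$ is unavailable exactly where the statement has content. Re-running the merge with $S=A\cup B$ at scale $2\mu(B)$ only yields $\mu(B)\ge\tfrac{\gamma}{2}\Phi_e(2\mu(B))\nu(A)$, and since $\Phi_e$ is non-increasing in its scale argument, $\Phi_e(2\mu(B))$ can be far smaller than $\Phi_e(2\delta)$. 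The universal bound $\Phi_e(\eta)=O(1/\eta)$ does not bridge this: it only shows the target $\gamma\Phi_e(2\delta)\nu(A)$ is $O(\gamma\nu(A)/\delta)=O(\gamma)$, which the facts $\mu(B)>\delta$ and $\mu(B)\ge\gamma\nu(A)$ reach only when $\delta\gtrsim\gamma$; your "balancing" works out for polynomially-decaying $\Phi_e$ (the hypercube application) but that is a property of the application, not of the general statement, which asserts an absolute constant for every $(G,e),\delta,\gamma$. (Your fallback for $\gamma\Phi_e(2\delta)<2$ is also off — $\mu(B)\ge\gamma\nu(A)$ misses the target by a factor up to $\Phi_e(2\delta)$ when $\gamma$ is small — but the paper's own write-up is equally cavalier in that corner, so I would not regard it as the crux.)

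The missing idea, and the one step of the paper's proof your proposal does not contain, is to \emph{partition} $B$ rather than merge it whole. Write $B=B_1\cup\cdots\cup B_k$ with each $\mu(B_i)\le\nu(A)\le\delta$ and $k\le\mu(B)/\nu(A)+1$, apply your merge bound to each pair at the correct scale $\mu(A\cup B_i)\le 2\delta$, namely $e(A,B_i)\le e(A\cup B_i,A\cup B_i)\le\frac{\nu(A)+\mu(B_i)}{\Phi_e(2\delta)}$, and sum:
\begin{equation*}
e(A,B)\;\le\;\frac{k\,\nu(A)+\mu(B)}{\Phi_e(2\delta)}\;\le\;\frac{2\mu(B)+\nu(A)}{\Phi_e(2\delta)}.
\end{equation*}
Combined with $e(A,B)\ge\gamma e(U,A)=\gamma\nu(A)$ this gives $\mu(B)\ge\tfrac12\bigl(\gamma\Phi_e(2\delta)-1\bigr)\nu(A)$ for every admissible $B$, \emph{regardless of its size}, which is the claimed $\Omega(\gamma\Phi_e(2\delta))$ in the meaningful regime $\gamma\Phi_e(2\delta)\ge 2$ (and the additive $1$ is absorbed exactly as in your Case 1). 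This subadditivity step is what keeps the expansion parameter pinned at scale $2\delta$ while letting $B$ be arbitrarily large; without it, the scale bookkeeping you flag cannot be absorbed, and your Case 2 does not go through.
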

\begin{proof}
First we will argue that for any sets $A$ and $B$ where $\mu(A) = \mu(B)\leq \delta$, $e(A,B)\leq 2 e(A,V) /\Phi_e(2\delta)$. To see this note that $e(A,B) \leq e(A \cup B, A \cup B) \leq \frac{1}{\Phi_e(\mu(A\cup B))} e(A \cup B, V) =
\frac{1}{\Phi_e(2\delta)} 2 e(A,V)$

Now consider any set $A$ of measure at most $\delta$, and let $B$ by any other set of measure $\delta\gamma\Phi_e(2\delta)/2$. We wish to argue that $e(A,B) \leq \gamma e(A,V)$ which would imply the claim.

Let $B_1,\ldots,B_k$ be a partition of $B$ into $k = \lceil\frac{\delta\gamma\Phi_e(2\delta)}{2\delta}\rceil$ pieces of measure $\delta$ each. $e(A,B) \leq \sum_i e(A,B_i) \leq \frac{k}{\frac{1}{\Phi_e(\mu(A\cup B))}} \leq \gamma$. The claim follows.

\end{proof}

\begin{lemma}[Edge expansion of Hypercube]
\label{lem:hypercube-edge}
Let $H=\{0,1\}^d$ be the boolean hypercube, and $(G,e)$ be as above for $r < \frac d 4$. Then the edge expansion $\Phi_e(a) \geq a^{-\Omega(r/d)}$.
\end{lemma}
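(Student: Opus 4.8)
The plan is to establish the bound $\Phi_e(a) \geq a^{-\Omega(r/d)}$ by analyzing the noise operator $T_\rho$ with $\rho = 1 - \frac{2r}{d}$ using its spectral decomposition (Fourier analysis on the hypercube). Recall that $e(A,A) = \langle \mathbf{1}_A, T_\rho \mathbf{1}_A \rangle$ (with appropriate normalization), $e(A,V) = \mu(A) = a$, and $T_\rho$ acts diagonally in the Fourier basis: $T_\rho \chi_S = \rho^{|S|} \chi_S$. Thus $e(A,A) = \sum_S \rho^{|S|} \widehat{\mathbf{1}_A}(S)^2$, while $a = \widehat{\mathbf{1}_A}(\emptyset) = \sum_S \widehat{\mathbf{1}_A}(S)^2 / \ldots$ — more precisely, $\|\mathbf{1}_A\|_2^2 = a$ and $\widehat{\mathbf{1}_A}(\emptyset) = a$. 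The goal is to upper bound $e(A,A)/a$ by $a^{\Omega(r/d)}$, equivalently show $\langle \mathbf{1}_A, T_\rho \mathbf{1}_A\rangle \leq a \cdot a^{\Omega(r/d)} = a^{1 + \Omega(r/d)}$.

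First I would invoke the standard hypercontractivity / Bonami–Beckner inequality: for $1 \leq q \leq 2$ and $\rho \leq \sqrt{q-1}$, one has $\|T_\rho f\|_2 \leq \|f\|_q$. Applying this with $f = \mathbf{1}_A$ gives $\|T_{\sqrt{\rho}} \mathbf{1}_A\|_2 \leq \|\mathbf{1}_A\|_{1+\rho}$ when $\sqrt{\rho} \leq \sqrt{\rho}$, i.e., choosing the parameters so that $q - 1 = \rho$ (or an appropriate value). Then $\langle \mathbf{1}_A, T_\rho \mathbf{1}_A \rangle = \|T_{\sqrt\rho} \mathbf{1}_A\|_2^2 \leq \|\mathbf{1}_A\|_{1+\rho}^2 = (a^{1/(1+\rho)})^2 \cdot \ldots$ — since $\mathbf{1}_A$ is a $0/1$ function with $\mu(A) = a$, $\|\mathbf{1}_A\|_q = a^{1/q}$, so $\|\mathbf{1}_A\|_{1+\rho}^2 = a^{2/(1+\rho)}$. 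Now for $r < d/4$ we have $\rho = 1 - 2r/d > 1/2$, so $\frac{2}{1+\rho} = 1 + \frac{1-\rho}{1+\rho} = 1 + \frac{2r/d}{2 - 2r/d} \geq 1 + \frac{2r}{d} \cdot \frac{1}{2} = 1 + \frac{r}{d}$. Hence $e(A,A) \leq a^{1 + \Omega(r/d)}$, giving $\Phi_e(a) = e(A,V)/e(A,A) = a / e(A,A) \geq a^{-\Omega(r/d)}$, as desired. The precise constant hidden in $\Omega(\cdot)$ comes from bounding $\frac{1-\rho}{1+\rho}$ from below for $r < d/4$.

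One subtlety to handle carefully: the distribution $e$ in the paper flips each bit independently with probability $r/d$, which is exactly $T_\rho$ with $\rho = 1 - 2r/d$, but the ``distance $\leq r$'' graph $G_r$ used elsewhere is slightly different (it is the ball of radius $r$, not the noise distribution). As set up in Section~\ref{sec:hypercubeexp}, the randomized lower bounds use the noise operator distribution precisely so that this Fourier argument applies, and a Chernoff bound shows the noise distribution is concentrated on distance $\approx r$, so I would just work directly with $T_\rho$ and not worry about the combinatorial ball. A second point: I should double-check that $\|\mathbf{1}_A\|_q = a^{1/q}$ uses the uniform probability measure $\mu$ on $\{0,1\}^d$, which is exactly the hypothesis ``$\mu$ uniform'' in the observation preceding the lemma, and that the normalization of $e(A,A)$ as an edge-measure matches $\langle \mathbf{1}_A, T_\rho \mathbf{1}_A\rangle$.

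The main obstacle I anticipate is purely bookkeeping: matching the paper's normalization conventions for $e(A,A)$ and $e(A,V)$ against the operator-theoretic quantities $\langle \mathbf{1}_A, T_\rho \mathbf{1}_A\rangle$ and $\mu(A)$, and pinning down the exact relationship $q-1 = \rho$ versus $q - 1 = \sqrt\rho$ so that the exponent works out to $1 + \Omega(r/d)$ rather than something weaker. There is no deep difficulty — hypercontractivity does all the real work — but one must be careful that the hypercontractive threshold $\rho \leq \sqrt{q-1}$ is applied to $T_{\sqrt\rho}$ (whose square is $T_\rho$) with the right $q$, and that the restriction $r < d/4$ (equivalently $\rho > 1/2$) is what makes the gain a genuine constant factor in the exponent.
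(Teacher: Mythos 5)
Your proposal is correct and follows essentially the same route as the paper: express $e(A,A)=\langle T_\rho \one_A,\one_A\rangle=\|T_{\sqrt\rho}\one_A\|_2^2$, bound it by $\|\one_A\|_{1+\rho}^2=a^{2/(1+\rho)}$ via hypercontractivity, and compare with $e(A,V)=a$ to get the exponent $\tfrac{1-\rho}{1+\rho}=\Omega(r/d)$. The normalization and exponent bookkeeping you flag works out exactly as you describe, matching the paper's one-line computation.
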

\begin{proof}
For sets $A,B$, it is easy to see that $e(A,B) =  \langle T_{\rho} \one_A,  \one_B\rangle$. But by the Hypercontractive inequality,
\begin{align*}
\langle T_{\rho} \one_A, \one_A\rangle = \langle T_{\sqrt{\rho}} \one_A, T_{\sqrt{\rho}}\one_A\rangle = \| T_{\sqrt{\rho}} \one_A\|_2^2  \leq |1_A|_{1+\rho}^2 = a^{\frac{2}{1+\rho}}.
\end{align*}
Also $e(A,V) = a$. The claim follows by substituting the value of $\rho$.
\end{proof}

Setting $r = \epsilon d/2$, we get that for constant $\gamma$, $\Phi_{r}(a,\gamma) \geq a^{-\Omega(\epsilon)}$. From theorem ~\ref{thm:randomized} (first inequality) it follows that $(mw/n)^t \geq m^{-\Omega(\epsilon)}$ implying $m \geq (n/w)^{1+\Omega(\epsilon/t)}$.

\begin{lemma}[Robust expansion of Hypercube]
Let $H=\{0,1\}^d$ be the boolean hypercube, and let $(G,e)$ be as above for $r=(\frac{1}{2}-\e)d$ Then for any sets $A$ and $B$ such that $\mu(B) \leq \mu(A)^{4(1-\frac{2r}{d})^2}$, $e(A,B) \leq \mu(A)^{(1-\frac{2r}{d})^2}d e(A,V)$.
\end{lemma}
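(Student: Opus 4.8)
The plan is to reduce everything to a single application of hypercontractivity, carried out as in the proof of \lemmaref{lem:hypercube-edge} but with an \emph{asymmetric} choice of exponents. Writing $\rho = 1-\tfrac{2r}{d}$, recall that $e$ samples $x$ uniformly and $y$ from the $\rho$-noisy copy of $x$, so that $e(A,B) = \langle T_\rho \one_A,\one_B\rangle$ (with $\langle f,g\rangle = \Ex_x[f(x)g(x)]$ taken under the uniform measure) and $e(A,V) = \mu(A)$. Note that \lemmaref{lem:hypercube-edge} is not applicable here: $r$ is close to $\tfrac d2$ rather than below $\tfrac d4$, and in any case that lemma only bounds edge expansion, so the bilinear form $\langle T_\rho\one_A,\one_B\rangle$ must be bounded directly.

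First I would invoke hypercontractivity in the sharp form $\|T_\rho h\|_2 \le \|h\|_{1+\rho^2}$, which is the boundary case of $\|T_\rho h\|_q\le\|h\|_p$ for the exponent pair $p=1+\rho^2$, $q=2$ (since then $\rho=\sqrt{(p-1)/(q-1)}$). Applying this to $h=\one_A$, followed by Cauchy--Schwarz, gives
\begin{align*}
e(A,B) \;=\; \langle T_\rho \one_A,\one_B\rangle \;\le\; \|T_\rho \one_A\|_2\,\|\one_B\|_2 \;\le\; \|\one_A\|_{1+\rho^2}\,\|\one_B\|_2 \;=\; \mu(A)^{\frac{1}{1+\rho^2}}\,\mu(B)^{\frac12}.
\end{align*}
Next I would substitute the hypothesis $\mu(B)\le\mu(A)^{4\rho^2}$ to obtain $e(A,B)\le\mu(A)^{\frac{1}{1+\rho^2}+2\rho^2}$, and finish using the elementary bound $\frac{1}{1+\rho^2}\ge 1-\rho^2$: the exponent is then at least $1+\rho^2$, so, since $\mu(A)\le 1$, $e(A,B)\le\mu(A)^{1+\rho^2}=\mu(A)^{\rho^2}\,\mu(A)=\mu(A)^{\rho^2}\,e(A,V)$. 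This is in fact marginally stronger than the stated inequality, so the factor $d$ is pure slack (it would naturally appear if one instead bounded $e(A,B)$ Fourier-level by Fourier-level). Taking the contrapositive and optimizing over sets $A$ with $\nu(A)\le\delta$ then yields the robust-expansion estimate $\Phi_r(\delta,\gamma)\ge\delta^{-(1-4\rho^2)}$ used in \sectionref{sec:applications}, valid for any $\gamma$ that is not too small relative to $d\,\delta^{\rho^2}$.

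I do not anticipate a real obstacle: the only content is selecting the asymmetric exponent pair $(1+\rho^2,2)$, and checking that $4$ is the correct constant in the hypothesis $\mu(B)\le\mu(A)^{4\rho^2}$ — a one-line optimization shows $4$ is exactly what pushes the final exponent up to $1+\rho^2$. It is worth emphasizing \emph{why} the asymmetry matters: the symmetric choice allowed by hypercontractivity gives only $e(A,B)\le\mu(A)^{(1+4\rho^2)/(1+\rho)}$, whose exponent is roughly $1-\rho$ for small $\rho$ and hence an expansion of merely $\delta^{-\rho}$ — this is precisely the $\tfrac1{m^{\e}}$-versus-$\tfrac1{m^{\e^2}}$ gap discussed in \sectionref{sec:techniques}, and is the reason robust expansion (rather than edge or vertex expansion) is the right notion in the $(1+\e)$-approximate regime for the Hypercube. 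The degenerate cases ($\mu(A)$ or $\mu(B)$ equal to $0$, or $A=V$, or $\rho\in\{0,1\}$) are immediate.
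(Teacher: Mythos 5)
Your proposal is correct and follows essentially the same route as the paper's own proof: write $e(A,B)=\langle T_\rho\one_A,\one_B\rangle$, apply Cauchy--Schwarz and then hypercontractivity with the exponent pair $(1+\rho^2,2)$ to get $\mu(A)^{\frac{1}{1+\rho^2}}\mu(B)^{\frac12}$, and finish using $\mu(B)\le\mu(A)^{4\rho^2}$ together with $\frac{1}{1+\rho^2}\ge 1-\rho^2$. Your observation that the factor $d$ in the statement is slack is also consistent with the paper, whose own computation likewise ends with $\mu(A)^{\rho^2}e(A,V)$ and no factor of $d$.
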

\begin{proof}
As in the proof of lemma~\ref{lem:hypercube-edge}, we use the hypercontractive inequality.
\begin{eqnarray*}
e(A,B) &=& \langle T_{\rho} \one_A, \one_B\rangle\\
  &\leq &\|T_{\rho} \one_A\|_2 \|\one_B\|_2\\
&\leq& \|\one_A\|_{1+\rho^2} \|\one_B\|_2\\
&=& a^{\frac{1}{1+\rho^2}} b^{\frac 1 2}\\
&\leq& a^{1-\rho^2} a^{2\rho^2}\\
&=& a^{\rho^2} e(A,V).
\end{eqnarray*}
The claim follows.
\end{proof}
\begin{corollary}
For any $\beta \geq 0$, and $r \leq \frac{d}{2}$, setting $\rho=1-\frac{2r}{d}$, $\Phi_{av}(\beta,\beta^{\rho^2}) \geq \beta^{1-4\rho^2}$ for $G_r$ as above.
\end{corollary}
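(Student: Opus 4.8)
The plan is to simply unwind the definition of robust expansion and then invoke the preceding hypercontractivity estimate for $G_r$, keeping careful track of exponents. Write $\rho = 1-\tfrac{2r}{d}$ and note that since $\mu=\nu$ is uniform we have $\mu(A)=\nu(A)=e(U,A)=e(A,V)=a$, say, for every $A$. Unwinding the definition, the assertion $\Phi_{r}(\beta,\beta^{\rho^2})\geq\beta^{1-4\rho^2}$ is exactly: for every $A$ with $a=\mu(A)\leq\beta$ and every $B\subseteq U$ with $\mu(B)<\beta^{1-4\rho^2}\,a$, one has $e(B,A)\leq\beta^{\rho^2}\,a=\beta^{\rho^2}e(U,A)$ (contrapositively, any $B$ capturing a $\beta^{\rho^2}$-fraction of the edges incident on $A$ must have $\mu(B)\geq\beta^{1-4\rho^2}a$, which is the claimed expansion ratio).

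So fix such an $A$ and $B$ and restrict to the meaningful regime $\tfrac d4\leq r\leq\tfrac d2$, i.e. $0\leq\rho\leq\tfrac12$, so that $1-4\rho^2\geq0$. The first step is to verify the hypothesis of the robust-expansion lemma for the hypercube, namely $\mu(B)\leq a^{4\rho^2}$. Using $a\leq\beta\leq1$ together with $1-4\rho^2\geq0$, the monotonicity of $x\mapsto x^c$ on $(0,1]$ gives the chain
\[
\beta^{1-4\rho^2}\;\leq\;\beta^{-(1-4\rho^2)}\;\leq\;a^{-(1-4\rho^2)}\;=\;a^{4\rho^2-1},
\]
hence $\mu(B)<\beta^{1-4\rho^2}a\leq a^{4\rho^2-1}\cdot a=a^{4\rho^2}$. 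Now the preceding lemma (applied with this $A$ and $B$) yields $e(A,B)\leq a^{\rho^2}e(A,V)=a^{\rho^2}a$, and since $a\leq\beta\leq1$ and $\rho^2>0$ we have $a^{\rho^2}\leq\beta^{\rho^2}$, so $e(B,A)=e(A,B)\leq\beta^{\rho^2}a=\beta^{\rho^2}e(U,A)$, which is exactly the bound we needed.

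There is no genuinely hard step: the only thing requiring care is getting the directions of the inequalities right, since $a,\beta\in(0,1]$ while the exponents $\rho^2$, $1-4\rho^2$ and $4\rho^2-1$ have differing signs. For completeness one should note that the complementary regime $r<\tfrac d4$ (i.e. $\rho>\tfrac12$, $1-4\rho^2<0$) is not covered by this estimate and is handled instead by the edge-expansion bound of Lemma~\ref{lem:hypercube-edge} combined with the observation relating $\Phi_{r}$ to $\Phi_e$ for uniform $\mu$. The boundary case $a=\beta$ above (where the final inequality could in principle be an equality) is dispatched by replacing $\beta^{\rho^2}$ by any slightly larger exponent, or by observing that $e(A,B)\leq a^{\rho^2}a$ is strict whenever $a<\beta$.
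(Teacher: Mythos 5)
Your proposal is correct and coincides with the paper's (implicit) derivation: the corollary is just the preceding hypercontractivity-based robust-expansion lemma unwound through the definition of $\Phi_r$, exactly as you do, with the exponent bookkeeping you spell out. Your restriction to $\rho\le\tfrac12$ (i.e.\ $r\ge d/4$) is in fact the right reading of the statement, since the lemma's hypothesis $\mu(B)\le\mu(A)^{4\rho^2}$ cannot be verified (and the claimed bound $\beta^{1-4\rho^2}>1$ fails) outside that regime, which the paper glosses over because its application takes $r$ close to $d/2$.
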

Setting $r = (1-\eps) d/2$, we see that from theorem~\ref{thm:randomized} that any randomized algorithm must use space
$m \geq (\frac{n}{wt^5})^{\frac{4}{t\epsilon^2}}$.

The vertex expansion result for $\ell_\infty$ has already been computed in the work by ~\cite{ACP08} for proving cell probe lower bounds for $\ell_\infty$.
They consider the $d$-dimensional grid $\{0,1,\ldots,m\}^d$ with a non uniform measure defined as follows. The measure $\pi$ over $\{0,1,\ldots,m\}$ is defined by  $\pi(i) = 2^{-(2\rho)^i}$ for all $i>0$ and $\pi(0) = 1 - \sum_{i>0} \pi(i)$. One then defines $\mu_d(x_1,x_2,\ldots, x_d) = \pi(x_1)\cdot\pi(x_2)\ldots \pi(x_d)$. For this measure $\mu_d$ the following expansion theorem was shown in ~\cite{ACP08}.

\begin{lemma}~\cite{ACP08}
$\mu_d(N(A)) \geq \mu_d(A)^{1/\rho}$
\end{lemma}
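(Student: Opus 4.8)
\medskip
\noindent\textbf{Proof plan.}
Since $\mu_d = \pi^{\otimes d}$ is a product measure and the relevant $\ell_\infty$ neighborhood on the grid (the radius‑$1$ ball, $N(x)=\prod_k N_1(x_k)$ with $N_1(a)=\{a-1,a,a+1\}\cap\{0,\dots,m\}$) modifies each coordinate only locally, the plan is to induct on the dimension $d$ and reduce everything to a one‑dimensional isoperimetric inequality for $\pi$. Everything will hinge on the fact that $\pi$ decays \emph{doubly} exponentially, so that a single ``step down'' in one coordinate gains a huge amount of measure: $\pi(i-1)=\pi(i)^{1/(2\rho)}$, and consequently (the tail being a rapidly shrinking series) $\sum_{j\ge i}\pi(j)\le 2\pi(i)$ for every $i\ge 1$, while $\sum_{j\ge1}\pi(j)<\tfrac12$ for $\rho>1$.

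First I would dispose of the one‑dimensional base case: $\pi(N_1(A))\ge \pi(A)^{1/\rho}$ for every $A\subseteq\{0,\dots,m\}$. If $0\in A$ then $N_1(A)=\{0,\dots,m\}$ and there is nothing to prove, so set $i^\star:=\min A\ge 1$. Then the tail bound gives $\pi(A)\le 2\pi(i^\star)$, and $\pi(A)<\tfrac12$ because $0\notin A$. Since $i^\star-1\in N_1(A)$,
\[
\pi(N_1(A))\ \ge\ \pi(i^\star-1)\ =\ \pi(i^\star)^{1/(2\rho)}\ \ge\ \bigl(\pi(A)/2\bigr)^{1/(2\rho)}\ \ge\ \pi(A)^{1/\rho},
\]
the last step being equivalent (raise to the $2\rho$) to $\pi(A)/2\ge\pi(A)^2$, i.e. $2\pi(A)\le 1$.

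Next, the tensorization. I would slice $A$ along the last coordinate: for $j\in\{0,\dots,m\}$ put $A^{(j)}:=\{x':(x',j)\in A\}$ and $a_j:=\mu_{d-1}(A^{(j)})$, so $\mu_d(A)=\sum_j\pi(j)a_j$. Because the neighborhood is a product of one‑dimensional neighborhoods, the $\ell$‑th slice of $N(A)$ contains $N_{d-1}(A^{(j)})$ for every $j\in N_1(\ell)$, so the induction hypothesis yields $\mu_{d-1}\bigl((N(A))^{(\ell)}\bigr)\ge\max_{j\in N_1(\ell)}a_j^{1/\rho}$ and hence
\[
\mu_d(N(A))\ \ge\ \sum_{\ell}\pi(\ell)\,\max_{j\in N_1(\ell)}a_j^{1/\rho}.
\]
It then remains to prove the purely numerical inequality $\sum_{\ell}\pi(\ell)\max_{j\in N_1(\ell)}a_j^{1/\rho}\ge\bigl(\sum_j\pi(j)a_j\bigr)^{1/\rho}$ for $a_j\in[0,1]$. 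The mechanism I would use: route each $j\ge 2$ to $\ell=j-1$ (distinct values of $\ell$, all $\ge 1$), contributing $\pi(j-1)a_j^{1/\rho}=\pi(j)^{1/(2\rho)}a_j^{1/\rho}\ge(\pi(j)a_j)^{1/\rho}$ — with large slack, since $\pi(j)^{1/(2\rho)-1/\rho}=\pi(j)^{-1/(2\rho)}\ge 2^{2\rho}$ — and then combine these via subadditivity of $t\mapsto t^{1/\rho}$ on $[0,\infty)$; the two bottom slices $\ell\in\{0,1\}$ must absorb $\pi(0)a_0+\pi(1)a_1$, using $\mu_{d-1}((N(A))^{(0)}),\mu_{d-1}((N(A))^{(1)})\ge\max(a_0,a_1)^{1/\rho}$ together with $\pi(0)+\pi(1)=1-o(1)$.

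The hard part will be exactly this last numerical step at the bottom of the scale. Naively splitting $(\pi(0)a_0+\pi(1)a_1)^{1/\rho}\le(\pi(0)a_0)^{1/\rho}+(\pi(1)a_1)^{1/\rho}$ is too lossy, because $\pi(1)^{1/\rho}=2^{-2}$ is vastly larger than $\pi(1)$; the clean way around it is a compression argument à la Harper — show that pushing the mass of $A$ toward the lowest‑indexed slices can only decrease $\mu_d(N(A))$ relative to $\mu_d(A)$ (again by the step‑down gain), reducing to the case where $\mu_d(A)$ is, up to the factor‑$2$ tail, concentrated on a single slice $j^\star$, for which routing $j^\star$ to $\ell=\max(j^\star-1,0)$ and invoking the slack recorded above finishes the bound. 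I expect essentially all the work to be in this compression bookkeeping; the $\ell_\infty$-specific input (grid shifting) is standard, and in every case the quantitative win comes from $\pi(i-1)=\pi(i)^{1/(2\rho)}$ beating the target exponent $1/\rho$ with room to spare.
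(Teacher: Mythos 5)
You are not being compared against anything internal here: the paper never proves this lemma, it imports it wholesale from \cite{ACP08} (the text around the statement only records the measure $\pi(i)=2^{-(2\rho)^i}$, $\pi(0)=1-\sum_{i>0}\pi(i)$ and then cites the expansion result). So your attempt has to stand on its own, and as written it does not. The one-dimensional base case is fine (modulo the harmless point that at $i^\star=1$ one has $\pi(0)\geq\pi(1)^{1/(2\rho)}=\tfrac12$ rather than equality), and so is the reduction, by slicing along one coordinate and invoking the induction hypothesis, to the numerical statement $\sum_{\ell}\pi(\ell)\max_{j\in N_1(\ell)}a_j^{1/\rho}\geq\bigl(\sum_j\pi(j)a_j\bigr)^{1/\rho}$. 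But essentially all of the content of the lemma lives in that statement at the bottom two slices, and there you only gesture at a ``compression argument \`a la Harper'' without carrying it out; the one mechanism you do describe provably cannot work. Concretely, take $a_0=a_1=a_2=1$, $a_j=0$ for $j\geq3$: your plan (peel off the slices $j\geq2$ by subadditivity of $t\mapsto t^{1/\rho}$, route $a_2$ to level $1$, and let levels $0,1$ absorb $\pi(0)a_0+\pi(1)a_1$) requires the left side, which equals $\pi(0)+\pi(1)+\pi(2)+\pi(3)\approx 1-2^{-16\rho^4}$, to dominate the subadditive upper bound $(\pi(0)+\pi(1))^{1/\rho}+\pi(2)^{1/\rho}\approx 1-\tfrac{1}{\rho}2^{-4\rho^2}+2^{-4\rho}$, which is false: the budget available at the bottom, $(\pi(0)+\pi(1)+\pi(2))-(\pi(0)+\pi(1))^{1/\rho}=\Theta(2^{-4\rho^2})$, is doubly exponentially small, whereas the terms you must pay after splitting off slice $2$ (either $\pi(2)^{1/\rho}=2^{-4\rho}$, or any constant fraction of $\pi(1)=2^{-2\rho}$ if you try to share level $1$ between absorbing $a_1$ and routing $a_2$) are only singly exponentially small. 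The underlying numerical inequality may well be true, but your derivation of it is not, and the factor-$2$ slop you lean on in one dimension (absorbed there via the $1/(2\rho)$-versus-$1/\rho$ exponent gap and the triviality of the case $0\in A$) is not available in the induction: the hypothesis must be used at full strength, since any per-dimension multiplicative loss compounds to $c^d$, and there is no trivial case covering $\mu_d(A)$ close to $1$.

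The remaining ``compression bookkeeping'' that you defer is therefore not bookkeeping; it is the theorem. The claim that pushing mass of $A$ toward lower-indexed slices ``can only decrease $\mu_d(N(A))$ relative to $\mu_d(A)$'' is itself delicate: moving a slice from level $j$ to level $j-1$ multiplies its contribution to $\mu_d(A)$ by the enormous factor $\pi(j-1)/\pi(j)$ while the slices of $N(A)$ shift and merge, so the monotonicity of the ratio needs a genuine argument (and, after the reduction, the single-slice cases at levels $0$ and $1$ still have to be handled within the $2^{-4\rho^2}$ budget rather than ``up to the factor-$2$ tail''). Until that step is actually proved, what you have is a plan whose executed parts are the easy ones and whose missing part is exactly where \cite{ACP08} does its work.
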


Now it is easy to show that if $n$ random points are chosen from the measure $\mu_d$ then every point is at least
$g =\log_{2\rho} (\frac{\epsilon}{4} \log d)$ away from the origin under the $\ell_\infty$ norm. This is because the probability that a certain coordinate of a point is at least $\Omega(g)$ is at least $1/d^{\epsilon}$. So probability that no coordinate is more than $g$ is at most $(1-1/d^{\epsilon/4})^d \leq e^{-d^{1-\epsilon/4}}$. For $d = \Omega(\log^{1+\epsilon} n)$ this is at most $1/n^{\Omega(1)}$. So all points will be at least distance $g$ from the origin. In fact this argument also easily shows that they are in $g$ distance from each other. So setting $r=1$ gives us a lower bound for $(g-1,r)-$ANNS for $\ell_\infty$. From Theorem~\ref{thm:deterministic} it follows that to get a $O(\log_{\rho}\log d)$ approximation for NNS on $\ell_\infty$ the amount of space required is at least $m \geq (\frac{n}{wt})^{\rho/t}$.

\section{A Matching Upper Bound}
\label{sec:ubound}
We already know that the lower bound is tight in many specific cases. Here we show that the tightness holds more generally, for highly symmetric graphs. For such graphs, we show that the notion of robust expansion correctly captures the complexity of GNS for the regime with a constant number of queries.

Let $G$ be an undirected Cayley graph\footnote{We need the graph to be highly symmetric and the symmetries of Cayley graphs are convenient for the claims we need.}, and assume that it has the weak independence property for the uniform distribution. Let $m$ be such that $m=n\Phi_{r}(\tfrac{1}{m},\gamma)$ (denoted by $\Phi$ for brevity) where $\gamma \geq \tfrac{3}{4}$. Below we describe a data structure with $m$ cells and word size $O(\log n)$ which can solve GNS in a single query with constant probability for the hard distribution of inputs. This matches the lower bound of Theorem \ref{thm:randomized} for the case $t=1$.

\begin{theorem}
Let $G$ be an undirected Cayley graph that has the weak independence property for the uniform distribution. Let $m$ be such that $m=n\Phi_{r}(\tfrac{1}{m},\gamma)$ (denoted by $\Phi$ for brevity) where $\gamma \geq \tfrac{3}{4}$. Then there is a $1$-probe data structure that uses $m$ words of $w=\log |G|$ bits each that succeeds with constant probability when the data-set points $x_1,\ldots,x_n$ are drawn randomly and independently, and the query point is a random neighbor of a random $x_i$.
\end{theorem}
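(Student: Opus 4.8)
The goal is to build a 1-probe data structure with $m$ cells of $w = \log|G|$ bits, where $m = n\Phi_r(\tfrac1m,\gamma)$ for $\gamma \geq \tfrac34$. The natural idea, exploiting that $G$ is a Cayley graph, is to use its transitive symmetry group to randomize the cell-probe function: for each point $v \in V$, the query algorithm on input $y$ will look up a single cell indexed by a hash $h(y)$ into the table, where $h$ is obtained by composing a fixed ``template'' lookup map with a uniformly random symmetry $\sigma$ of $G$ chosen at preprocessing time. Concretely, I would fix a partition of $V$ into $m$ roughly equal blocks $B_1,\dots,B_m$ (using Cayley structure, translates of a fixed block), and define the lookup function for query $y$ to be the index of the block containing $\sigma^{-1}(y)$. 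By vertex-transitivity, after applying the random $\sigma$ every point $y$ lands in a uniformly random block, and more importantly each neighborhood $N(x_i)$ is spread across the blocks in a way governed by the robust expansion of $G$.

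**Key steps.** First I would set up the template partition and the random symmetry, and record that each cell is ``responsible'' for a subset $B_j$ of queries of measure $\approx 1/m$. Second, for each data point $x_i$, consider the set of blocks that $N(x_i)$ hits; by the robust expansion property $\Phi_r(\tfrac1m,\gamma)$, a $\gamma$-fraction of the edges out of (the image under $\sigma$ of) $x_i$ is covered only by at least $\Phi = m/n$ blocks, i.e.\ the ``popular'' blocks each carry only a $1/\Phi = n/m$ fraction of $N(x_i)$'s mass. Third — the crux — I would argue that, because of weak independence, the expected number of data points $x_j$ that claim any fixed block $B_j$ is $O(1)$: each $x_i$ claims roughly $n/\Phi$ blocks (the blocks meeting a constant fraction of $N(x_i)$), there are $m$ blocks, so the average load is $n \cdot (n/\Phi)/m = n^2/(m\Phi) = n/m \cdot n/\Phi$... more carefully, since $m = n\Phi$, each point is ``small'' in $\Phi = m/n$ of the blocks and the total responsibility mass is balanced so that a typical block is claimed by $O(1)$ points. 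In each cell we store the identity and bit of the (at most few) data points responsible for that block; with word size $\log|G| = w$ this fits when the load is $O(1)$. Fourth, the query algorithm on $y = $ random neighbor of $x_i$ hashes to a block $B_j$; with probability $\geq \gamma \geq 3/4$ (over $\sigma$ and the choice of $y \sim \nu_{x_i}$) this is one of the ``small'' blocks for $x_i$, and by weak independence no other $x_j$ is a neighbor of $y$, so the cell for $B_j$ contains $(x_i,b_i)$ and we output $b_i$.

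**Main obstacle.** The delicate point is controlling the load on a cell: I need that with good probability the cell $B_j$ that the query hashes to is claimed by only $O(1)$ data points, so that it can be stored in $O(1)$ words — or, alternatively, to argue the load is $O(1)$ in expectation and handle overflow blocks by declaring failure, absorbing this into the constant success probability. This is where robust expansion (rather than mere vertex expansion) is essential: it says that even though some blocks may catch a large chunk of some $N(x_i)$, only a $\gamma$-fraction of edges need be ``well spread,'' and we only ever need $x_i$ to be correctly stored in its small blocks. I would make the responsibility assignment asymmetric: $x_i$ ``claims'' $B_j$ only if $\nu_{x_i}(B_j)$ is below the threshold $\tfrac1\Phi \cdot (\text{const})$, i.e.\ only the small blocks, and show (a) these small blocks carry $\geq \gamma$ of $\nu_{x_i}$'s mass by the definition of $\Phi_r(\tfrac1m,\gamma)$, and (b) the expected number of points claiming a fixed block is $O(1)$, using that $x_i$ claims at most $\Phi$ blocks' worth of ``mass $1/\Phi$ each'' and $m = n\Phi$. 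A Markov/Chernoff argument over the random $\sigma$ then gives that a random query hits an $O(1)$-load small block with constant probability, completing the proof. Verifying that the Cayley structure genuinely delivers the needed uniformity of the block partition under random translation, and that weak independence kills collisions at the queried cell, are the remaining routine pieces.
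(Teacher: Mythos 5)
Your construction has a genuine gap, and it sits exactly where you flagged the ``main obstacle'': the load argument does not work, and no choice of threshold can repair it within your scheme. You partition $V$ into $m$ blocks of measure $1/m$ and want each $x_i$ to ``claim'' only $O(\Phi)$ blocks so that the average load is $O(1)$. But the bound you invoke -- that each small block carries at most a $\mathrm{const}/\Phi$ fraction of $\nu_{x_i}$'s mass -- is an upper bound per block, which only gives a \emph{lower} bound of about $\Phi$ on the number of blocks touched by $N(x_i)$, not an upper bound. For a generic measure-$1/m$ partition (random translates included), the paper's own Lemma~\ref{lem:exptoshatweak} says a typical neighborhood is \emph{shattered}: $N(x_i)$ is spread in tiny pieces over potentially close to all $m$ blocks (e.g.\ the hypercube with $r=(\frac12-\e)d$). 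So either $x_i$ must be stored in far more than $\Phi$ cells (blowing up space/word size -- the expected number of data points with a neighbor in a fixed block is governed by vertex expansion and can be $\Theta(n)$), or the queried cell fails to contain $x_i$ with probability close to $1$. Also, your step (a) (``small blocks carry $\geq\gamma$ of the mass by the definition of $\Phi_r$'') is not a consequence of the definition; it is essentially the shattering lemma -- which, again, is the lower-bound tool and works against you here.

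The paper's proof avoids this by \emph{not} hashing with a partition at all. It takes the extremal pair witnessing the robust expansion: a set $A$ of measure $1/m$ and a set $B$ of measure $\Phi/m$ with $e(A,B)\geq\gamma\, e(A,V)$, and uses $m$ independent random translates $A_i,B_i$ (translates of a single set do not partition $V$, and that is fine). A data point $x$ is stored in cell $i$ iff $x\in B_i$; since $\mu(B_i)=\Phi/m$ and $m=n\Phi$, the expected load per cell is $1$, so $O(1)$ points fit in a word. The query looks up some $i$ with $y\in A_i$, which exists and is unique with constant probability, and then $e(A_i,B_i)\geq\gamma\, e(A_i,V)$ guarantees that, conditioned on $y\in A_i$, the correct neighbor $x$ lies in $B_i$ with probability at least $\gamma\geq\frac34$; weak independence removes collisions with other data points. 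The essential structure your proposal is missing is this asymmetry: the query bucket ($A_i$, measure $1/m$) and the data bucket ($B_i$, measure $\Phi/m$) are \emph{different} sets coupled by the robust-expansion minimizer, rather than a single block playing both roles.
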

We observe that this is the distribution for which we show the lower bound.
\begin{proof}
The main idea is to use the low expanding sets in order to construct something similar to a  Locality Sensitive Hashing solution.  We stress that the upper bound is in the cell probe model, which allows us to ignore the (practically very important) issue of actually computing the LSH efficiently.

Let $A\subset V$ be a set of measure $1/m$ for which the robust expansion is  $\Phi$ and $m=n\Phi$. By the definition of robust expansion, we know that there is a set $B$ of measure $\Phi/m$ such that $|E(A,B)| \geq \gamma |E(A,V)|$. We take $m$ random translations of $A$ and $B$, denoted by $A_1,\ldots,A_m$ and $B_1,\ldots,B_m$, formally, we sample uniformly $m$ elements $a_1,...,a_m$ from the group underlying the Cayley graph, and set $A_i = \{u:   u= a_i + v, v\in A \}$ and similarly $B_i = \{u:   u= a_i + v, v\in B \}$. The translation by $a_i$ is an automorphism that maps $A$ to $A_i$ and $B$ to $B_i$ so for each $i$ $|E(A_i,B_i)|\geq \gamma |E(A_i,V)|$.

We construct a table $T$ with $m$ cells as follows: Given a data set point $x$, we check for each $i\leq m$ whether $x \in B_i$, and if so we place $x$ in $T_i$. Note that the measure of each $B_i$ is $\Phi/m$ so that the expected number of data set points $x_j$ that fall in $B_i$ is $n\Phi/m$ which is $1$. For random data sets, most $B_i$'s will contain at most (say) $10$ data-set points. In order to keep the word size small we store at most $10$ data set points in each table cell $T_i$, and assuming that $O(\log n)$ bits suffice to represent a data-set point, we have $w = O(\log n)$.

Now, given a query point $y$ we find an $i$ for which $y \in A_i$ and output the data-set point in $T[i]$ which is closest to $y$. Note that with constant probability, such an $i$ exists and is unique.

Recall that the data set $x_1,\ldots,x_n$ is obtained by sampling $n$ points uniformly and independently from $V$. Further, we assume that this distribution is weakly independent; i.e. if $x$ and $y$ are random nodes and $z$ is a random neighbor of $x$, then $\Pr[z \in N(y)] \leq 1/100n$. Further, the query point is obtained by sampling a random neighbor of a random data set point. Assume that the correct answer is $x$. Now if $y\in A_i$ (which happens with a constant probability), then the lookup succeeds if $x \in B_i$ and there were less than $10$ data set points in $B_i$. The first event occurs with probability $\gamma \geq \tfrac{3}{4}$ and the second event occurs independently with probability at least $\tfrac{3}{4}$ as well.  We conclude that the data structure succeeds with constant probability.
\end{proof}

\section{Low Contention Dynamic Data Structures}
\label{sec:dynamic}
Let $Q_l$ denote the set of queries that read cell $T[l]$ from the table.
\begin{definition} A data structure is said to have contention $\tau$ if $\nu(Q_l)\leq \tau$ for all $l$.
\end{definition}
Say we insert a new point $x$. After the insertion there would be a subset of $N'(x) \subseteq N(x)$ such that the query algorithm outputs the correct answer for any $y \in N'(x)$.   We say the insertion $x$ is successful if the measure of this set under $\nu_x$ is at least $1-\gamma$.

We remark that we look at contention only for the part of the data structure that depends on the database; accesses to any randomness are free.

We  prove Theorem~\ref{thm:dynamic}, which states that in such data structures, the update time is at least $\Omega(\Phi_r(\tau,O(\frac{1}{t^2}))/t^4)$.

\begin{proof}[Proof of Theorem \ref{thm:dynamic}]

Consider the state of a $t$-probe data structure just before we insert a point. The $t$ lookup functions each give a partitioning of $V$ such that each part in the partitioning has measure at most $\tau$ under $\nu$. Let $Q^i_1,\ldots,Q^i_m$ be the $i$th partitioning.

Let $L$ be the set of locations in $T$ that are updated when one inserts $x$. Thus $|L| \leq t_U$. Further, note that for the answer for $y$ to be correct both before and after the insertion, $L$ must intersect with at least one of the locations that are queried on $y$, i.e. $y \in Q^i_{L}$ for some $i$.

By assumption each of the $i$ partitions is $\tau$-sparse. Lemma~\ref{lem:exptoshatweak} implies that except with small probability, $x$ is $(K,\gamma/2t)$ shattered where $K=\Phi_r(\tau,\frac{\gamma^2}{4t^2})\gamma^3/16t^3$ by each of the partitions. Strong shattering would imply that for each $l$, the measure $\nu_x(Q^i_l) \leq \frac{1}{K}$ so that $|L|$ changes can account for at most a $t|L|/K$ measure being affected, which would imply the result.

Since we only have weak shattering, we recall that shattering implies that for any $i$, there a measure $\nut^i_x$ such that $\nut^i_x(V) \geq (1-\frac{\gamma}{2t})$ and $\nut^i_x(Q^i_l) \leq \frac{1}{K}$ for any $l$. Let $\nut_x(y) \eqdef \min_i \nut^i_x(y)$. Thus $\nut_x(V) \geq (1-\frac{\gamma}{2})$.

Finally,
\begin{eqnarray*}
1-\gamma &\leq& \nu_x(\cup_i Q^i_L)\\
&\leq& \frac{\gamma}{2} + \nut_x(\cup_i Q^i_L)\\
&\leq& \frac{\gamma}{2} + \sum_{i=1}^t \nut_x(Q^i_L)\\
&\leq& \frac{\gamma}{2} + t|L|/K
\end{eqnarray*}
Thus $t_U \geq |L| \geq K/2t = \Phi_r(\tau,\frac{\gamma^2}{4t^2})\gamma^3/32t^4 $.
\end{proof}

\end{document}